\newif\iftr % used to control whether this version is the technical report (= with appendix) or not
\def\BibTeX{{\rm B\kern-.05em{\sc i\kern-.025em b}\kern-.08em
		T\kern-.1667em\lower.7ex\hbox{E}\kern-.125emX}}
\pgfplotsset{compat=1.13}
\newtheorem{lemma}{Lemma}
\newtheorem{corollary}{Corollary}
\newtheorem{theorem}{Theorem}
\newtheorem{sub-goal}{Sub-goal}
\newcommand{\set}[1]{\left\{{#1}\right\}}
\newcommand{\lmax}{l^{\textit{max}}}
\newcommand{\isdef}{\ensuremath{\overset{\textit{def}}{=}}}
\newcommand{\sref}[1]{Section~\ref{#1}}
\newcommand{\thref}[1]{Theorem~\ref{#1}}
\renewcommand{\lmax}{l^{\scriptsize\textrm{max}}}
\newcommand{\tot}{\textrm{tot}}
\renewcommand{\isdef}{\ensuremath{\overset{\scriptsize\textit{def}}{=}}}
\newcommand{\lp}{\ensuremath{\left(}}
\newcommand{\rp}{\ensuremath{\right)}}
\newcommand{\lb}{\ensuremath{\left[}}
\newcommand{\rb}{\ensuremath{\right]}}
\newcommand{\mand}{\mbox{ and }}
\newcommand{\mwith}{\mbox{ with }}
\newcommand{\mymod}{\hspace{-2mm}\mod}
\newcommand{\tarr}{t_1^{\scriptsize\textrm{arr}}}
\newcommand{\tdep}{t_1^{\scriptsize\textrm{dep}}}
\newcommand{\dint}{D^{\scriptsize\textrm{TS}}}
\newcommand{\dsoni}{D^{{\scriptsize\textrm{Soni-et-al}}}}
\newcommand{\ctsoni}{C^{{\scriptsize\textrm{Soni-et-al}}}}
\newcommand{\dboyer}{D^{\scriptsize\textrm{Boyer-et-al}}}
\newcommand{\betaAnne}{\beta^{\scriptsize\textrm{Bouillard}}}
\newcommand{\rmin}{R^{{\scriptsize\textrm{min}}}}
\newcommand{\rmax}{R^{{\scriptsize\textrm{max}}}}
\newcommand{\tmin}{T^{{\scriptsize\textrm{min}}}}
\newcommand{\tmax}{T^{{\scriptsize\textrm{max}}}}
\newcommand{\lmod}{l^{{\scriptsize\textrm{mod}}}}
\newcommand{\lmodm}{l^{{\scriptsize\textrm{mod-$\epsilon$}}}}
\newcommand{\lmodp}{l^{{\scriptsize\textrm{mod+$\epsilon$}}}}
\newcommand{\oldservice}{\beta^{\textrm{\scriptsize old}}}
\newcommand{\newserviceBar}{\bar{\beta}^{\textrm{\scriptsize new}}}
\newcommand{\newservice}{\beta^{\textrm{\scriptsize new}}}
\newcommand{\newservicehat}{\hat{\beta}^{\textrm{\scriptsize new}}}
\newcommand{\mdelta}{d^{\scriptsize\textrm{max}}}
\newcommand{\phiMinL}{\phi_{i,j}^{\scriptsize\textrm{minLatency}}}
\newcommand{\phiMaxR}{\phi_{i,j}^{\scriptsize\textrm{maxRate}}}
\newcommand{\phicon}{\phi_{i,j}^{\scriptsize\textrm{concave}}}
\newcommand{\gammaMinL}{\gamma_i^{\scriptsize\textrm{minLatency}}}
\newcommand{\gammaMaxR}{\gamma_i^{\scriptsize\textrm{maxRate}}}
\newcommand{\gammacon}{\gamma_i^{\scriptsize\textrm{convex}}}
\newcommand{\betaMaxR}{\beta_i^{\scriptsize\textrm{maxRate}}}
\newcommand{\betaMinL}{\beta_i^{\scriptsize\textrm{minLatency}}}
\newcommand{\betaaff}[1]{\bar{\beta}^{\scriptsize\textrm{convex},#1}}
\newcommand{\betacon}[1]{\beta^{\scriptsize\textrm{concave},#1}}
\newcommand{\betaconvex}[1]{\beta^{\scriptsize\textrm{convex},#1}}
\newcommand{\deltaOld}{\delta^{\scriptsize\textrm{old}}}
\newcommand{\deltaMinL}[1]{\delta^{\scriptsize\textrm{minLatency},#1}}
\newcommand{\deltaMaxR}[1]{\delta^{\scriptsize\textrm{maxRate},#1}}
\begin{document}
	\pagenumbering{arabic}
	\pagestyle{plain}
	%Dimensioning Buffer Size and Delay Bound in Time-Sensitive Networking
	\title{Deficit Round-Robin: A Second Network Calculus Analysis}
	% author names and affiliations
	% use a multiple column layout for up to three different
	%affiliations

%	\author{
%		\IEEEauthorblockN{...\\
%			\IEEEauthorblockN{Author 1, Author 2, Author 3\\
%		}
%	\IEEEauthorblockA{\'Ecole Polytechnique F\'ed\'erale de Lausanne, Switzerland\\
%	$\{$firstname.lastname$\}$@epfl.ch}}
%	

%	\author{\IEEEauthorblockN{Author 1}
%	\IEEEauthorblockA{\textit{Affiliation 1}\\
%		City 1, Country 1 \\
%		email@address}
%	\and
%	\IEEEauthorblockN{Author 2}
%	\IEEEauthorblockA{\textit{Affiliation 2}\\
%		City 2, Country 2\\
%		email@address}
%}
	\author{\IEEEauthorblockN{Seyed Mohammadhossein Tabatabaee}
	\IEEEauthorblockA{\textit{EPFL}\\
		Lausanne, Switzerland \\
		hossein.tabatabaee@epfl.ch}
	\and
	\IEEEauthorblockN{Jean-Yves Le Boudec}
	\IEEEauthorblockA{\textit{EPFL}\\
		Lausanne, Switzerland \\
		jean-yves.leboudec@epfl.ch}
}

%	\author{\IEEEauthorblockN{Seyed~Mohammadhossein~Tabatabaee}
%	\IEEEauthorblockA{\textit{EPFL}\\
%		Lausanne, Switzerland \\
%		hossein.tabatabaee@epfl.ch}
%	\and
%	\IEEEauthorblockN{Jean-Yves Le Boudec}
%	\IEEEauthorblockA{\textit{EPFL}\\
%		Lausanne, Switzerland \\
%		jean-yves.leboudec@epfl.ch}
%}
\author{Seyed~Mohammadhossein~Tabatabaee,~\IEEEmembership{Member,~IEEE,}
	and, Jean-Yves~Le~Boudec,~\IEEEmembership{Fellow,~IEEE}
\IEEEcompsocitemizethanks{\IEEEcompsocthanksitem EPFL in Lausane, Switzerland\protect\\
	% note need leading \protect in front of \\ to get a newline within \thanks as
	% \\ is fragile and will error, could use \hfil\break instead.
	E-mail: hossein.tabatabaee@epfl.ch, jean-yves.leboudec@epfl.ch}
}% <-this % stops a space

\IEEEtitleabstractindextext{%
			% !TeX root = mainDRR.tex
\begin{abstract}
	% 150 to 250 words
	Deficit Round-Robin (DRR) is a widespread scheduling algorithm that provides fair queueing with variable-length packets. Bounds on worst-case delays for DRR were found by Boyer et al., who used a rigorous network calculus approach and characterized the service obtained by one flow of interest by means of a convex strict service curve. These bounds do not make any assumptions on the interfering traffic hence are pessimistic when the interfering traffic is constrained by some arrival curves. For such cases, two improvements were proposed. The former, by Soni et al., uses a correction term derived from a semi-rigorous heuristic; unfortunately, these bounds are incorrect, as we show by exhibiting a counter-example. The latter, by Bouillard, rigorously derive convex strict service curves for DRR that account for the arrival curve constraints of the interfering traffic. In this paper, we improve on these results in two ways. First, we derive a non-convex strict service curve for DRR that improves on Boyer et al. when there is no arrival constraint on the interfering traffic. Second, we provide an iterative method to improve any strict service curve (including Bouillard's) when there are arrival constraints for the interfering traffic. As of today, our results provide the best-known worst-case delay bounds for DRR. They are obtained by using the method of the pseudo-inverse.

\end{abstract} 
	
	% Note that keywords are not normally used for peerreview papers.
	\begin{IEEEkeywords}
		Deficit Round-Robin, Delay bound, Worst-case delay, Network Calculus, Strict service curve,  Deterministic networking
\end{IEEEkeywords}}

%\author{Michael˜Shell,˜\IEEEmembership{Member,˜IEEE,
%} John˜Doe,˜\IEEEmembership{Fellow,˜OSA,} and˜Jane˜D
%oe,˜\IEEEmembership{Life˜Fellow,˜IEEE}%
%\thanks{Manuscript received January 20, 2002; revise
%d August 26, 2015. This work was supported by the IE
%EE.}%
%\thanks{M. Shell was with the Georgia Institute of T
%echnology.}}

	% make the title area
	\maketitle
		
        %\Marc{I have not worked enough to be second author.}
	% As a general rule, do not put math, special symbols or citations
	% in the abstract
	\IEEEdisplaynontitleabstractindextext
	% no keywords
	% For peerreview papers, this IEEEtran command inserts a page break and
	% creates the second title. It will be ignored for other modes.
	\IEEEpeerreviewmaketitle
	
	\setcounter{page}{1}
	% !TeX root = mainDRR.tex
\IEEEraisesectionheading{\section{Introduction} \label{sec:intro}}
%\section{Introduction} \label{sec:intro}
Deficit Round-Robin (DRR) \cite{DRR} is a scheduling algorithm that is often used for scheduling tasks, or packets, in real-time systems or communication networks. It is a variation of Weighted Round-Robin (WRR) that enables flows with variable packet lengths to fairly share the capacity. The capacity is shared among several clients or queues by giving each of them a quantum value and  by providing more service to those with larger quantum. DRR is widely used because it exhibits a low complexity, $O(1)$, provided that an allocated quantum is no smaller than the maximum packet size; and it can be implemented in very efficient ways, such as the Aliquem implementation \cite{Aliquem}.

%   Specifically, queues are visited one after the other, and when a queue $i$ with quantum $Q_i$ is visited, first, its deficit is increased by $Q_i$; then, it sends packets as long as the length of the head-of-the-line packet is smaller than the deficit of the queue. When a packet of a queue cannot be sent, as it is too large with respect to the deficit, the current value of the deficit is saved for the next round; if the queue becomes empty the deficit becomes zero. DRR is widely used as it exhibits a low complexity $O(1)$, provided that an allocated quantum is no smaller than the maximum packet size, and it can be implemented in very efficient ways, such as the Aliquem implementation \cite{Aliquem}.

We are interested in delay bounds for the worst case, as is typical in the context of deterministic networking. To this
end, a standard approach is network calculus. Specifically, with network calculus, the service offered to a flow of interest by a system is abstracted by means of a service curve. A bound on the worst-case delay is obtained by combining the service curve with an arrival curve for the flow of interest. An arrival curve is a constraint on the amount of data that the flow of interest can send; such a constraint is necessary to the existence of a finite delay bound. The exact definitions are recalled in Section \ref{sec:backg:NC}.

The network calculus approach was applied to DRR in \cite{boyer_NC_DRR}, where a \emph{strict} service curve is obtained. %As explained in \sref{sec:backg:NC},
A strict service curve is a special case of a service curve hence can be used to derive delay (and backlog) bounds. The result was obtained under general assumptions such as per flow maximum packet size and assuming a server that offers any kind of strict service curve rather than a constant-rate server. They show that their delay bounds are smaller than or equal to all previous works \cite{1043123,10.5555/923589,Lenzini_fullexploitation}. We call this the strict service curve of  \emph{Boyer et al.}

%Boyer et al's strict service curve is a rate-latency function, defined in Section  \sref{sec:backg:NC}. Computing a rate-latency service curve is pessimistic as it assumes that, once the worst latency is payed,  each packet is served with the long-term residual rate. However, in reality, each packet, when it is selected for emission, is transmitted at full link speed up to completion.  This effect of ``full speed up to completion'' is captured
%   when computing the local delay of a server using a rate-latency service curve in \cite{NC-Packet-Delay-Ratency}. Also, for a flow with constant packet sizes, this effect is captured for any strict service curve in \cite{daigm2019}.

The strict service curve of Boyer et al. does not make any assumptions on the interfering traffic. Hence, the resulting delay bounds are valid,  even in degraded operational mode, i.e., when interfering traffic behaves in an unpredictable way. However, in real-time systems,  there is also interest  in finding worst-case delay bounds for non-degraded operational mode, i.e., when interfering traffic behaves as expected and satisfies known arrival curve constraints. %pessimistic when the interfering traffic is constrained by some arrival curves.
For such cases, significantly smaller delay bounds were presented at a recent
RTSS conference \cite{Sch_DRR}. The main improvement in \cite{Sch_DRR} is derived as follows.
First, the network calculus delay bound is computed using the strict service curve of Boyer et al.; then, it is improved by what we call the correction term of \emph{Soni et al.} The correction term is obtained by subtracting two terms: The former gives the maximum possible interference caused by any interfering  flow in a backlogged period of the flow of interest and is derived from a detailed analysis of DRR; and the latter gives the effective  interference caused by an interfering flow in a backlogged period of the flow of interest, given the knowledge of an arrival curve of that interfering flow. Unfortunately, the method is semi-rigorous and cannot be fully validated. Indeed, our first contribution is to show that  the correction term of Soni et al. is incorrect; we do so by exhibiting a counter-example that satisfies their assumptions and that has a larger delay (Section~\ref{sec:cunterExm}).

Later, Bouillard, in \cite{anne_drr}, derived new strict service curves for DRR that account for the arrival curve constraints of the interfering traffic and improve on the strict service curve of Boyer et al., hence on the delay bounds. These results require that arrival curves are concave and the aggregate strict service curve (i.e., the strict service curve of the DRR subsystem) is convex.

Our next contribution is obtaining a better strict service curve for DRR when there is no arrival curve constraint on interfering traffic. To do so, we rely on the method of pseudo-inverse, as it enables us to capture all details of DRR; a similar method was used to obtain a strict service curve for Interleaved Weighted Round-Robin in \cite{IWRR_proceeding}. We also provide simplified lower bounds that can be used %, at the expense of sub-optimality
%\hossein{I think it is better not to say "sub-optimality"},
when analytic, closed-form expressions are important. One such lower bound is precisely the strict service curve of Boyer et al. (Fig.~\ref{fig:service}), hence the worst-case delay bounds obtained with our strict service curve are guaranteed to be less than or equal to those of Boyer et al. %\jylb{can it be improved using ``full speed up to completion'' ?}

Our following contribution is a new iterative method for obtaining better strict service curves for DRR that account for the arrival curve constraints of interfering flows. Our method is rigorous and is based on pseudo-inverses and output arrival curves of interfering flows. We also provide simpler variants. Our method improves on any available strict service curves for DRR, hence, we always improve on Bouillard's strict service curve. Furthermore, our method accepts any type of arrival  curves, including non-concave ones (such as the stair function used with periodic flows), and can be applied to any type of strict service curve, including non-convex ones (such as the strict service curve we obtained when there is no arrival curve constraint on interfering traffic).
%methods to obtain lower bounds of such strict service curves\hossein{ we are not sure that they are really lower bounds, and in the paper we say "perhaps less good results"  (we never formally say this in the paper and we only know this numerically)}.

The delay bounds obtained with our method are fully proven. Furthermore, we compute them for the same case studies as in Bouillard's work \cite{anne_drr} (one single server analysis) and as in Soni et al. \cite{Sch_DRR} (including two illustration networks and an industrial-sized one). We find that they are smaller than Bouillard's and the incorrect ones that use the correction term of Soni et al. Hence as of today, it follows that our delay bounds are the best proven delay bounds for DRR, with or without constraints on interfering traffic.

%\jylb{can it be improved using ``full speed up to completion'' ?}

The remainder of the paper is organized as follows. After giving some necessary background in Section~\ref{sec:bg}, we describe the counter example to Soni et al. in \sref{sec:cunterExm}. In Section \ref{sec:serviceCurves}, we present our new strict service curves for DRR, with no knowledge of interfering traffic. In Section \ref{sec:serviceCurveInterfer},
we present our new strict service curves for DRR; they account for the interfering arrival curve constraints. In Section \ref{sec:numEval}, we use numerical examples to illustrate the improvement in delay bounds obtained with our new strict service curves.

This work is the extended version of \cite{Tabatabaee_drr}, which was presented at the RTAS conference, 2021. The conference version did not include a discussion of Bouillard's service curve, which was published after the conference submission date.

%\begin{enumerate}
%  \item Explain DRR
%  \item The need worst case delay bounds
%  \item Boyer et al's trict service curve
%  \item Refinements by Hugo Daigmorte (constant size packets) and by Ehsan
%  \item Correction terms by Soni et al. Why is it important. Explain their method in $\sim 3$ sentences. Say that it is semi-formal and not fully proven.
%  \item In this paper: first step: a new DRR service curve when there is no knowledge of interfering traffic;  uses the method of the  pseudo-inverse as in our WRR paper. Boyer et al is the may rate super approximation of our service curve
%  \item Second step: account for arrival curve constraints on interfering traffic. Using the method of pseudo inverse plus output arrival curve formula.
%  \item Computational aspects: super-approximations 1 and 2
%  \item Numerical investigations: we use the same cases as Soni et al.
%\end{enumerate}

    % !TeX root = mainDRR.tex
\section{Background}
\label{sec:bg}

\subsection{Network Calculus Background}
\label{sec:backg:NC}
We use the framework of network calculus \cite{le_boudec_network_2001, Changbook,bouillard_deterministic_2018}. Let $\mathscr{F}$ denote the set of wide-sense increasing functions
$f:\mathbb{R}^+ \mapsto \mathbb{R^+} \cup \{+\infty\}$. A flow is represented by a cumulative arrival function $A \in \mathscr{F}$
%, where $\mathscr{F}$ denotes the set of wide-sense increasing functions
%$f:\mathbb{R}^+ \mapsto \mathbb{R^+} \cup \{+\infty\}$
and $A(t)$ is
the number of bits observed on the flow between times $0$ and $t$.
%We use the framework of network calculus \cite{le_boudec_network_2001, Changbook,bouillard_deterministic_2018}. A flow is represented by a cumulative arrival function $R\in \mathscr{F}$, where $\mathscr{F}$ denotes the set of wide-sense increasing functions
%$f:\mathbb{R}^+ \mapsto \mathbb{R^+} \cup \{+\infty\}$ and $R(t)$ is
%the number of bits observed on the flow between times $0$ and $t$.
We
say that a flow has $\alpha\in \mathscr{F}$ as \emph{arrival curve} if
for all $s \leq t$,  $A(t) - A(s)\leq \alpha(t-s)$. An arrival curve $\alpha$
%for all $s \leq t$, the number of bits of the flow observed in the interval $(s,t]$ is $A(t) - A(s)\leq \alpha(t-s)$. An arrival curve $\alpha$
can always be assumed to be sub-additive, i.e., to satisfy $\alpha(s+t)\leq \alpha(s)+\alpha(t)$ for all $s,t$. A periodic flow that sends up to $a$ bits every $b$ time units has, as arrival curve, the stair function, defined by $\nu_{a,b}(t)=a\left\lceil \frac{t}{b}\right\rceil$. Another frequently used
arrival curve is the token-bucket function $\alpha=\gamma_{r,b}$, with rate $r$ and burst $b$, defined by $\gamma_{r,b}(t) =
rt+b$ for $t>0$ and $\gamma_{r,b}(t)=0$ for $t= 0$. Both of these arrival curves are sub-additive. 

Consider a system $S$ and a flow through $S$ with input and output functions $A$ and $D$; we say that $S$ offers $\beta\in \mathscr{F}$ as a \emph{strict service curve} to the flow if the number of bits of the flow output by $S$ in any \emph{backlogged} interval $(s,t]$ is $D(t) - D(s)\geq \beta(t-s)$.
%
%  interval  $R^*\geq R\otimes \beta$, which often means that for every $t\geq 0$ there exists some $s\leq t$ such that
%$R^*(t)\geq R(s)+\beta(t-s)$ \cite[Sec.~3.2.2]{bouillard_deterministic_2018}.
%We say that system $S$ offers a \emph{strict} service curve $\beta\in\mathscr{F}$ to the flow if $R^*(t) - R^*(s) \geq \beta(t-s)$ whenever $(s,t]$ is a backlogged period (i.e., $R^*(\tau)>R(\tau)$ for all $\tau$ such that $s<\tau\leq t$). If $\beta$ is a strict service curve, then it is a service curve, but the converse is not always true
%\cite[Section 1.3]{le_boudec_network_2001}.
%
A strict service curve $\beta$ can always be assumed to be super-additive (i.e., to satisfy $\beta(s+t)\geq \beta(s)+\beta(t)$ for all $s,t$) and wide-sense increasing (otherwise, it can be replaced by its super-additive and non-decreasing closure \cite{bouillard_deterministic_2018}). A frequently used strict service curve
is the rate-latency function $\beta_{R,T}\in \mathscr{F}$, with rate $R$ and latency $T$, defined by $
\beta_{R,T}(t) = R[t-T]^+$, where we use the notation $[x]^+=\max\set{x,0}$. It is super-additive.
%Saying that a system offers a \emph{strict} service curve $\beta_{r,T}$ to a flow expresses that the flow is guaranteed a service rate $r$, except for possible interruptions that might not exceed $T$ in total per backlogged period.

%For a function $f:\mathbb{R}^+ \mapsto \mathbb{R^+} \cup \{+\infty\}$, $f_{\uparrow}$ is its and is defined by $f(t) = \sup_{s \geq t}f(s)$. A service curve can be replaced by its non-decreasing  and/or super-additive closure \cite[Section 3.2]{bouillard_deterministic_2018}. 		

Assume that a flow, constrained by a sub-additive arrival curve $\alpha$,
traverses a system that offers a strict service curve $\beta$
and that respects the ordering of the flow (per-flow FIFO). The delay of the flow is upper bounded by the horizontal deviation defined by $
%\begin{equation}\label{eq:horiz}
h(\alpha,\beta) = \sup_{t \geq 0} \{ \inf \{ d \geq 0 | \alpha(t) \leq \beta(t + d)\}\}
$. 	%\end{equation}
Also, the output flow is constrained by an arrival curve $\alpha^* = \alpha \oslash \beta$ where $\oslash$ is the deconvolution operation defined in the next paragraph.
The computation of 	$h(\alpha,\beta) $ and $\alpha^* $ %in \eqref{eq:horiz}
can be restricted to $ t \in [0~ t^*]$ for $t^* \geq \inf_{s > 0}\{\alpha(s) \leq \beta(s)\}$ \cite[Prop. 5.13]{bouillard_deterministic_2018}, \cite{stefan}.
% Specifically, $\alpha$ and $\beta$ can be computed in this finite interval, and they can be ignored or roughly approximated outside this interval \cite{stefan}.

For $f$ and $g$ in $\mathscr{F}$, the min-plus convolution is defined by $(f \otimes g)(t) = \inf_{0 \leq s \leq t} \{ f(t-s) + g(s)\}$ and the min-plus deconvolution by $(f \oslash g)(t) = \sup_{s \geq 0} \{ f(t+s) - g(s)\}$
%, and max-plus convolution   is defined by $(f \bar{\otimes} g)(t) = \sup_{0 \leq s \leq t}  \{ f(t-s) + g(s)\}$
\cite{le_boudec_network_2001,Changbook,bouillard_deterministic_2018}.
We will use the min-plus convolution of a stair function with a linear function, as shown in Fig.~\ref{fig:minplus}.
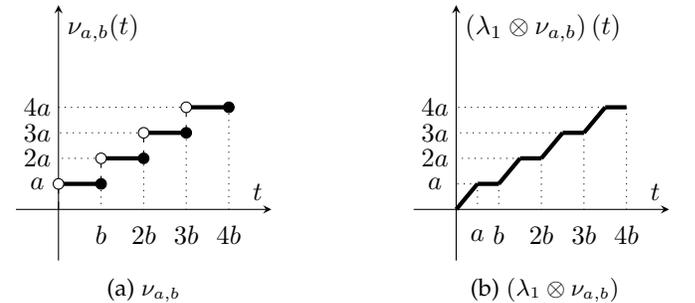
\begin{figure}[htbp]
	\begin{subfigure}[b]{0.4\columnwidth}
		\centering
		% include first image
				\begin{tikzpicture}

		\pgfplotsset{soldot/.style={color=black,only marks,mark=*}} 				   		   		 \pgfplotsset{holdot/.style={color=black,fill=white,only marks,mark=*}}
		\pgfplotsset{ticks=none}
		\begin{axis}[xlabel=$t$, ylabel=$\nu_{a,b}(t)$,
		xmin=-2,xmax= 10,ymin=-2,ymax=8, axis lines=center, width=1.4\textwidth,
		height=1.4\textwidth,]
		
		\addplot[soldot] coordinates{(2,1)(4,2)(6,3)(8,4)};
		\addplot[holdot] coordinates{(0,1)(2,2) (4,3) (6,4)} ;
		
		\draw[dashed] (axis cs:0,0) -- (axis cs:0,1);
		\draw[ultra thick] (axis cs:0,1) -- (axis cs:2,1);
		\draw[dotted] (axis cs:2,1) -- (axis cs:2,0);

		\draw[dashed] (axis cs:2,1) -- (axis cs:2,2);
		\draw[ultra thick] (axis cs:2,2) -- (axis cs:4,2);
		\draw[dotted] (axis cs:4,2) -- (axis cs:4,0);
		\draw[dotted] (axis cs:2,2) -- (axis cs:0,2);
		
		\draw[dashed] (axis cs:4,2) -- (axis cs:4,3);
		\draw[ultra thick] (axis cs:4,3) -- (axis cs:6,3);
		\draw[dotted] (axis cs:6,3) -- (axis cs:6,0);
		\draw[dotted] (axis cs:4,3) -- (axis cs:0,3);
		
		\draw[dashed] (axis cs:6,3) -- (axis cs:6,4);
		\draw[ultra thick] (axis cs:6,4) -- (axis cs:8,4);
		\draw[dotted] (axis cs:8,4) -- (axis cs:8,0);
		\draw[dotted] (axis cs:6,4) -- (axis cs:0,4);

		%x axis
		\draw (2,-1) node{$b$};
		\draw (4,-1) node{$2b$};
		\draw (6,-1) node{$3b$};
		\draw (8,-1) node{$4b$};

		%y axis
		\draw (-1,1) node{$a$};
		\draw (-1,2) node{$2a$};
		\draw (-1,3) node{$3a$};
		\draw (-1,4) node{$4a$};

		\end{axis}
		\end{tikzpicture}% NO EMPTY LINE HERE!!!! 
		\caption{$\nu_{a,b}$}
	\end{subfigure}
	\hfill
	\begin{subfigure}[b]{0.4\columnwidth}
		\centering
		\begin{tikzpicture}

		\pgfplotsset{soldot/.style={color=black,only marks,mark=*}} 				   		   \pgfplotsset{holdot/.style={color=black,fill=white,only marks,mark=*}}
		\pgfplotsset{ticks=none}
		\begin{axis}[xlabel=$t$, ylabel=$\left(\lambda_1 \otimes \nu_{a,b}\right)(t) $,
		xmin=-2,xmax=10,ymin=-2,ymax=8, axis lines=center, width=1.4\textwidth,
		height=1.4\textwidth,]

\draw[ultra thick] (axis cs:0,0) -- (axis cs:1,1);
\draw[ultra thick] (axis cs:1,1) -- (axis cs:2,1);
\draw[dotted] (axis cs:1,1) -- (axis cs:0,1);
\draw[dotted] (axis cs:1,1) -- (axis cs:1,0);
\draw[dotted] (axis cs:2,1) -- (axis cs:2,0);

\draw[ultra thick] (axis cs:2,1) -- (axis cs:3,2);
\draw[ultra thick] (axis cs:3,2) -- (axis cs:4,2);
		\draw[dotted] (axis cs:4,2) -- (axis cs:4,0);
\draw[dotted] (axis cs:3,2) -- (axis cs:0,2);

\draw[ultra thick] (axis cs:4,2) -- (axis cs:5,3);
\draw[ultra thick] (axis cs:5,3) -- (axis cs:6,3);
		\draw[dotted] (axis cs:6,3) -- (axis cs:6,0);
\draw[dotted] (axis cs:5,3) -- (axis cs:0,3);

\draw[ultra thick] (axis cs:6,3) -- (axis cs:7,4);
\draw[ultra thick] (axis cs:7,4) -- (axis cs:8,4);
		\draw[dotted] (axis cs:8,4) -- (axis cs:8,0);
\draw[dotted] (axis cs:7,4) -- (axis cs:0,4);
		
		%x axis
		\draw (1,-1) node{$a$};
\draw (2,-1) node{$b$};
\draw (4,-1) node{$2b$};
\draw (6,-1) node{$3b$};
\draw (8,-1) node{$4b$};

%y axis
\draw (-1,1) node{$a$};
\draw (-1,2) node{$2a$};
\draw (-1,3) node{$3a$};
\draw (-1,4) node{$4a$};

		\end{axis}
		\end{tikzpicture}
		\caption{$(\lambda_1 \otimes \nu_{a,b})$}
	\end{subfigure}
	\caption{\sffamily \small Left: the stair function $\nu_{a,b}\in\mathscr{F}$ defined for $t\geq 0$ by $\nu_{a,b}(t)=a\left\lceil \frac{t}{b}\right\rceil$. Right: min-plus convolution of $\nu_{a,b}$ with the function $\lambda_1\in\mathscr{F}$ defined by $\lambda_1(t)=t$ for $t\geq0$, when $a\leq b$. The discontinuities are smoothed and replaced with a unit slope.}
	\label{fig:minplus}
\end{figure}

If a flow, with arrival curve $\alpha$ and a maximum packet size $\lmax$, arrives on a link with a rate $c$, a better arrival curve for this flow at the output of the link is the min-plus convolution of  $\alpha$ and the function $t\mapsto \lmax + ct$; this is known as grouping (also known as line-shaping) and is also explained Section \ref{sec:soni}.

The non-decreasing closure $f_{\uparrow}$ of a function $f:\mathbb{R}^+ \to \mathbb{R^+} \cup \{+\infty\}$ is the smallest function in $\mathscr{F}$ that upper bounds $f$ and is given by $f_{\uparrow}(t) = \sup_{s \leq t}f(s)$. Also,  the non-decreasing and non-negative closure  $\lb f \rb^+_{\uparrow}$ of $f$ is the smallest non-negative function in $\mathscr{F}$ that upper bounds $f$.

The lower pseudo-inverse $f^{\downarrow}$ of a function $f \in \mathscr{F}$ is defined by
%Using the technique of lower pseudo-inverse is a major part of our work. In this technique, we take a function of $\mathscr{F}$ and flip the axes. If there are not any plateaus in the function, the lower-pseudo inverse of a function is equal to its inverse. Otherwise, a plateau will cause a vertical line, which creates an ambiguity. To takes care of it and make it a function, we choose the minimum value in each vertical line. Here, we present a formal definition of lower-pseudo-inverse in definition \ref{def:lsi}. Then, we introduce a property of it in \eqref{lem:lsi}; we will use the first property to lower bound the service of the flow of interest na d the second one to show that IWRR service curve dominates that of WRR. Lemma \ref{lsi:lem} is also introduced and used in the proof of tightness.
%\begin{equation}\label{def:lsi}
$f^{\downarrow}(y) = \inf \{x | f(x) \geq y \} = \sup \{ x | f(x) < y \}$
%	\end{equation}
and satisfies \cite[Sec. 10.1]{liebeherr2017duality}:
\begin{equation} \label{lem:lsi}
	\forall x,y \in \mathbb{R}^+, y \leq f(x) \Rightarrow x \geq f^{\downarrow}(y)
\end{equation}
%The super-additive closure of a function $f$ is given by $\sup_{n \geq 0} \{ f^{n}\}$ with $f^0 = 0 $ and $\forall n\geq 0, f^{n+1} = f \bar{\otimes} f^{n}$. As far as implementation is concerned, computing non-decreasing closure is easy, whereas computing super-additive closure can be costly \cite[Prop. 5.6]{bouillard_deterministic_2018}.

The network calculus operations can be automated in tools such as RealTime-at-Work (RTaW) \cite{RTaW-Minplus-Console}, an interpreter that  provides efficient implementations of min-plus convolution, min-plus deconvolution, non-decreasing closure, horizontal deviation, the composition of two functions, and a maximum and minimum of functions for piecewise-linear functions. All computations use infinite precision arithmetic (with rational numbers).

\subsection{Deficit Round-Robin}
\label{sec:drr}
%\subsection{Network Calculus \cite{le_boudec_network_2001, Changbook,bouillard_deterministic_2018}}
% !TeX root = mainDRR.tex

A DRR subsystem serves $n$ inputs, has one queue per input, and uses Algorithm~\ref{alg:DRR} for serving packets. %When a packet of input $i$ enters the deficit round-robin subsystem, it is put into queue $i$.
Each queue $i$ is assigned a quantum $Q_i$.
DRR runs an infinite loop of \emph{rounds}. In one round, if queue $i$ is non-empty, a service for this queue starts and its  \emph{deficit} is increased by $Q_i$. The service ends when either the deficit is smaller than the head-of-the-line packet or the queue becomes empty. In the latter case, the deficit is set back to zero. The \texttt{send} instruction is assumed to be the only one with a non-null duration. Its actual duration depends on the packet size but also on the amount of service available to the entire DRR subsystem.

\begin{algorithm}[htbp]
	\textbf{Input:}{ Integer quantum $Q_1 , Q_2 ,\ldots, Q_n$}\\
	\textbf{Data:} {Integer deficits: $d_1,d_2,\ldots,d_n$}
	\caption{Deficit Round-Robin}
	\begin{algorithmic}[1]
		\For{$i \leftarrow 1 $ to $n$}
		%\Comment{Deficits are initially zero.}
		\State	$d_i \gets 0$;
		\EndFor
		\While{True}
		\For{$i \leftarrow 1 $ to $n$}
		\If{(\textbf{not} empty($i$))} \\	\Comment{A service for queue $i$ starts.} \label{line:start}
		%	\State \texttt{print}(\text{now},$i$);\label{alg:service}
		\State	$d_i \gets d_i + Q_i$;
		\While{(\textbf{not} empty($i$))\\ ~~~~~~~~~~~~ and (\texttt{size}(head($i$)) $\leq d_i$)}
		\State$d_i \gets d_i - $\texttt{size}(head($i$);
		\State \texttt{send}(head($i$));
		\State\texttt{removeHead(}$i$);
		\EndWhile \\
		\Comment{A service for queue $i$ ends.} \label{line:end}
		\If{(empty($i$))}
		\State	$d_i \gets 0$;
		\EndIf
		\EndIf
		\EndFor
		\EndWhile
	\end{algorithmic}
	\label{alg:DRR}
\end{algorithm}

%\begin{figure*}[htbp]
%	\centering
%	\includegraphics[width=0.9\linewidth]{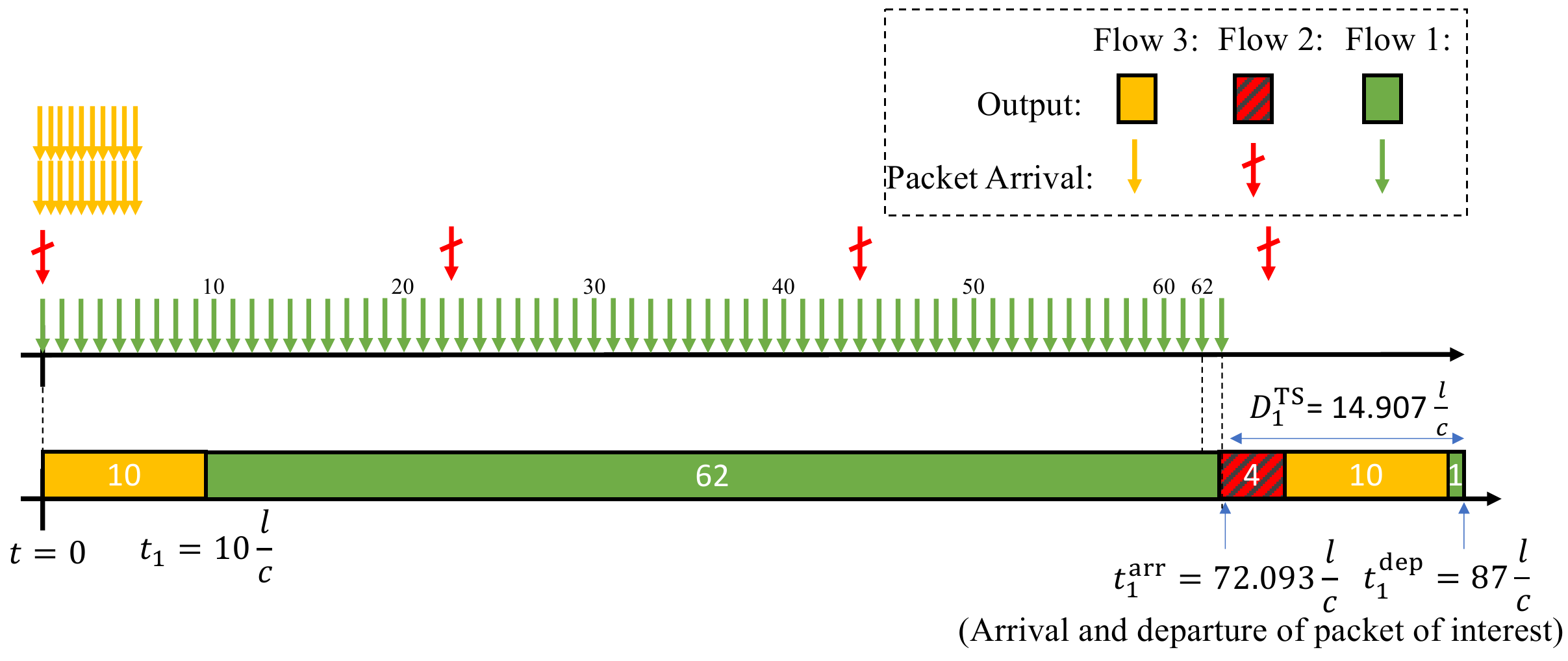}
%	\caption{\sffamily \small Trajectory scenario for the packet of interest of flow 1 (Section \ref{sec:TS}). This packet arrives at $\tarr$ and departs at $\tdep$.}
%	\label{fig:TL}
%\end{figure*}

In \cite{boyer_NC_DRR} as in much of the literature on DRR, the set of packets that use a given queue is called a \emph{flow}; a flow may however be an aggregate of multiple flows, called micro-flows \cite{charny2000delay} and an aggregate flow is called a \emph{class} in \cite{Sch_DRR}. In this paper, and in order to be consistent with the network calculus conventions, we use the former terminology and consider that a DRR input corresponds to one flow. When comparing our results to \cite{Sch_DRR}, the reader is invited to remember that a DRR flow in this paper corresponds to a DRR class in \cite{Sch_DRR}.

The DRR subsystem is itself placed in a larger system and can compete with other queuing subsystems.
A common case is when the DRR subsystem is at the highest priority on a non-preemptive server with line rate $c$. Due to non-preemption, the service offered to the DRR subsystem might not be instantly available. This can be modelled by means of a rate-latency strict service curve (see Section~\ref{sec:backg:NC} for the definition), with rate $c$ and latency $\frac{c}{L^{\max}}$ where ${L^{\max}}$ is the maximum packet size of lower priority. If the DRR subsystem is not at the highest priority level, this can be modelled with a more complex strict service curve \cite[Section 8.3.2]{bouillard_deterministic_2018}. %
This motivates us to assume that the aggregate of all flows in the DRR subsystem receives a strict service curve $\beta$, which we call ``aggregate strict service curve". If the DRR subsystem has exclusive access to a transmission line of rate $c$, then $\beta(t)=ct$ for $t\geq 0$. We assume that $\beta(t)$ is finite for every (finite) $t$. (Note that the aggregate strict service curve $\beta$  should not be confused with the strict service curves (also called ``residual" strict service curves in \cite{boyer_NC_DRR}) that we obtain in this paper for every flow.)

Here, we use the language of communication networks, but the results equally apply to real-time systems: Simply map flow to task, map packet to job, map packet size to job-execution time, and map strict service curve to ``delivery curve" \cite{4617308,858698}.

\subsection{Strict Service Curve of Boyer et al.} \label{sec:boyer}
The strict service curve of Boyer et al. for  DRR is given in \cite{boyer_NC_DRR}, and we rewrite it using our notation. For flow $i$, let $\mdelta_i$ be its maximum residual deficit, defined by $\mdelta_i= \lmax_i - \epsilon$  where $\lmax_i$ is an upper bound on the packet size and $\epsilon$ is the smallest unit of information seen by the scheduler (e.g., one bit, one byte, one 32-bit word, ...). Also, let $Q_\tot = \sum_{j=1}^{n}Q_j$. Then, for every flow $i$, their strict service curve is the rate-latency service curve $\beta_{R_i, T_i}\lp \beta (t) \rp$ with rate $R_i = \frac{Q_i }{Q_\tot}$ and latency $T_i = \sum_{j \neq i} \mdelta_j + (1 + \frac{\mdelta_i}{Q_i}) \sum_{j \neq i}Q_j$ (see Section~\ref{sec:backg:NC} for the definition of a rate-latency function).
\subsection{Correction Term of Soni et al.} \label{sec:soni}
\begin{figure*}[htbp]
	\centering
	\includegraphics[width=0.9\linewidth]{}
	\caption{\sffamily \small Trajectory scenario for the packet of interest of flow 1 (Section \ref{sec:TS}). This packet arrives at $\tarr$ and departs at $\tdep$.}
	\label{fig:TL}
\end{figure*}

When interfering flows are constrained by some arrival curves, Soni et al. give a correction term that improves the obtained delay bounds using the strict service curve of Boyer et al. in \cite{Sch_DRR}, which we now rewrite using our notation. Assume that every flow $i$ has an arrival curve $\alpha_i$, and the server is a constant-rate server with a rate equal to $c$. Let $\dboyer_i$ be the network calculus delay bound for flow $i$ obtained by combining $\alpha_i$ with the strict service curve of Boyer et al., as explained in Section \ref{sec:boyer}. The delay bound proposed in \cite{Sch_DRR} is $\dsoni_i = \dboyer_i - \ctsoni_i$ with
\begin{align}\label{eq:soni}
	\ctsoni_i = \sum_{j,j\neq i}\frac{ \lb S_j(\dboyer_i) - \alpha_j(\dboyer_i) \rb^+}{c}
\end{align}
where $
%\begin{align}
S_j(t) \isdef \lp Q_j + \mdelta_j \rp 1_{t \geq h_i} + Q_j  (1 +  \lfloor \frac{c(t - H_i)}{Q_\tot} \rfloor) 1_{t \geq H_i}  $, $h_i = \frac{ \sum_{j\neq i}Q_j + \mdelta_j }{c}$ and $H_i = h_i + \frac{Q_i - \mdelta_i + \sum_{j\neq i}Q_j }{c}$.
%\end{align}
%\begin{equation}
%S_j(t) =
%\begin{cases}
%0&\mif t < X_1\\
%Q_j + \mdelta_j&\mif X_i \leq t < T_N\\
%Q_j + \mdelta_j +\\
%\lp 1 + \lfloor \frac{c(t - T_N)}{Q_\tot}\rfloor \rp Q_j &\mif t \geq T_N
%\end{cases}
%\label{eq:S}
%\end{equation}
In the correction term $\ctsoni_i$, function $S_j$ represents a lower bound on the maximum interference caused by flow $j$ in a backlogged period of flow $i$; the term with the arrival curve $ \alpha_j$ represents the actual interference caused by flow $j$.

Two additional improvements are used in \cite{Sch_DRR}. The former, called \emph{grouping}, uses the fact that, if a collection of flows is known to arrive on the same link, the rate limitation imposed by the link can be used to derive, for the aggregate flow, an arrival curve that is smaller than the sum of arrival curves of the constituent flows (as explained in Section \ref{sec:backg:NC}). This improvement is also known under the name of line shaping and is used, for example, in \cite{Ahlem-line-shaping,Grieu-line-shaping, bouillard2020tradeoff}. %\hossein{I have removed the line shaping and add references}
The other improvement, called \emph{offsets}, uses the fact that, if several periodic flows have the same source and if their offsets are known, the temporal separation imposed by the offsets can be used to compute, for the aggregate flow, an arrival curve that is also smaller than the sum of arrival curves of the constituent flows (the latter would correspond to an adversarial choice of the offsets). Both improvements reduce the arrival curves, hence the delay bounds. Note that both improvements are independent of the correction term (and, unlike the correction term, are correct); they can be applied to any method used to compute delay bounds, as we do in Section~\ref{sec:numEval}.

\subsection{Bouillard's Strict Service Curves} \label{sec:anne}
A new method to compute strict service curves for DRR that account for the interfering arrival curve constraints was recently presented in \cite{anne_drr}; the method works only when arrival curves are concave and the aggregate strict service curve is convex,  and it improves on the strict service curve of Boyer et al. Specifically, in \cite[Thereom 1]{anne_drr}, for the flow of interest $i$, there exists non-negative numbers $H_J$ for any $J \in \{1,\ldots,n\} \setminus\{i\}$ such that $	\betaAnne_i $ is given by
\begin{equation}
	\betaAnne_i = \sup_{J \subseteq \{1,\ldots,n\} \setminus\{i\}} \frac{Q_i}{\sum_{j \notin  J}Q_j}\lb \beta - \sum_{j \in  J}\alpha_j - \hat{H}_J \rb^+
\end{equation}
where an inductive procedure is presented to compute $\hat{H}_J $. We call these \emph{Bouillard}'s strict service curves.

%
%if a collection of flows is known to be arrive on the same link, the rate limitation imposed by the link can be used to derive, for the the aggregate flow, an arrival curve that is smaller than the sum of arrival curves of the constituent flows.
%
%then an arrival curve for the aggregate of these flows can be computed, that again is smaller than the sum of the arrival curves of the individual flows
%
%
%and consists in it benefits when computing an arrival curve for the aggregate of several flows that shares a link; specifically, if a collection of flows is known to arrive on the same link, then the rate limitation imposed by the link can be exploited to compute smaller latency bounds. In network calculus this effect is known as the. and it benefits when computing an arrival curve for the aggregate of several flows with a same source and path; specifically, if flows are separated at the source, by some offset imposed by the source, then the temporal separation of flows can be exploited to compute smaller latency bounds.

    % !TeX root = mainDRR.tex
\section{The Correction Term of  Soni et al. is Incorrect}\label{sec:cunterExm}
%\begin{figure*}[htbp]
%	\centering
%	\includegraphics[width=0.9\linewidth]{Figures/new_Cexp}
%	\caption{\sffamily \small Trajectory scenario for the packet of interest of flow 1 (Section \ref{sec:TS}). This packet arrives at $\tarr$ and departs at $\tdep$.}
%	\label{fig:TL}
%\end{figure*}

In this section, we show that the delay bound of Soni et al., namely the correction term given in equation (14) in \cite{Sch_DRR}, is invalid. For flow 1 in a system, we denote the delay bound of Soni et al. by $\dsoni_1$, and we denote the delay experienced by a packet of flow 1 in the trajectory scenario by $\dint_1$.

%\jylb{Not may be. If you find a counterexample to a theorem then the theorem is invalid. Say which theorem or formula is invalid.} We give a trajectory scenario where a delay experienced with one packet of the flow of interest is strictly larger than the obtained  delay bound in  \cite{Sch_DRR}. We then explain what hypothesis in \cite{Sch_DRR} is invalid.
%
%First, we define our system model parameters in Section \ref{sec:sysmodel}. \jylb{This paragraph just repeats the titles of the subsections and is a waste of time and space. All you need is to give the notation.} We then compute the obtained delay bounds from \cite{Sch_DRR} for flow 1, denoted by $\dsoni_i$. We also give a trajectory scenario in Section \ref{sec:TS}; we then compute  the obtained delay for flow 1 in this trajectory scenario, denoted it by $\dint_1$. Lastly, in Section \ref{sec:contradict}, we show that $\dint_1 > \dsoni_i$, and explain why this approach may give a invalid delay bounds.
%\begin{figure*}[htbp]
%	\centering
%	\includegraphics[width=0.9\linewidth]{Figures/new_Cexp}
%	\caption{\sffamily \small Trajectory scenario for the packet of interest of flow 1 (Section \ref{sec:TS}). This packet arrives at $\tarr$ and departs at $\tdep$.}
%	\label{fig:TL}
%\end{figure*}

\subsection{System Parameters} \label{sec:sysmodel}
Consider a constant-rate server, with a rate equal to $c$, that uses the DRR scheduling policy. All flows have packets of constant size $l$, and  have quanta $Q_1 = 100l$, $Q_2 = 5l$, and $Q_3 = 10l$.

Each flow is constraint by a token-bucket arrival curve:
\begin{enumerate}
	\item $\alpha_1(t) = \gamma_{r_1,b_1}$ with $r_1 = 0.86 c$ and $b_1 = l$.
	\item $\alpha_2(t) = \gamma_{r_2,b_2}$ with $r_2 = 0.0401c$ and $b_2 = l$.
	\item $\alpha_3(t) =  \gamma_{r_3,b_3}$ with $0 \leq r_3 < \frac{Q_3}{Q_\tot}c $ and $b_3 = 20l$.
\end{enumerate}
Assuming a token-bucket $\gamma_{r,b}$ for a flow implies that this flow has a minimum packet-arrival time equal to $\frac{l}{r}$. Also, observe that $r_i < \frac{Q_i}{Q_\tot}c$ for $i = 1,2,3$.
%\subsection{Soni et al's  Bound: Computing $\dsoni_1$}
We compute the delay bound of Soni et al. for flow 1, as explained in \ref{sec:soni}, and we obtain %Note that Soni et al consider token bucket arrival curves for interfering flows, we use the token-bucket arrival (i.e., without flooring operation) for flows 2 and 3 in \eqref{eq:soni}. Observe that
$\dsoni_1 = 14.03383 \frac{l}{c} -1.236215\frac{\epsilon}{c}$.

\subsection{Trajectory Scenario%: Computing $\dint_1 $
}\label{sec:TS}
We now construct a possible trajectory for our system.
First, we give the inputs of our three flows. All queues are empty, and the server is idle at time $t = 0$. Then,
\begin{enumerate}
	\item Flow 3 arrives first and $A_3(t) = \min\lp  \alpha_3(t), 20l \rp $ for $t >0$ (yellow arrows in Fig.~\ref{fig:TL}).
	\item Flow 1 arrives shortly after flow 3 and $A_1(t) = \min \lp \alpha_1(t), 63l \rp $ for $t >0$ (green arrows in Fig.~\ref{fig:TL}).
	\item Flow 2 arrives shortly after flows 1 and 3 and $A_2(t) = \min \lp \alpha_2(t) , 4l \rp  $ for $t >0$ (red arrows in Fig.~\ref{fig:TL}).
\end{enumerate}
Then, for the output, we have the following:

1) Flow 3 arrives first and has 20 ready packets. As its deficit was zero before this service and $Q_3 = 10l$, the server serves 10 packet of this flow. The end of the service for flow 3 is $t_1 = 10 \frac{l}{c}$ (the first yellow part in Fig.~\ref{fig:TL}).

2) Then, there is an emission opportunity for flow 1 and  $A_1(t_1) = 9.6l$, which means flow 1 has 9 ready packets at time $t_1$. The server starts serving packets of this flow. At the end of service of these first 9 packets, at $t_2 = 19\frac{l}{c}$, flow 1 has another 8 ready packets; hence, the server still serves packets of flow 1. This continues and 62 packets of flow 1 are served in this emission opportunity; the emission opportunity ends at $t_4 = 72 \frac{l}{c}$ (the first green part in Fig.~\ref{fig:TL}).

3) Then, there is an emission opportunity for flow 2 and  $A_2(t_4) = 3.8872l$, which means flow 2 has 3 ready packets at time $t_4$. At the end of service of 3 packets, another packet is also ready for flow 2. In total, 4 packets of flow 2 are served in this emission opportunity (the red part in Fig.~\ref{fig:TL}).

4) A packet for flow 1 arrives at $\tarr = 72 + \frac{0.08l}{r_1} \approx 72.093\frac{l}{c}$. This packet should wait for flow 2 and flow 3 to use their emission opportunities, and then it can be served. We call this \emph{the packet of interest} of flow 1, for which we capture the delay (the first blue arrow, at $\tarr$, on Fig.~\ref{fig:TL}).

5) For flow 3, again 10 packets are served (the second yellow part in Fig.~\ref{fig:TL}).

6) Finally, the packet of interest is served and its departure time is $\tdep = 87 \frac{l}{c}$.

It follows that the delay for the packet of interest is $\dint_1 = \tdep - \tarr = 15\frac{l}{c} - \frac{0.08l}{r_1} \approx  14.907\frac{l}{c}$. Note that $\dint_1 > \dsoni_1$. To fix ideas, if
$l = 100$ bytes and $c = 100$ Mb/s, the delay bounds are $ \dboyer_1 = 146.228 \mu s$, $\dsoni_1 = 112.172 \mu s$, and $ \dint_1= 119.256 \mu s$. %Note that the difference % between $ \dint_1$ and $\dsoni_1$
%is of the order of the transmission time of one packet ($ 8 \mu s$).

\subsection{The Contradiction with the Bound of Soni et al.} \label{sec:contradict}

We found a trajectory scenario such that $\dsoni_1$ is not a valid delay bound. Let us explain why the approach of Soni et al., presented in \cite{Sch_DRR},  gives an invalid delay bound.  In \cite{Sch_DRR}, it is implicitly assumed that as the delay for a packet of flow 1 is upper bounded by $ \dboyer_1$ (the obtained delay bound using the  strict service curve of Boyer et al. for flow 1), only packets of interfering flows arriving within a duration $ \dboyer_1$ will get a chance to delay a given packet of flow~1. However, in the trajectory scenario given in Section \ref{sec:TS}, all packets of flow 2 (an interfering flow for flow 1) arriving within the time interval $[0 , 75\frac{l}{c}]$ with the duration $75\frac{l}{c} >>  \dboyer_1 = 18.3 \frac{l}{c}- 2.15\frac{\epsilon}{c}$ delay the packet of interest of flow 1.

%Note that as arrival curves of flow 1 and 2 has an burst of one packet, line shaping does not change the

%\subsection{Numerical Computation} \label{sec:contradictNum}

%Assuming $l = 100$ byte and $c = 100$ Mb/s, we now compute the numerical obtained delay bound using different methods: $ \dboyer_1 = 146.228 \mu s$, $\dsoni_1 = 112.164 \mu s$, and $ \dint_1= 119.526 \mu s$. Note that the difference between $ \dint_1$ and $\dsoni_1$ in in the order of a transmission of a packet (transmission of one packet on the line takes $ 8 \mu s$).

%Also, using our non-convex, convex, and rate-latency with the minimum latency strict service curves, presented in Section \ref{sec:serviceCurves},
% we obtain $143.84\mu s$, $145.052 \mu s$, and $145.052 \mu s$, respectively;  using the  strict service curves, presented in Section \ref{sec:serviceCurves},  we obtain the same results (for flow 1).

		%\input{systemModel}
	\section{New DRR Strict Service Curve} \label{sec:serviceCurves}
Our next result is a non-convex strict service curve for DRR; we show that it is the largest one and  it dominates the state-of-the-art rate-latency strict service curve for DRR by Boyer et al. We also give simpler, lower approximations of it. Specifically, we also find a convex strict service curve and two rate-latency strict service curves.
%\subsection{Non-Convex Strict Service Curves}
\begin{theorem}[Non-convex Strict Service Curve for DRR]\label{thm:drrService}
	Let $S$ be a server shared by $n$ flows that uses DRR, as explained in Section \ref{sec:drr}, with quantum $Q_i$ for flow $i$. Recall that the server offers a strict service curve $\beta$ to the aggregate of the $n$ flows. For any flow $i$, $\mdelta_i$ is the maximum residual deficit (defined in Section~\ref{sec:boyer}).
	
	Then, for every $i$, $S$ offers to flow $i$ a strict service curve $\beta_i^0$ given by $\beta_i^0(t)=\gamma_i \lp \beta(t) \rp$ with
	\begin{align}
		\label{eq:gamma}
		\gamma_i (x) &= \lp \lambda_1 \otimes \nu_{Q_i,Q_{\tot}} \rp \lp  \lb x - \psi_i \lp Q_i - \mdelta_i \rp \rb^+ \rp \\ \nonumber &+ \min  ([x - \sum_{j \neq i} \lp Q_j + \mdelta_j\rp  ] ^+ , Q_i - \mdelta_i  )
		%		\gamma_i (x) &= \lp \lambda_1 \otimes \nu_{Q_i,Q_{\tot}} \rp \lp  \lb x - F_i \rb^+ \rp \\ \nonumber &+ \min  ([x - \sum_{j,j \neq i} Q_j + \mdelta_j  ] ^+ , Q_i - \mdelta_i  )
		\\
		\label{eq:Qtot}
		Q_{\tot} &= \sum_{j=1}^n Q_j
		%	\\
		%	F_i &= Q_i - \mdelta_i + \sum_{j,j \neq i}\phi_{i,j}(Q_i - \mdelta_i )
		\\
		\label{eq:psi}
		\psi_i(x) &\isdef x + \sum_{j,j \neq i} \phi_{i,j} \lp x \rp
		\\
		\label{eq:phi}
		\phi_{i,j}(x) &\isdef \left\lfloor \frac{x + \mdelta_i}{Q_i} \right\rfloor Q_j + \lp  Q_j + \mdelta_j \rp
	\end{align}
	
	Here, $\nu_{a,b}$ is the stair function, $\lambda_1$ is the unit rate function and $\otimes$ is the min-plus convolution, all described in Fig.~\ref{fig:minplus}.
	
	Furthermore, $\beta^0_i$ is super-additive.
\end{theorem}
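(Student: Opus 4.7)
The plan is to fix a backlogged period $(s,t]$ of flow $i$ with duration $\tau = t-s$, write $q$ for the number of bits of flow $i$ served during $(s,t]$, and write $p$ for the total number of aggregate bits served during $(s,t]$. Since every backlogged period of flow $i$ is contained in a backlogged period of the aggregate, the aggregate strict service curve assumption gives $p \geq \beta(\tau)$. The heart of the proof is to establish a deterministic upper bound $p \leq \psi_i(q)$ relating $p$ and $q$ under DRR, and then to invert this relation via the pseudo-inverse property \eqref{lem:lsi}.

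For the combinatorial step, I would count rounds of Algorithm~\ref{alg:DRR} that visit queue $i$ during $(s,t]$. Let $k$ be the number of such visits; since flow $i$ is backlogged, every visit provides a non-zero service. Tracking the deficit $d_i$, which lies in $[0, \mdelta_i]$ at both endpoints of $(s,t]$ and increases by $Q_i$ at each visit, one obtains $q \geq k Q_i - \mdelta_i$, hence $k \leq \lfloor (q + \mdelta_i)/Q_i \rfloor$. Between two successive visits to queue $i$, each interfering flow $j \neq i$ is visited exactly once by DRR, and in one such visit it receives at most $Q_j + \mdelta_j$ bits of service (its pre-existing deficit, bounded by $\mdelta_j$, plus its fresh quantum $Q_j$); moreover, at the boundaries of $(s,t]$, one extra partial visit of flow $j$ may occur. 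A careful accounting yields, for each $j \neq i$, a bound of $\lfloor (q+\mdelta_i)/Q_i \rfloor \, Q_j + (Q_j + \mdelta_j) = \phi_{i,j}(q)$ on the service received by flow $j$, and summing gives $p \leq q + \sum_{j \neq i}\phi_{i,j}(q) = \psi_i(q)$.

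For the inversion step, the lower pseudo-inverse property \eqref{lem:lsi} applied to the wide-sense increasing function $\psi_i$ turns $p \leq \psi_i(q)$ into $q \geq \psi_i^{\downarrow}(p)$. A direct computation of $\psi_i^{\downarrow}$ — noting that $\psi_i$ is the sum of the identity and a staircase with jumps of height $\sum_{j\neq i} Q_j = Q_\tot - Q_i$ at $x = kQ_i - \mdelta_i$ for $k \geq 1$ — shows that $\psi_i^{\downarrow}$ matches exactly the formula \eqref{eq:gamma}: the term $(\lambda_1 \otimes \nu_{Q_i, Q_\tot})([x-\psi_i(Q_i - \mdelta_i)]^+)$ encodes the unit-slope-then-flat pattern visible in Fig.~\ref{fig:minplus}(b), starting after the first complete visit to flow $i$, while the second $\min$ term handles the initial ramp during that very first visit. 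Combined with $p \geq \beta(\tau)$ and the monotonicity of $\gamma_i$, this yields $q \geq \gamma_i(\beta(\tau)) = \beta_i^0(\tau)$, which is the strict service curve claim.

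The main obstacle, and the reason this bound improves on the rate-latency of Boyer et al., lies in the endpoint bookkeeping in Step~2: the backlogged period of $i$ need not align with DRR round boundaries, so one must bound the deficits of flow $i$ and of every interfering flow $j$ at times $s$ and $t$ carefully in order to recover the exact constants $+\mdelta_i$ in the floor and $+(Q_j + \mdelta_j)$ in $\phi_{i,j}$ (rather than looser ones). Finally, super-additivity of $\beta_i^0$ follows by composition: $\beta$ may be taken super-additive (Section~\ref{sec:backg:NC}), and a direct verification on $\gamma_i$ — whose long-term slope is $Q_i / Q_\tot$ and whose two constituent terms are each super-additive after appropriate shifts — shows that $\gamma_i$ is super-additive and non-decreasing, which is preserved under composition with the super-additive $\beta$.
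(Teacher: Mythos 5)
Your proposal takes essentially the same route as the paper's proof: consider a backlogged interval $(s,t]$ of flow $i$, bound the number of DRR service opportunities for $i$, use this to bound the service received by each interfering flow $j$ via $\phi_{i,j}$, sum to obtain $\beta(t-s) \leq \psi_i\bigl(D_i(t)-D_i(s)\bigr)$ from the aggregate strict service curve, and then invert via the lower pseudo-inverse. The paper handles the combinatorial bookkeeping by directly citing Boyer et al.'s Sub-goals~1 and~2 (so the floor bound on the number of complete service opportunities, and the $(p+1)Q_j+\mdelta_j$ interference bound, come ready-made) rather than re-deriving them as you sketch; you correctly flag that the re-derivation requires care with deficits when $(s,t]$ does not align with round boundaries, which is exactly why the paper defers to Boyer et al.'s lemmas.

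One substantive nit: your justification of super-additivity is wrong in detail even though the conclusion is right. The second constituent term of \eqref{eq:gamma}, $\min\bigl([x-c]^+, c'\bigr)$, is not super-additive (already $\min(x,1)$ fails at $x=2$), so the "two constituent terms are each super-additive after appropriate shifts" argument does not go through. A clean argument is structural: each $\phi_{i,j}$ is sub-additive, since $\bigl\lfloor (x+y+\mdelta_i)/Q_i \bigr\rfloor - \bigl\lfloor (x+\mdelta_i)/Q_i \bigr\rfloor - \bigl\lfloor (y+\mdelta_i)/Q_i \bigr\rfloor \leq 1$ and the resulting $+Q_j$ is absorbed by the constant $-(Q_j+\mdelta_j)$, leaving a defect $\leq -\mdelta_j \leq 0$. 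Hence $\psi_i$ is non-decreasing, right-continuous, and sub-additive, and the lower pseudo-inverse of such a function is super-additive; $\gamma_i$ is also non-decreasing, so composition with the super-additive $\beta$ preserves super-additivity, as you note. (The paper's appendix in fact asserts super-additivity without proving it, so you are not missing something the paper supplies.)
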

The proof is in Appendix \ref{sec:proodNonconvex}. See Fig.~\ref{fig:service} for some illustrations of $\beta_i^0$. Observe that $\gamma_i $ in \eqref{eq:gamma} is the strict service curve obtained when the aggregate strict service curve is $\beta=\lambda_1$ (i.e., when the aggregate is served at a constant, unit rate). In the common case where $\beta$ is equal to a rate-latency function, say $\beta_{c,T}$, we have $\beta_i^0(t)=\gamma_i(c(t-T))$ for $t\geq T$ and $\beta_i^0(t)=0$ for $t\leq T$, namely, $\beta_i^0$ is derived from $\gamma_i  $ by a re-scaling of the $x$ axis and a right-shift.

\begin{figure}[htbp]
	\centering
	\input{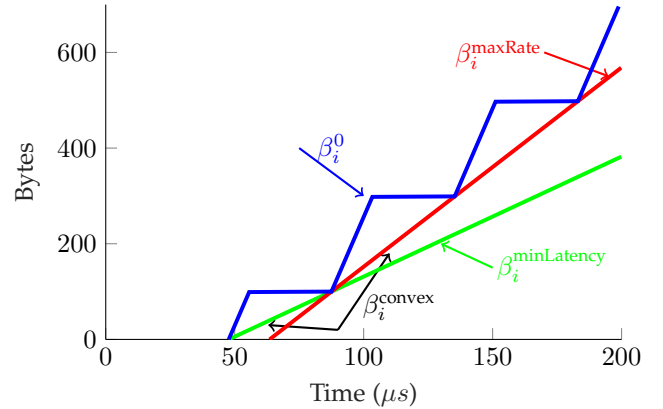}
	\caption{\sffamily \small Strict service curves for DRR for an example with three input flows, quanta $ =\{199, 199, 199\}$ bytes, maximum residual deficits $\mdelta =\{99, 99, 99\}$ bytes, and $\beta(t) = ct$ with $c = 100$ Mb/s (i.e., the aggregate of all flows is served at a constant rate). The figure shows the non-convex DRR strict service curve $\beta_i^0$ of Theorem \ref{thm:drrService}; it also shows the two rate-latency strict service curves $\betaMaxR$ (same as that Boyer et al.) and $\betaMinL$ in Corollary \ref{thm:drrServiceRateLatency} and the convex service curve $\beta_i^{\scriptsize\textrm{convex}} = \max\lp\betaMaxR, \betaMinL\rp$ in Corollary \ref{thm:drrServiceConvex}.
		% it also shows the fixed-point strict service curve $\beta_i^0^*$ of Theorem \ref{thm:optDrrService} by assuming $\alpha_j(t) =\lmax_j \lceil \frac{t}{512} \rceil$ for $j \neq i$ (note that $\beta_i^0^*$  is not a rate-latency function)
		%	\jylb{
		%the figure does not show the convex service curve	}
		%\ihossein{done}
	}
	\label{fig:service}
\end{figure}

%\begin{figure}[htbp]
%	\centering
%	\input{Figures/serviceCurves.tex}
%	\caption{\sffamily \small Strict service curves for DRR for an example with three input flows, quanta $ =\{199, 199, 199\}$ bytes, maximum residual deficits $\mdelta =\{99, 99, 99\}$ bytes, and $\beta(t) = ct$ with $c = 100$ Mb/s (i.e., the aggregate of all flows is served at a constant rate). The figure shows the non-convex DRR strict service curve $\beta_i^0$ of Theorem \ref{thm:drrService}; it also shows the two rate-latency strict service curves $\betaMaxR$ (same as that Boyer et al.) and $\betaMinL$ in Corollary \ref{thm:drrServiceRateLatency} and the convex service curve $\beta_i^{\scriptsize\textrm{convex}} = \max\lp\betaMaxR, \betaMinL\rp$ in Corollary \ref{thm:drrServiceConvex}.
%		% it also shows the fixed-point strict service curve $\beta_i^0^*$ of Theorem \ref{thm:optDrrService} by assuming $\alpha_j(t) =\lmax_j \lceil \frac{t}{512} \rceil$ for $j \neq i$ (note that $\beta_i^0^*$  is not a rate-latency function)
%		%	\jylb{
%		%the figure does not show the convex service curve	}
%		%\ihossein{done}
%	}
%	\label{fig:service}
%\end{figure}

We then show that the strict service curve we have obtained is the best possible one.
\begin{theorem}[Tightness of the DRR Strict Service Curve]\label{thm:serviceTight}
	Consider a DRR subsystem that uses the DRR scheduling algorithm, as defined in \sref{sec:drr}. Assume the following system parameters are fixed: the number of input flows $n$, the quantum $Q_j$ allocated to every flow $j$, maximum residual deficits $\mdelta_j$ for every flow $j$, and the strict service curve $\beta$ for the aggregate of all flows. We assume that $\beta$ is Lipschitz-continuous, i.e., there exists a constant $K>0$ such that $\frac{\beta(t)-\beta(s)}{t-s}\leq K$ for all $0\leq s<t$. Let $i$ be the index of one of the flows.
	
	Assume that $b_i\in\mathscr{F}$ is a strict service curve for flow $i$ in any system that satisfies the specifications above. Then $b_i\leq \beta_i^0$ where $\beta_i^0$ is given in Theorem \ref{thm:drrService}.
\end{theorem}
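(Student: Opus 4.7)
The plan is to establish tightness by explicit construction: for every duration $\tau\ge 0$ I will build a DRR subsystem (with the fixed parameters $n$, $Q_j$, $\mdelta_j$, $\beta$), together with arrivals for all flows and an aggregate service schedule, such that $[0,\tau]$ is a backlogged period for flow $i$ and the cumulative output of flow $i$ on this interval equals exactly $\beta_i^0(\tau)=\gamma_i(\beta(\tau))$. Once such a worst-case trajectory is in hand, the definition of strict service curve applied to any candidate $b_i$ yields $b_i(\tau)\le \beta_i^0(\tau)$, and since $\tau$ is arbitrary, $b_i\le \beta_i^0$, as claimed.

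The aggregate side is handled first. I will take as aggregate cumulative departures on $[0,\tau]$ the function $t\mapsto\beta(t)$ itself; this is a legitimate aggregate output, since it matches the strict service curve with equality and its slope is bounded by the Lipschitz constant $K$ so it can be realised as the output of a well-defined underlying server. On the DRR side I choose the worst-case micro-scenario: at $t=0^-$, each interfering flow $j\ne i$ has residual deficit $\mdelta_j$, with a packet of size $\lmax_j=\mdelta_j+\epsilon$ at its head followed by a long sequence of $\epsilon$-sized packets in its queue, while flow $i$'s queue is empty with deficit $0$, and the scheduler has just completed flow $i$'s slot of the current round. This initial configuration is itself reachable from the all-empty state through a short warm-up (for instance, by first sending into an otherwise idle subsystem a single flow-$j$ packet of size $Q_j+\epsilon$, which leaves flow $j$ with residual deficit $\mdelta_j$). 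At $t=0^+$, a large burst of size-$\epsilon$ packets arrives for flow $i$, initiating the backlogged period.

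I then walk through the rounds of DRR and compute the cumulative flow-$i$ output as a function of the aggregate output $x=\beta(t)$. In the first round following $t=0$, each flow $j\ne i$ serves one $\lmax_j$ packet and then $(Q_j-\epsilon)/\epsilon$ packets of size $\epsilon$, contributing exactly $Q_j+\mdelta_j$ bytes and leaving its deficit at $0$; summing these contributions gives $\gamma_i(x)=0$ throughout $x\in[0,\sum_{j\ne i}(Q_j+\mdelta_j)]$. When the scheduler finally reaches flow $i$, it adds $Q_i$ to the deficit and serves size-$\epsilon$ packets at unit aggregate rate until the deficit drops below $\lmax_i$, producing the linear ramp $\min([x-\sum_{j\ne i}(Q_j+\mdelta_j)]^+,Q_i-\mdelta_i)$, which matches the second summand of $\gamma_i$ in \eqref{eq:gamma}. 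Thereafter each flow $j\ne i$ performs a regular round of $Q_j$ bytes and each flow-$i$ turn delivers $Q_i$ bytes at unit rate, so on each full cycle of aggregate length $Q_\tot$ flow $i$ gains $Q_i$, and within each of its turns the service ramps linearly. This is precisely the first summand $(\lambda_1\otimes\nu_{Q_i,Q_\tot})([x-\psi_i(Q_i-\mdelta_i)]^+)$, with the offset $\psi_i(Q_i-\mdelta_i)$ equal, by \eqref{eq:psi}--\eqref{eq:phi}, to the aggregate output at which flow $i$'s second turn begins.

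The main obstacle I foresee is the bookkeeping that certifies simultaneously that (a) the deficit of every interfering flow returns to $0$ at the end of each round after the first, so that all subsequent rounds are identical and the periodicity captured by $\nu_{Q_i,Q_\tot}$ is exact, and (b) the warm-up before $t=0$ sets up the desired head-of-line packet and deficit state for each $j\ne i$ without violating any DRR rule. Both can be arranged because $\epsilon$ divides $Q_j$ and $\lmax_j$ by the very definition of the smallest unit of information, and because the flows' queue contents can be chosen independently. A secondary technical point is to argue that the min-plus convolution with $\lambda_1$ in \eqref{eq:gamma} is needed only to smooth the jumps of $\nu_{Q_i,Q_\tot}$ at the boundaries of flow $i$'s turns; this smoothing is achieved exactly because flow $i$ is served at the local aggregate rate (here $1$) on its size-$\epsilon$ packets inside each turn. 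Once these details are in place, the equality of the constructed output of flow $i$ with $\gamma_i(\beta(\tau))$ follows from the piecewise computation above, and the theorem is proved.
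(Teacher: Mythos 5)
Your overall plan — exhibit, for every $\tau$, a DRR trajectory in which flow $i$ receives exactly $\gamma_i(\beta(\tau))$ bits during a backlogged interval of length $\tau$, and then conclude $b_i(\tau)\le\beta_i^0(\tau)$ by the definition of strict service curve — is the same approach the paper takes, and your round-by-round accounting of flow $i$'s service (the constant part on $[0,\sum_{j\ne i}(Q_j+\mdelta_j)]$, the ramp of length $Q_i-\mdelta_i$, then the periodic $(\lambda_1\otimes\nu_{Q_i,Q_\tot})$ term with the correct offset $\psi_i(Q_i-\mdelta_i)$) is essentially correct.

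There is however a genuine gap in how you set up the initial state. You want each interfering flow $j\ne i$ to have residual deficit $\mdelta_j$ at $t=0$ and you propose reaching this via a ``short warm-up'' before $t=0$; at the same time you set the aggregate output to $D(t)=\beta(t)$ for $t\ge 0$, treating $t=0$ as the start of a clean interval on which $\beta$ is met with equality. These two choices are incompatible. A residual deficit persists only while the corresponding queue stays non-empty, so there can be no idle time between the warm-up and $t=0$ (otherwise the algorithm resets $d_j$ to $0$); the warm-up and the interval $[0,\tau]$ therefore live in the same backlogged period. But then the strict-service-curve property for $\beta$ must hold for every sub-interval of that backlogged period, including those that straddle $t=0$, and you have not specified the aggregate output during the warm-up nor checked that the spliced trajectory satisfies $D(t_2)-D(t_1)\ge\beta(t_2-t_1)$ across the boundary. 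This is exactly where the Lipschitz hypothesis on $\beta$ does real work: the paper serves the aggregate at the constant rate $K$ during the warm-up phase $[0,s]$ (with $s=\frac{1}{K}\sum_{j\ne i}(Q_j-\mdelta_j)$) and then switches to $D(t)=D(s)+\beta(t-s)$, and Step 3 of the paper's proof verifies the three cases $t_2<s$, $t_1<s\le t_2$, $s\le t_1$ using $\beta(t_2-t_1)-\beta(t_2-s)\le K(s-t_1)$ and super-additivity of $\beta$. In your write-up the Lipschitz constant appears only as an achievability remark about the slope of $\beta$; it is never used to certify that the constructed aggregate output is admissible. (A minor secondary issue: your illustrative warm-up ``send a single flow-$j$ packet of size $Q_j+\epsilon$'' is not a legal packet, since DRR assumes $\lmax_j\le Q_j$; and even if it were, that single packet would empty the queue and reset $d_j$ to $0$. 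The paper instead chooses per-flow packet-length sequences so that after the first round each flow $j$ is at deficit $\mdelta_j$ with a non-empty queue, and starts flow $i$'s input only after that round.)
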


The proof is in Appendix \ref{sec:proofServiceTight}. Note that assuming the aggregate strict service curve $\beta$ is Lipschitz-continuous  does  not  appear  to  be a restriction as the rate at which data is served has a physical limit. We then provide closed-form for the network calculus delay bounds when the flow of interest $i$ is constrained by frequent types of arrival curves, as defined in Section \ref{sec:backg:NC}.

\begin{theorem}[Closed-form Delay Bounds Obtained with the Non-convex Strict Service Curve of DRR] \label{thm:closeDelay}
	Make the same assumptions as in Theorem \ref{thm:drrService}, yet with one difference: Assume that the aggregate strict service curve is a rate-latency function, i.e., $\beta = \beta_{c,T}$. Also, assume that flow of interest $i$ has $\alpha_i \in \mathscr{F}$ as an arrival curve. Let $\psi_i$ be defined as in \eqref{eq:psi}.
	% has a token-bucket arrival curve $\gamma_{r_i,b_i}$ with $r_i \leq \frac{Q_i}{Q_\tot}$.
	
	Then, the closed-form of the network calculus delay bound $h(\alpha_i, \beta_i^0)$ is given as follows:
	
	1) if $\alpha_i$ is a token-bucket arrival curve, i.e.,  $\alpha_i = \gamma_{r_i,b_i}$ with $r_i \leq \frac{Q_i}{Q_\tot}$,
	\begin{equation}
		T + \max\lp \frac{\psi_i(b_i) }{c},  \frac{\psi_i\lp \alpha_i(\tau_i )\rp }{c}  - \tau_i \rp 
	\end{equation}
	with $	\tau_i  =\frac{Q_i - \lp b_i + \mdelta_i \rp \mymod Q_i }{r_i}$.
	
	2) if $\alpha_i$ is a token-bucket arrival curve and we take into account the effect of grouping, i.e., $\alpha_i(t) = \min \lp ct + \lmax_i , \gamma_{r_i,b_i}(t) \rp $ with $r_i \leq \frac{Q_i}{Q_\tot}$,
	\begin{equation}
		T + \max\lp \frac{\psi_i\lp \alpha_i(\tau_i )\rp  }{c},  \frac{\psi_i\lp \alpha_i(\bar{ \tau}_i ) \rp }{c}  -\bar{ \tau}_i  \rp 
	\end{equation}
	with $\tau_i = \frac{b_i - \lmax_i}{c - r_i}$ and $\bar{ \tau}_i = \frac{Q_i - \lp \alpha_i(\tau_i )  + \mdelta_i \rp \mymod Q_i }{r_i}$.
	
	3) if $\alpha_i$ is a stair arrival curve, i.e.,  $\alpha_i(t) = a_i \lceil \frac{t}{b_i} \rceil$ with $\frac{a_i}{b_i} \leq \frac{Q_i}{Q_\tot}$,
	\begin{equation}
		T + \max\lp \frac{\psi_i(a_i)}{c},  \frac{\psi_i\lp \alpha_i(\tau_i )\rp}{c}  - \tau_i   \rp 
	\end{equation}
	with $  \tau_i   = \lceil\frac{Q_i - \lp a_i + \mdelta_i \rp \mymod Q_i}{a_i} \rceil b_i$
\end{theorem}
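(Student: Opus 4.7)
The plan is to apply a pseudo-inverse characterization of horizontal deviation to $\beta_i^0$, then locate the supremum by analyzing the piecewise structure of the resulting function.

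\emph{Step 1 (reduction to a supremum).} By Theorem~\ref{thm:drrService}, $\beta_i^0(s)=\gamma_i(c[s-T]^+)$, so the lower pseudo-inverse satisfies $(\beta_i^0)^{\downarrow}(y)=T+\gamma_i^{\downarrow}(y)/c$ for $y>0$. The standard identity $h(\alpha,\beta)=\sup_{t\geq 0}[\beta^{\downarrow}(\alpha(t))-t]^+$, which is a direct consequence of \eqref{lem:lsi}, yields
\[
h(\alpha_i,\beta_i^0)\;=\;T+\sup_{t\geq 0}\Bigl(\tfrac{\gamma_i^{\downarrow}(\alpha_i(t))}{c}-t\Bigr).
\]
Inspection of \eqref{eq:gamma}--\eqref{eq:phi} shows that $\psi_i$ is the right-continuous version of $\gamma_i^{\downarrow}$: the two agree except at the countable set of levels $y$ such that $y+\mdelta_i$ is a multiple of $Q_i$, where $\gamma_i$ has a plateau of width $\sum_{j\neq i}Q_j$. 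Since $\alpha_i$ is non-decreasing, these levels are reached from the right in the limit $t\to t_0^+$, so replacing $\gamma_i^{\downarrow}$ by $\psi_i$ does not change the sup, giving
\[
h(\alpha_i,\beta_i^0)\;=\;T+\sup_{t\geq 0}\Bigl(\tfrac{\psi_i(\alpha_i(t))}{c}-t\Bigr).
\]

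\emph{Step 2 (piecewise structure of $V(t)=\psi_i(\alpha_i(t))/c-t$).} From \eqref{eq:psi}--\eqref{eq:phi}, $V$ is piecewise linear. Upward jumps of height $(\sum_{j\neq i}Q_j)/c$ occur at each $t$ where $\alpha_i(t)+\mdelta_i$ crosses an integer multiple of $Q_i$; in case~3 there are additional jumps of $V$ coinciding with the stair jumps of $\alpha_i$. Between two consecutive jumps, $V$ has slope $\rho/c-1$, where $\rho$ is the local slope of $\alpha_i$: $\rho=r_i$ in case~1, $\rho=c$ on $(0,\tau_i]$ and $\rho=r_i$ on $(\tau_i,\infty)$ in case~2, and $\rho=0$ between stair jumps in case~3.

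\emph{Step 3 (locating the sup).} The rate-stability hypothesis ($r_i\leq cQ_i/Q_{\tot}$ in cases~1--2, $a_i/b_i\leq cQ_i/Q_{\tot}$ in case~3) implies that the net change of $V$ over one complete DRR round is non-positive: the upward jump $(\sum_{j\neq i}Q_j)/c$ is at most compensated by the linear decrease over a cycle of length $Q_i/r_i$ (or $\lceil Q_i/a_i\rceil b_i$ in the stair case). Consequently, the supremum is attained at one of at most two candidate times: the initial instant $0^+$, where $\alpha_i$ assumes its initial burst ($b_i$, $\lmax_i$, or $a_i$ respectively), and the first subsequent $\psi_i$-jump, obtained by solving $\alpha_i(\tau)+\mdelta_i\equiv 0 \pmod{Q_i}$ starting from that initial burst. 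This yields the expression for $\tau_i$ in each case; in case~2, because $\alpha_i$ changes slope at the crossover, a second candidate $\bar{\tau}_i$ is obtained by running the same analysis starting from $\alpha_i(\tau_i)$ with slope $r_i$. Substituting the explicit form of $\alpha_i$ into $V$ at each candidate and taking the $\max$ produces the three closed-form expressions claimed.

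The main obstacle I expect is case~2, where the slope of $\alpha_i$ changes at the crossover $\tau_i$. One must argue separately that $V$ is non-decreasing on $[0,\tau_i]$ (slope $0$ between jumps, with only upward jumps), so the relevant candidate on that interval occurs at $\tau_i$, and that on $[\tau_i,\infty)$ the successive cycles of downward slope followed by upward jumps have strictly decreasing maxima, so the relevant candidate is the first $\psi_i$-jump $\bar{\tau}_i$. A secondary subtlety handled in Step~1 is that $\gamma_i^{\downarrow}$ is left-continuous on the flat plateaus of $\gamma_i$ while $\psi_i$ is right-continuous there; one must verify that the sup is unaffected by this choice, using the right-continuity of $\alpha_i$ from every positive time.
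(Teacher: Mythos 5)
Your proposal takes essentially the same route as the paper's proof: both reduce $h(\alpha_i,\beta_i^0)$ to $T+\sup_{t\geq 0}\{\psi_i(\alpha_i(t))/c - t\}$ by combining the identity $h(\alpha,\beta)=\sup_t\{\beta^\downarrow(\alpha(t))-t\}$ with the composition rule $(\gamma_i\circ\beta)^\downarrow=\beta^\downarrow\circ\gamma_i^\downarrow$ and the observation that $\gamma_i^\downarrow$ is the left-continuous version of $\psi_i$, and both then analyze the piecewise-linear structure of $H(t)=T+\psi_i(\alpha_i(t))/c - t$ together with the rate-stability condition to identify the two candidate suprema in each case. The only gaps are cosmetic: you treat the identity $h(\alpha,\beta)=\sup_t\{\beta^\downarrow(\alpha(t))-t\}$ as a "direct consequence" of the one-directional implication \eqref{lem:lsi} rather than citing a full equivalence, and the appeal to right-continuity of $\alpha_i$ in the plateau argument needs to be replaced by local strict monotonicity (the stair curve $\nu_{a_i,b_i}$ is left-continuous, not right-continuous), but neither affects the structure of the argument, which matches the paper's.
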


The proof is in Appendix \ref{sec:proofDelay}. Theorem \ref{thm:closeDelay} enables us to compute the exact delay bounds in a very simple closed-form, independent of the complicated expression of our non-convex strict service curve. However, we provide simplified lower bounds of the non-convex strict service curve for DRR when analytic, closed-form expressions are important. The function $\phi_{i,j}(x)$, defined in \eqref{eq:phi}, is the maximum interference that flow $j$ can create in any backlogged period of flow $i$, such that flow $i$ receives a service $x$. Using $\phi_{i,j}$ as it is results in the strict service curve of  Theorem \ref{thm:drrService}, which has a complex expression. If there is interest in simpler expressions, any lower bounding function is a strict service curve. In Theorem \ref{thm:drrServiceApp}, we show that any upper bounding of function $\phi_{i,j}$, (which gives a lower bound on $\gamma_i$) results in a lower bound of $\beta_i^0$, which is a valid, though less good, strict service curve for DRR.

\begin{theorem}[Lower Bounds of Non-convex Strict Service Curves for DRR]\label{thm:drrServiceApp}
	Make the same assumptions as in Theorem \ref{thm:drrService}. Also, for flow $i$, consider functions $\phi_{i,j}' \in \mathscr{F}$ such that for every other flow $j \neq i$, $\phi_{i,j}' \geq \phi_{i,j}$. Let $\psi_i'$ be defined as in \eqref{eq:psi}  by replacing functions $\phi_{i,j}$ with $\phi_{i,j}'$ for every flow $j \neq i$. Then, let $\gamma_i'$ be the lower-pseudo inverse of $\psi_i'$, i.e., $\gamma_i' = \psi_i^{'\downarrow}$.
	
	Let $\beta_i^{0'} $ be the result of Theorem \ref{thm:drrService} by replacing functions $\phi_{i,j}$, $\psi_i$, and $\gamma_i$ with $\phi_{i,j}'$, $\psi_i'$, and $\gamma_i'$.
	
	Then, $S$ offers to flow $i$ a strict service curve $\beta_i^{0'}$ and $\beta_i^{0'} \leq \beta_i^0$.
\end{theorem}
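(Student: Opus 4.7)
The plan is to reduce the theorem to the order-reversing property of the lower pseudo-inverse, once I identify the $\gamma_i$ of Theorem~\ref{thm:drrService} with $\psi_i^{\downarrow}$.

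First I would establish (or invoke from the proof of Theorem~\ref{thm:drrService}) the identity $\gamma_i = \psi_i^{\downarrow}$. By \eqref{eq:psi}--\eqref{eq:phi}, $\psi_i$ is a piecewise-linear function starting at $\psi_i(0) = \sum_{j \neq i}(Q_j + \mdelta_j)$, consisting of slope-$1$ segments of length $Q_i$ separated by upward jumps of height $\sum_{j \neq i}Q_j$ at the points $kQ_i - \mdelta_i$ for $k \geq 1$. Its lower pseudo-inverse is the mirror pattern (an initial flat $0$ piece, the first ramp up to height $Q_i - \mdelta_i$, then alternating plateaus and ramps). A piece-by-piece inspection shows that this matches the two-summand formula in \eqref{eq:gamma}: the $\min(\cdot, Q_i - \mdelta_i)$ term produces the initial ramp and first plateau, and the $\lambda_1 \otimes \nu_{Q_i, Q_{\tot}}$ term, argument-shifted by $\psi_i(Q_i - \mdelta_i)$, generates all subsequent plateau-ramp pairs.

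Given this identification, the bound $\beta_i^{0'} \leq \beta_i^0$ is essentially immediate. From $\phi_{i,j}' \geq \phi_{i,j}$ for all $j \neq i$, summing over $j$ and adding the identity yields $\psi_i' \geq \psi_i$ pointwise. The lower pseudo-inverse is order-reversing, since for every $y$,
\begin{equation*}
\psi_i'^{\downarrow}(y) = \inf\{x : \psi_i'(x) \geq y\} \leq \inf\{x : \psi_i(x) \geq y\} = \psi_i^{\downarrow}(y),
\end{equation*}
because $\{x : \psi_i(x) \geq y\} \subseteq \{x : \psi_i'(x) \geq y\}$. Thus $\gamma_i' \leq \gamma_i$, and composing with the wide-sense increasing $\beta$ preserves the inequality, giving $\beta_i^{0'}(t) = \gamma_i'(\beta(t)) \leq \gamma_i(\beta(t)) = \beta_i^0(t)$.

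For the strict service curve property, I would first observe that $\beta_i^{0'} \in \mathscr{F}$: $\psi_i'$ is wide-sense increasing as a sum of functions in $\mathscr{F}$, hence its lower pseudo-inverse $\psi_i'^{\downarrow}$ is wide-sense increasing and non-negative, and composition with $\beta \in \mathscr{F}$ keeps the result in $\mathscr{F}$. Then $\beta_i^{0'}$ inherits the strict service curve property from $\beta_i^0$ by domination: in any backlogged interval $(s, t]$ of flow $i$, Theorem~\ref{thm:drrService} gives $D(t) - D(s) \geq \beta_i^0(t-s) \geq \beta_i^{0'}(t-s)$. The one delicate step is the identification $\gamma_i = \psi_i^{\downarrow}$, which requires careful handling of $\psi_i$'s discontinuities (right-continuous with jumps at the points $kQ_i - \mdelta_i$); once that piece-by-piece check is done, everything else reduces to a one-line application of order-reversal.
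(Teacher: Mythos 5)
Your proposal is correct and follows essentially the same route as the paper: from $\phi_{i,j}'\geq\phi_{i,j}$ derive $\psi_i'\geq\psi_i$, apply the order-reversing property of the lower pseudo-inverse to get $\gamma_i'=\psi_i'^{\downarrow}\leq\psi_i^{\downarrow}=\gamma_i$, compose with the increasing $\beta$, and conclude $\beta_i^{0'}\leq\beta_i^0$, with the strict-service-curve property inherited because any lower bound in $\mathscr{F}$ of a strict service curve is a strict service curve. The only difference is that you spend effort re-verifying the identity $\gamma_i=\psi_i^{\downarrow}$ piece by piece, whereas this is already established in the proof of Theorem~\ref{thm:drrService} (equation~\eqref{eq:psiinv}) and can simply be cited.
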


The proof is in Appendix \ref{sec:proofApp}. There is often interest in service curves that are piecewise-linear and convex, a simple case is a rate-latency function. Specifically, convex piecewise-linear functions are stable under addition and maximum, and the min-plus convolution can be computed in automatic tools very efficiently %(as a minimum plus a constant)
\cite[Sec.~4.2]{bouillard_deterministic_2018}.
Observe that, if the aggregate service curve $\beta$ is a rate-latency function, replacing $\gamma_i$ by a rate-latency (resp. convex) lower-bounding function also yields a rate-latency (resp. convex) function for $\beta_i^0$, and vice-versa. Therefore, we are interested in rate-latency (resp. convex) functions that lower bound $\gamma_i$. We now give two lower bounds of the non-convex strict service curve for DRR using Theorem \ref{thm:drrServiceApp} that are common: a convex lower bound and two rate-latency lower bounds.

\begin{figure}[htbp]
	\centering
	\input{Figures/phi.tex}
	\caption{\sffamily \small Illustration of functions $\phi_{i,j}$, $\phiMaxR$, $\phiMinL$, and $\phicon$ defined in \eqref{eq:phi}, \eqref{eq:phimaxR}, \eqref{eq:phiminL}, and \eqref{eq:phicon}, respectively. These functions are obtained for the example of Fig.~\ref{fig:service}.}
	\label{fig:phi}
\end{figure}
To obtain a rate-latency strict service curve, we use two affine upper bounds of $\phi_{i,j}$: $\phiMaxR$, which results in a rate-latency function with the maximum rate, and $\phiMinL$, which results in a rate-latency function with the minimum latency (Fig.~\ref{fig:phi}). They are defined by
\begin{align}
	\label{eq:phimaxR}
	\phiMaxR(x) &\isdef \frac{Q_j}{Q_i}\lp x + \mdelta_i  \rp  + Q_j + \mdelta_j
	\\
	\label{eq:phiminL}	
	\phiMinL(x) &\isdef \frac{Q_j}{Q_i - \mdelta_i } x + Q_j + \mdelta_j
\end{align}

\begin{corollary}[Rate-Latency Strict Service Curve for DRR]\label{thm:drrServiceRateLatency}
	With the assumption in Theorem \ref{thm:drrService} and the definitions \eqref{eq:phimaxR}-\eqref{eq:phiminL}, $S$ offers to every flow $i$ strict service curves $\gammaMaxR\lp \beta(t) \rp$  and $\gammaMinL\lp \beta(t) \rp$ with
	\begin{align}
		\label{eq:gammaMaxrate}
		\gammaMaxR &= \beta_{\rmax_i,\tmax_i} \\
		\label{eq:gammaMinlatency}
		\gammaMinL &= \beta_{\rmin_i,\tmin_i}\\	
		\rmax_i &= \frac{Q_i}{Q_\tot} \mand \tmax_i = \sum_{j,j \neq i}\phiMaxR(0)\\
		\rmin_i &= \frac{Q_i - \mdelta_i}{Q_\tot - \mdelta_i} \mand \tmin_i = \sum_{j,j \neq i}\phiMinL(0)
	\end{align}
	The right-hand sides in \eqref{eq:gammaMaxrate} and \eqref{eq:gammaMinlatency} are the rate-latency functions defined in Section~\ref{sec:backg:NC}.
	
\end{corollary}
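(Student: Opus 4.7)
The plan is to apply Theorem~\ref{thm:drrServiceApp} twice: once with $\phi'_{i,j} = \phiMaxR$ and once with $\phi'_{i,j} = \phiMinL$. Since Theorem~\ref{thm:drrServiceApp} only requires the pointwise inequality $\phi'_{i,j} \geq \phi_{i,j}$, the argument reduces to three ingredients: verifying this inequality; summing the resulting upper bounds to form $\psi'_i$; and computing its lower pseudo-inverse $\gamma'_i$. I then identify $\gamma'_i$ with the rate-latency function named in the statement; by construction of $\beta_i^{0'}$ in Theorem~\ref{thm:drrServiceApp}, composing with the aggregate strict service curve $\beta$ yields the claimed $\gammaMaxR \circ \beta$ and $\gammaMinL \circ \beta$.

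The inequality $\phiMaxR \geq \phi_{i,j}$ is immediate from $\lfloor (x + \mdelta_i)/Q_i \rfloor \leq (x + \mdelta_i)/Q_i$. The inequality $\phiMinL \geq \phi_{i,j}$ is the only step that needs a short argument. Since $\phi_{i,j}$ is a stair function that jumps by $Q_j$ at the points $x_k \isdef k Q_i - \mdelta_i$ for integer $k \geq 1$ and is flat elsewhere, and since $\phiMinL$ is affine and non-decreasing, it suffices to compare the two at the jump points $x_k$; there, the inequality reduces to $(k-1)\mdelta_i \geq 0$, which holds for $k \geq 1$.

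For the summation step, direct expansion gives $\psi'_i(x) = \frac{Q_\tot}{Q_i}x + \tmax_i$ under $\phi'_{i,j} = \phiMaxR$ and $\psi'_i(x) = \frac{Q_\tot - \mdelta_i}{Q_i - \mdelta_i}x + \tmin_i$ under $\phi'_{i,j} = \phiMinL$, the additive constants being exactly $\sum_{j \neq i}\phiMaxR(0)$ and $\sum_{j \neq i}\phiMinL(0)$, matching the definitions of $\tmax_i$ and $\tmin_i$ in the statement. For the pseudo-inverse step, I use the elementary fact that the lower pseudo-inverse of an affine function $x \mapsto ax + b$ on $\mathbb{R}^+$ with $a > 0$ is the rate-latency function $y \mapsto \frac{1}{a}[y-b]^+ = \beta_{1/a,b}(y)$. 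Substituting the two slopes and intercepts yields $\gamma'_i = \beta_{\rmax_i, \tmax_i}$ in the first case and $\gamma'_i = \beta_{\rmin_i, \tmin_i}$ in the second, after which Theorem~\ref{thm:drrServiceApp} delivers the two claimed strict service curves. The only non-mechanical moment in the proof is the stair-function comparison used to establish $\phiMinL \geq \phi_{i,j}$; everything else is algebra.
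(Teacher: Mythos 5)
Your proposal is correct and follows exactly the same route as the paper: the paper's proof is a single remark that the corollary is obtained by applying Theorem~\ref{thm:drrServiceApp} with $\phi'_{i,j} = \phiMaxR$ and $\phi'_{i,j} = \phiMinL$. You simply spell out the verification of $\phiMaxR \geq \phi_{i,j}$ and $\phiMinL \geq \phi_{i,j}$ (the latter via the jump-point comparison), the computation of $\psi'_i$ as an affine function, and its pseudo-inverse as a rate-latency function, all of which the paper leaves implicit.
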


The above result is obtained by using Theorem \ref{thm:drrServiceApp} with $\phiMaxR$ and $\phiMinL$; hence, $\gamma_i \geq \gammaMaxR$ and $\gamma_i \geq \gammaMinL$. Also, observe that the strict service curve of Boyer et al., explained in Section \ref{sec:boyer}, is equal to $\gammaMaxR\lp \beta(t) \rp $. It follows that $\beta_i^0$ dominates it; hence, obtained delay bound using $\beta_i^0$ are guaranteed to be less than or equal to those of  Boyer et al.

%$\phiMaxR$, we obtain $\gammaMaxR = \beta_{\rmax_i,\tmax_i}$ with $\rmax_i = \frac{Q_i}{Q_\tot}$ and $\tmax_i = \sum_{j,j \neq i}\phiMaxR(0)$; also, $\gamma_i \geq \gammaMaxR$. As $\beta_{\rmax_i,\tmax_i}$ is the state-of-the-art, rate-latency strict service of DRR \cite{boyer_NC_DRR} and our strict service curve dominates it, obtainable delay bound from $\beta_i^0$ are guaranteed to be less than or equal to the those of $\beta_{\rmax_i,\tmax_i}$.

%Using Theorem \ref{thm:drrServiceApp} with $\phiMinL$, we obtain $\gammaMinL = \beta_{\rmin_i,\tmin_i}$ with $\rmin_i = \frac{Q_i - \mdelta_i}{Q_\tot - \mdelta_i}$ and $\tmin_i = \sum_{j,j \neq i}\phiMinL(0)$ (Fig.~\ref{fig:service}). Observe that $\beta_{\rmin_i,\tmin_i}$ has a smaller latency and smaller rate compared to $\beta_{\rmax_i,\tmax_i}$; thus, they are non-dominated.

A better upper bound on $\phi_{i,j}$ can be obtained by taking its concave closure (i.e., the smallest concave upper bound) that is equal to the minimum of $\phiMaxR$ and $\phiMinL$:
\begin{equation}\label{eq:phicon}
	\phicon(x) = \min \lp \phiMaxR(x) , \phiMinL(x) \rp
\end{equation}
\begin{corollary}[Convex Strict Service Curve for DRR]\label{thm:drrServiceConvex}
	With the assumption in Theorem \ref{thm:drrService} and the definitions \eqref{eq:gammaMaxrate}-\eqref{eq:gammaMinlatency}, $S$ offers to every flow $i$ a strict service curve $\gammacon \lp \beta(t) \rp$ with
	\begin{align}
		\label{eq:gammaCon}
		\gammacon(x)&= \max\lp\gammaMaxR(x), \gammaMinL(x)\rp
	\end{align}
	
\end{corollary}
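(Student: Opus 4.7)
The plan is to invoke Theorem~\ref{thm:drrServiceApp} with the piecewise-affine upper bound $\phi_{i,j}' = \phicon = \min(\phiMaxR, \phiMinL)$. Since Corollary~\ref{thm:drrServiceRateLatency} already establishes that both $\phiMaxR$ and $\phiMinL$ are affine upper bounds on $\phi_{i,j}$, their pointwise minimum is also an upper bound, so Theorem~\ref{thm:drrServiceApp} applies and produces a strict service curve of the form $\gamma_i'(\beta(t))$ with $\gamma_i' = (\psi_i')^{\downarrow}$ and $\psi_i'(x) = x + \sum_{j \neq i}\phicon(x)$. It remains to show that $\gamma_i' = \gammacon$.

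The key structural observation is that $\phiMaxR$ and $\phiMinL$ cross at a value $x^\star$ that is independent of the interfering flow $j$. Solving $\phiMaxR(x) = \phiMinL(x)$ reduces, after dividing by $Q_j$, to $(x+\mdelta_i)(Q_i - \mdelta_i) = x Q_i$, which gives $x^\star = Q_i - \mdelta_i$ regardless of $Q_j$ and $\mdelta_j$; evaluating at $x=0$ confirms $\phiMinL \le \phiMaxR$ on $[0,x^\star]$ and the reverse on $[x^\star,\infty)$. Because the active branch of the minimum is therefore the same for every $j$ simultaneously, summing yields
\[
\sum_{j \neq i}\phicon(x) \;=\; \min\!\Bigl(\sum_{j \neq i}\phiMaxR(x),\ \sum_{j \neq i}\phiMinL(x)\Bigr),
\]
so $\psi_i' = \min(\psi_i^{\textrm{maxR}}, \psi_i^{\textrm{minL}})$, where the two $\psi_i$'s on the right are those associated with $\phiMaxR$ and $\phiMinL$ individually. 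I would then invoke the elementary identity $(\min(f,g))^{\downarrow} = \max(f^{\downarrow}, g^{\downarrow})$ for wide-sense increasing $f,g$, which follows from the fact that the level sets $\{x : f(x) \ge y\}$ are upper rays, so the infimum of an intersection of two such sets is the larger of the two individual infima. Combined with $(\psi_i^{\textrm{maxR}})^{\downarrow} = \gammaMaxR$ and $(\psi_i^{\textrm{minL}})^{\downarrow} = \gammaMinL$, already computed in the proof of Corollary~\ref{thm:drrServiceRateLatency}, this yields $\gamma_i' = \max(\gammaMaxR, \gammaMinL) = \gammacon$, and hence $\gammacon(\beta(t))$ is a strict service curve for flow $i$.

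The main obstacle is the $j$-independence of the crossing point $x^\star$; without it, the identity $\sum_j \min(\phiMaxR, \phiMinL) = \min(\sum_j \phiMaxR, \sum_j \phiMinL)$ would fail, and the pseudo-inverse calculation would not collapse to the clean expression $\max(\gammaMaxR, \gammaMinL)$. Fortunately the algebra cooperates: the difference $\phiMaxR - \phiMinL$ factors as $\tfrac{Q_j \mdelta_i (Q_i - \mdelta_i - x)}{Q_i (Q_i - \mdelta_i)}$, whose sign is governed entirely by the $j$-free factor $(Q_i - \mdelta_i - x)$.
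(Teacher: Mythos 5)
Your proof is correct and takes the same approach as the paper, namely applying Theorem~\ref{thm:drrServiceApp} with $\phi_{i,j}' = \phicon$. The paper leaves the pseudo-inverse computation implicit; you correctly supply it, observing that the crossing point $x^\star = Q_i - \mdelta_i$ of $\phiMaxR$ and $\phiMinL$ is $j$-independent so that the sum commutes with the minimum, and then using $(\min(f,g))^\downarrow = \max(f^\downarrow, g^\downarrow)$ to conclude $\psi_i'^{\downarrow} = \max(\gammaMaxR,\gammaMinL) = \gammacon$.
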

The above result is obtained by using Theorem \ref{thm:drrServiceApp} with $\phicon$. Also, it can be shown that it is the largest convex lower bound of $\gamma_i$. When $\beta$ is a rate-latency function, this provides a convex piecewise-linear function, which has all the good properties mentioned earlier.

\section{New DRR strict Service Curves that Account for Arrival Curves of Interfering Flows} \label{sec:serviceCurveInterfer}
%The service curves presented in Section \ref{sec:serviceCurves} are computed where there is no information on arrival curve constraints for the interfering traffic.
The next result provides a method to improve on any strict service curve by taking into account arrival curve constraints of interfering flows. It can thus be applied to the strict service curves presented in Section \ref{sec:serviceCurves} and to Bouillard's strict service curves.
\subsection{A Mapping to Refine Strict Service Curves for DRR by Accounting for Arrival Curves of Interfering Flows}
 \begin{figure*}[htbp]
	\begin{subfigure}[b]{\columnwidth}
		\centering
		% include first image
		\includegraphics[width=\linewidth]{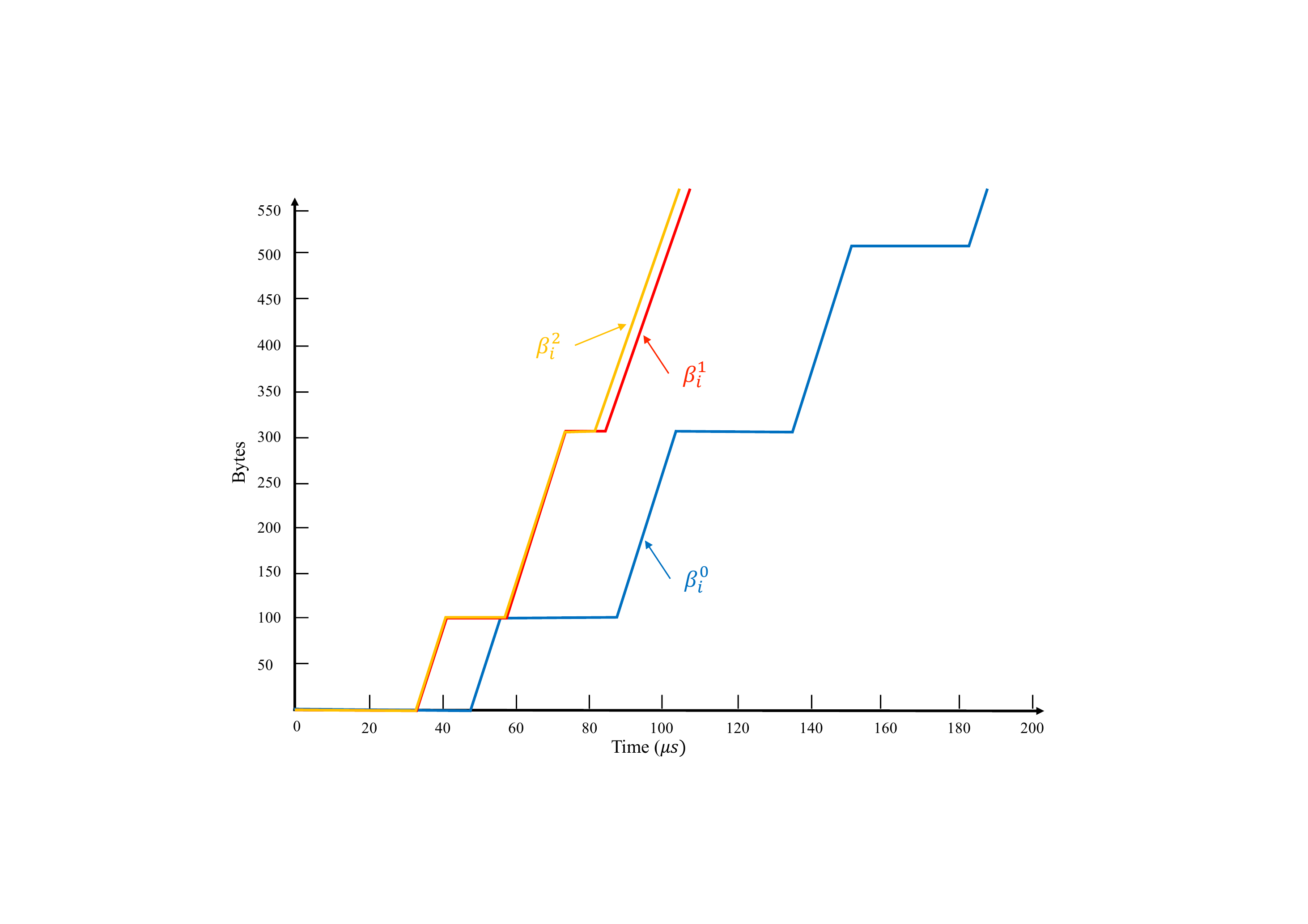}
	\end{subfigure}
	\hfill
	\begin{subfigure}[b]{\columnwidth}
		\centering
		\includegraphics[width=\linewidth]{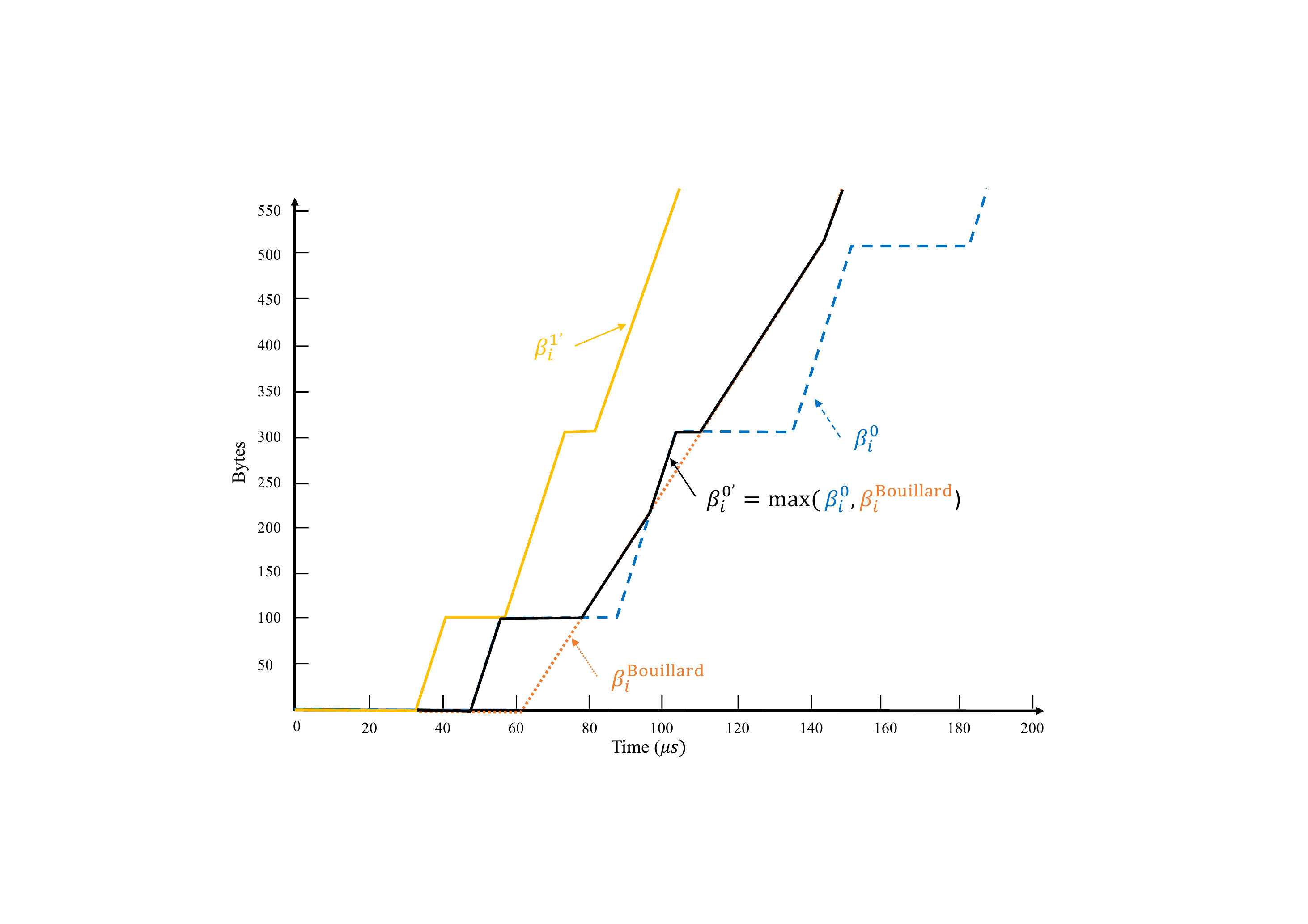}
	\end{subfigure}
	\caption{\sffamily \small  Strict service curves for flow 2 of the example of Fig.~\ref{fig:service}, where all flows have token-bucket arrival curves with $r = \{5,1,1\}\frac{l}{512}$ Mb/s and $b = \{5l,l ,l \}$. When iteratively applying Theorem~\ref{thm:optDrrServiceInd}, starting with either $\beta_i^0$ (Left: the sequence $\beta_i^0 \leq  \beta_i^1 \leq \beta_i^2)$ or starting with $\max \lp \betaAnne_i,\beta_i^0 \rp$ (Right: the sequence ${\beta_i^0}' \leq  {\beta_i^1}' =\beta_i^2$), after $2$ iterations, the strict service curves of all flows become stationary in the horizon of the figure, and the scheme stops  The sufficient horizon $t^*$ in this example is $200 \mu$s. Obtained with the RTaW online tool.}
	\label{fig:IterServiceInd}
\end{figure*}
\begin{figure}[htbp]
	\centering
	%	\title={.}
	\includegraphics[width=\linewidth,height=0.75\linewidth]{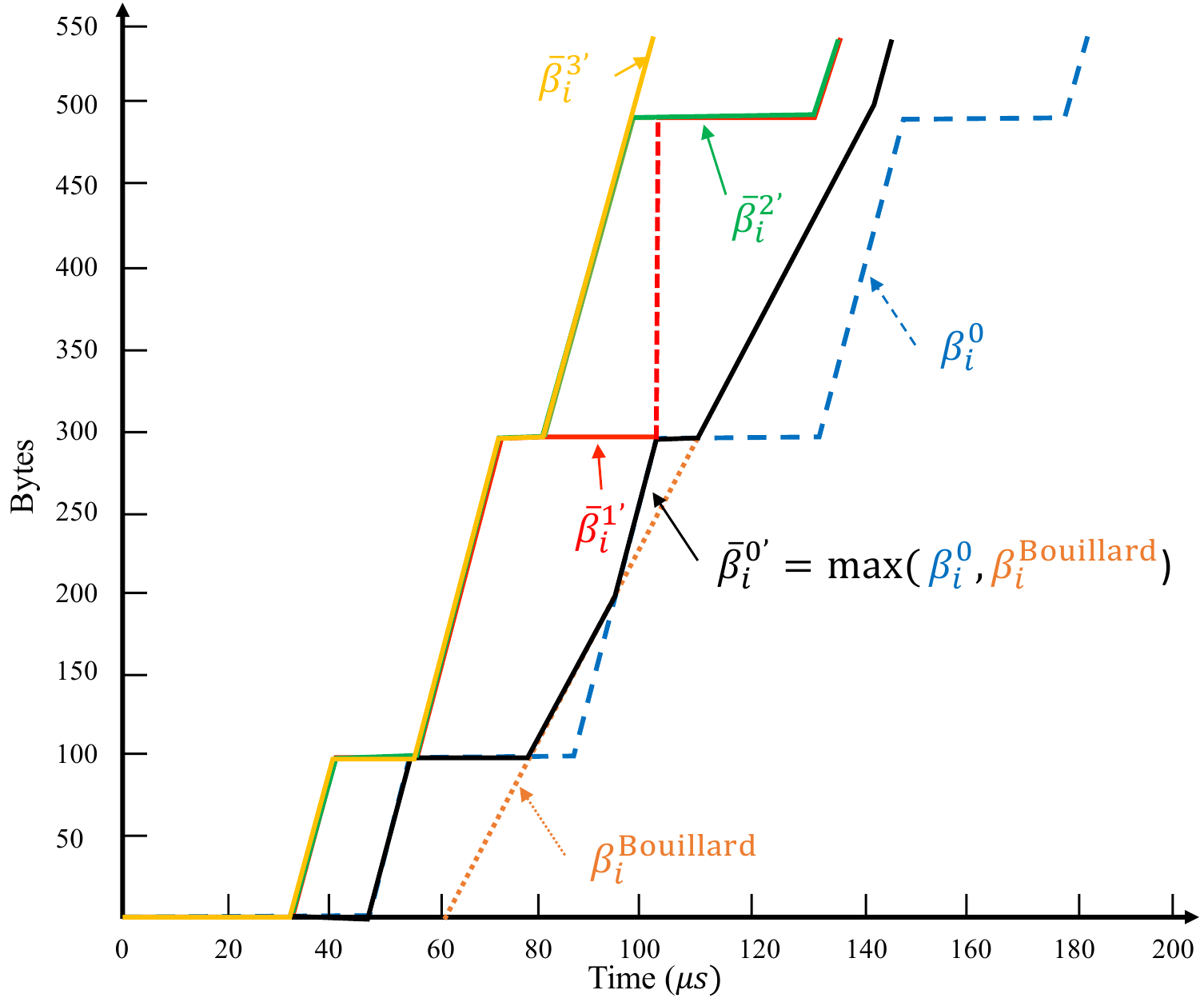}
	\caption{\sffamily \small Strict service curves for flow 2 of the example of Fig.~\ref{fig:IterServiceInd}, when iteratively applying Corollary~\ref{thm:optDrrServiceInd}, starting with either $\beta_i^0$ (Left: the sequence $\bar{\beta}_i^0 \leq  \bar{\beta}_i^1 \leq \bar{\beta}_i^2 \leq \bar{\beta}_i^3)$ or starting with $\max \lp \betaAnne_i,\beta_i^0 \rp$ (Right: the sequence ${\bar{\beta}_i^0}' \leq  {\bar{\beta}_i^1}' \leq {\bar{\beta}_i^2}' \leq {\bar{\beta}_i^3}'$), after $3$ iterations, the strict service curves of all flows become stationary in the horizon of the figure, and the scheme stops.  The sufficient horizon $t^*$ in this example is $200 \mu$s. The strict service curves of both sequences are precisely equal (i.e., $\bar{\beta}_i^1 ={\bar{\beta}_i^1}'$, $\bar{\beta}_i^2 ={\bar{\beta}_i^2}'$, and $\bar{\beta}_i^3={\bar{\beta}_i^3}'$,); also strict service curves of the last step is precisely equal to the last step of Fig.~\ref{fig:IterServiceInd}, i.e., ${\bar{\beta}_i^3}' = \beta_i^2$. Obtained with the RTaW online tool.
		%\jylb{in the paper, the iteration index is superscript; in the figures, it is in subscript.}
	}
	\label{fig:IterServiceRtas}
\end{figure}
\begin{theorem}[Non-convex, Full Mapping] \label{thm:optDrrServiceInd}
	Let $S$ be a server with the assumptions in \thref{thm:drrService}. Also, assume that every flow $i$  has an arrival curve $\alpha_i \in \mathscr{F}$ and a strict service curve $\oldservice_i \in \mathscr{F}$, and let $N_i = \{1,2,\ldots,n\}\setminus\{i\}$, and for any $J \subseteq N_i$, let $\bar{J} =  N_i \setminus J$.
	
	Then, for every flow $i$, a new strict service curve $\newservice_i \in \mathscr{F}$ is given by
	\begin{equation}
		\label{eq:betamJ}
		\newservice_i = \max \lp \oldservice_i, \max_{J \subseteq N_i}\gamma^J_i \circ \lb \beta - \sum_{j \in \bar{J}}\lp \alpha_j \oslash \oldservice_j  \rp \rb_{\uparrow}^+  \rp
		%	\beta^{m}_i = \lp\gamma_i \circ\lp \beta + \delta_i^{m-1} \rp_{\uparrow} \rp^{\bar{*}}
	\end{equation}

	%, respectively.
	%		
	%	 Then, for every integer $m \geq 0$, $S$ offers to every flow $i$ strict service curves $\beta^m_i$ defined by $\beta^0_i$ as in Theorem \ref{thm:drrService} and
	%		\begin{equation}
	%		\label{eq:betam}
	%			 \beta^{m+1}_i(t) = \gamma_i \lp \lp \beta(t) + \delta_i^m(t) \rp_{\uparrow} \rp^{\bar{*}}
	%				\end{equation}
	with 			
	\begin{align}
		\label{eq:gammaJ}
		\gamma^J_i (x) &= \lp \lambda_1 \otimes \nu_{Q_i,Q_{\tot}^J} \rp \lp  \lb x - \psi^J_i \lp Q_i - \mdelta_i \rp \rb^+ \rp \\ \nonumber &+ \min  ([x - \sum_{j \in  J} \lp Q_j + \mdelta_j\rp  ] ^+ , Q_i - \mdelta_i  )
		%		\gamma_i (x) &= \lp \lambda_1 \otimes \nu_{Q_i,Q_{\tot}} \rp \lp  \lb x - F_i \rb^+ \rp \\ \nonumber &+ \min  ([x - \sum_{j,j \neq i} Q_j + \mdelta_j  ] ^+ , Q_i - \mdelta_i  )
		\\
		\label{eq:QtotJ}
		Q_{\tot}^J &= \sum_{j \in  J} Q_j
		%	\\
		%	F_i &= Q_i - \mdelta_i + \sum_{j,j \neq i}\phi_{i,j}(Q_i - \mdelta_i )
		\\
		\label{eq:psiJ}
		\psi^J_i(x) &\isdef x + \sum_{j \in J} \phi_{i,j} \lp x \rp
	\end{align}
	
	%         %For a function $f:\mathbb{R}^+ \mapsto \mathbb{R^+} \cup \{+\infty\}$, $f_{\uparrow}$ is its non-decreasing closure \cite[Section 2.4]{bouillard_deterministic_2018}. Each service curve can be replaced by its super-additive closure \cite[Section 3.2]{bouillard_deterministic_2018}.
	%
%	Then 1) for every $m\geq 0$, $\beta_i^m$ is a strict service curve for flow $i$ in $S$ and 2) $\beta^0_i \leq \beta^1_i \leq  \beta^2_i \leq\ldots $.
%	Then, a better strict service curve for every flow $i$ is $\max \lp\oldservice_i,	\newservice_i \rp$.
	
	In \eqref{eq:betamJ}, $\lb. \rb ^+_{\uparrow}$ is the non-decreasing and non-negative  closure, defined in Section \ref{sec:backg:NC}, and $\circ$ is the composition of functions.
	% and $f^{\bar{*}}$ is its super-additive closure \cite[Section 3.2]{bouillard_deterministic_2018}.
	
\end{theorem}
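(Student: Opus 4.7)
The plan is to fix a flow $i$ and an arbitrary subset $J \subseteq N_i$ with complement $\bar{J}$, and to show that in any backlogged period $(s,t]$ of flow $i$ of length $\tau = t-s$, with $x := D_i(t)-D_i(s)$, one has $x \geq \gamma_i^J\bigl([\beta - \sum_{j\in\bar{J}}(\alpha_j \oslash \oldservice_j)]^+_\uparrow(\tau)\bigr)$. Since $\oldservice_i$ is a strict service curve for flow $i$ by assumption and $\newservice_i$ is defined as the max of $\oldservice_i$ and the above quantity over all $J$, establishing this implication for every $J$ will finish the proof.

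First, I would produce a two-sided bound on the service delivered during $(s,t]$. A lower bound comes from the fact that the DRR aggregate is backlogged whenever flow $i$ is, giving $\sum_j [D_j(t)-D_j(s)] \geq \beta(\tau)$. An upper bound on the traffic of each $j \in \bar{J}$ comes from the standard output arrival curve result of Section~\ref{sec:backg:NC}: since $\oldservice_j$ is a service curve for flow $j$ and $\alpha_j$ is an input arrival curve, $\alpha_j \oslash \oldservice_j$ is an arrival curve for $D_j$, hence $D_j(t)-D_j(s) \leq (\alpha_j \oslash \oldservice_j)(\tau)$. Subtracting yields $\sum_{j\in\{i\}\cup J}[D_j(t)-D_j(s)] \geq \beta(\tau) - \sum_{j\in\bar{J}}(\alpha_j \oslash \oldservice_j)(\tau)$.

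Next, I would invoke the DRR round-counting bound that underlies Theorem~\ref{thm:drrService}: the service received by any $j \in J$ during a backlogged period of flow $i$ in which flow $i$ receives $x$ bits is at most $\phi_{i,j}(x)$. Summing over $j \in J$ together with the service $x$ to flow $i$ itself gives $\sum_{j\in\{i\}\cup J}[D_j(t)-D_j(s)] \leq \psi_i^J(x)$. Combining with the previous paragraph, $\psi_i^J(x) \geq \beta(\tau) - \sum_{j\in\bar{J}}(\alpha_j \oslash \oldservice_j)(\tau)$; applying the same inequality to every prefix $(s, s+\tau']$ with $\tau' \leq \tau$, and exploiting that both $x$ (as a function of the right endpoint) and $\psi_i^J$ are non-decreasing while the left-hand side is non-negative, yields $\psi_i^J(x) \geq [\beta - \sum_{j\in\bar{J}}(\alpha_j \oslash \oldservice_j)]^+_\uparrow(\tau)$.

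Finally, I would invoke that $\gamma_i^J$ in \eqref{eq:gammaJ} is the closed-form lower pseudo-inverse of $\psi_i^J$, obtained by the same computation as in the proof of Theorem~\ref{thm:drrService} but restricted to the index set $\{i\}\cup J$; property \eqref{lem:lsi} then converts the last inequality into $x \geq \gamma_i^J\bigl([\beta - \sum_{j\in\bar{J}}(\alpha_j \oslash \oldservice_j)]^+_\uparrow(\tau)\bigr)$, and taking the max over $J$ and with $\oldservice_i$ gives the stated formula for $\newservice_i$. The main obstacle will be in the round-counting step: the DRR scheduler still interleaves services of flows in $\bar{J}$ inside each round, so one must verify that the per-round $Q_j$-plus-residual-deficit bound $\phi_{i,j}(x)$ on flow $j\in J$ is not disturbed by that interleaving, and that the number of rounds a backlogged period of flow $i$ with cumulative service $x$ can span is still bounded by $\lfloor(x+\mdelta_i)/Q_i\rfloor + 1$ independently of $\bar{J}$. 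This essentially amounts to re-reading the argument inside the proof of Theorem~\ref{thm:drrService} and confirming that its per-flow interference bound depends only on the quantum and deficit of that flow.
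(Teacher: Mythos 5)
Your proposal is correct and follows essentially the same approach as the paper's proof: the paper packages the core argument in an auxiliary lemma (Lemma~\ref{lem:serviceJ}), then instantiates $\alpha_j^* = \alpha_j \oslash \oldservice_j$, but the reasoning --- partitioning interferers into $J$ (bounded via $\phi_{i,j}$) and $\bar{J}$ (bounded via output arrival curves), passing to the non-decreasing, non-negative closure, and pseudo-inverting via $\gamma_i^J = \psi_i^{J\downarrow}$ --- is exactly what you describe. The round-counting concern you flag at the end is already settled by the per-flow bound $(p+1)Q_j + \mdelta_j$ from Boyer et al.'s Sub-goal~2, which the paper reuses in the proof of Theorem~\ref{thm:drrService} and which holds for each $j$ individually regardless of how the other flows are interleaved.
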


The proof is in Appendix \ref{sec:proofOptInd}. The essence of Theorem \ref{thm:optDrrServiceInd} is as follows. Equation ~\eqref{eq:betamJ} gives new strict service curves $\newservice_i$ for every flow $i$; they are derived from already available strict service curves $\oldservice_i $ and from arrival curves on the input flows $\alpha_j$; thus, it enables us to improve any collection of  strict service curves that are already obtained.

The computation of service curves in Theorem~\ref{thm:optDrrServiceInd} and of the resulting delay bounds can be restricted to a finite horizon. Indeed, all computations in Theorem~\ref{thm:optDrrServiceInd} are causal except for the min-plus deconvolution $\alpha_j \oslash\oldservice_j$. But, as mentioned in Section~\ref{sec:backg:NC}, such a computation and the computation of delay bounds can be limited to $t\in[0; t^{*}]$ for any positive $t^*$ such that $\alpha_j(t^*)\leq\oldservice_j(t^*)$ for every $m\geq 1$ and $j=1\!:\!n$.
To find such a $t^*$, we can use any lower bound on $\oldservice_j$.

We then compute $t_j^*=\inf_{s > 0}\{\alpha_j(s) \leq \oldservice_j(s)\}$ and take, as \emph{sufficient horizon}, $t^*= \max_j t^*_j$. The computations in Theorem~\ref{thm:optDrrServiceInd} can then be limited to this horizon or any upper bound on it. The computations %of service curves in Theorem~\ref{thm:optDrrService} and of the resulting delay bounds
can be performed with a tool such as RealTime-at-Work (RTaW) \cite{RTaW-Minplus-Console}, which uses an exact representation of functions with finite horizon, by means of rational numbers with exact arithmetic.

%The computations %of service curves in Theorem~\ref{thm:optDrrService} and of the resulting delay bounds
%can be performed with a tool such as RealTime-at-Work (RTaW) \cite{RTaW-Minplus-Console}, which uses an exact representation of functions with finite horizon, by means of rational numbers with exact arithmetic.

Theorem \ref{thm:optDrrServiceInd} can be iteratively applied, starting with the strict service curves that do not make any assumptions on interfering traffic, as obtained in Section~\ref{sec:serviceCurves}, and for every flow an increasing sequence of strict service curves is obtained; specifically, recall that $\beta_i^0$ is defined in Theorem \ref{thm:drrService}, then, for every integer $m \geq 1$ and every flow $i$, define $\beta_i^m$  by replacing $\oldservice_j$ with $\beta_j^{m-1}$ in \eqref{eq:betamJ}:
\begin{equation} \label{eq:betaZero}
	\beta_i^m = \max_{J \subseteq N_i}\gamma^J_i \circ \lb \beta - \sum_{j \in \bar{J}}\lp \alpha_j \oslash \beta_j^{m-1}  \rp \rb_{\uparrow}^+
\end{equation}
 It follows that  $\beta_i^m$ is a strict service curve for flow $i$ and $\beta_i^0 \leq \beta_i^1 \leq\beta_i^2 \leq \ldots$ (see Fig.~\ref{fig:IterServiceInd} (left)).

Alternatively, we can first compute Bouillard's strict service curve $\betaAnne_j$ for every flow $j$, as explained in Section \ref{sec:anne}. Observe that $\betaAnne_j$ does not usually dominate the non convex service curve $\beta_i^0$ obtained in Theorem~\ref{thm:drrService} (see Figure~\ref{fig:IterServiceRtas}). Therefore, since the maximum of two strict service curves is a strict service curve, we can take the maximum of both.
%
%, and iteratively apply Theorem \ref{thm:optDrrServiceInd}, starting with $\max \lp \betaAnne_i , \beta_i^0 \rp $;
 %Alternatively, we can first compute the strict service curve of Bouillard $\betaAnne_j$ for every flow $j$, as explained in Section \ref{sec:anne}, and iteratively apply Theorem \ref{thm:optDrrServiceInd}, starting with $\max \lp \betaAnne_i , \beta_i^0 \rp $;
 Specifically, define ${\beta_i^0}^' = \max \lp \betaAnne_i , \beta_i^0 \rp $, then, for every integer $m \geq 1$ and every flow $i$, define ${\beta_i^m}^'$  by replacing $\oldservice_j$ with ${\beta_j^{m-1}}^'$ in \eqref{eq:betamJ}:
 \begin{equation} \label{eq:betaZeroAnne}
 	{\beta_i^m}^' = \max\lp {\beta_i^{m-1}}^', \max_{J \subseteq N_i}\gamma^J_i \circ \lb \beta - \sum_{j \in \bar{J}}\lp \alpha_j \oslash {\beta_j^{m-1}}^'  \rp \rb_{\uparrow}^+ \rp
 \end{equation}
 It follows that  ${\beta_i^m}^'$ is a strict service curve for flow $i$ and ${\beta_i^0}^'  \leq {\beta_i^{1}}^' \leq {\beta_i^{2}}^' \leq \ldots$ (see Fig.~\ref{fig:IterServiceInd} (right)).

 In practice, in all cases that we tested, when initializing the method with either choice, we always converge to the same strict service curve for every flow (Fig.~\ref{fig:IterServiceInd}).

 %at each step $m \geq 1$, one can use a better strict service curve $\max \lp {\beta_i^m}^' , {\beta_i^{m-1}}^' \rp$ (see Fig.~\ref{fig:IterServiceInd} (right)).
 We are guaranteed simple convergence for the strict service curves of all flows when iteratively applying Theorem \ref{thm:optDrrServiceInd}, starting from any available strict service curves for all flows. This is because, first, as explained above, computations of such strict service curves can be limited to a sufficient finite horizon; second, by iteratively applying Theorem \ref{thm:optDrrServiceInd}, we obtain an increasing sequence of strict service curves for all flows, and every strict service curve is upper bounded by $\beta$, the aggregate strict service curve. In all cases that we tested, the iterative scheme became stationary in such a finite horizon. Note that the computed strict service curves at each iteration are valid, hence can be used to derive valid delay bounds; this means the iterative scheme can be stopped at any iteration. For example, the iterative scheme can be stopped when the delay bounds of all flows decrease insignificantly.

Observe that the computation to compute strict service curve of Theorem \ref{thm:optDrrServiceInd}, $\newservice_i$ in \eqref{eq:betamJ}, requires $2^{n-1}$ computations of $\gamma^J_i \circ\lb \beta - \sum_{j \in \bar{J}}\lp \alpha_j \oslash \oldservice_j \rp \rb^+_{\uparrow}$ for each $J$ (where $n$ is the total number of  the input flows of the DRR subsystem). In some cases (class based networks) $n$ is small and this is not an issue; in other scenarios (per-flow queuing), this may cause excessive complexity. To address this, we find lower bounds on the strict service curve of Theorem \ref{thm:optDrrServiceInd} where only one computation at each step $m$ is needed; this is less costly when $n$ is large.

\begin{corollary}[Non-convex, Simple Mapping] \label{thm:optDrrService}
	Make the same assumption as in Theorem \ref{thm:optDrrServiceInd}. Then, for every flow $i$, a new strict service curve $\newserviceBar_i \in \mathscr{F}$ is given by
	\begin{equation}
		\label{eq:betam}
		\newserviceBar_i = \max \lp\oldservice_i, \gamma_i \circ\lp \beta + \deltaOld_i \rp_{\uparrow}  \rp
		%	\beta^{m}_i = \lp\gamma_i \circ\lp \beta + \delta_i^{m-1} \rp_{\uparrow} \rp^{\bar{*}}
	\end{equation}

	%, respectively.
	%		
	%	 Then, for every integer $m \geq 0$, $S$ offers to every flow $i$ strict service curves $\beta^m_i$ defined by $\beta^0_i$ as in Theorem \ref{thm:drrService} and
	%		\begin{equation}
	%		\label{eq:betam}
	%			 \beta^{m+1}_i(t) = \gamma_i \lp \lp \beta(t) + \delta_i^m(t) \rp_{\uparrow} \rp^{\bar{*}}
	%				\end{equation}
	with 			
	\begin{equation}	 			
		\label{eq:delta}
		\deltaOld_i(t)  \isdef\sum_{j,j \neq i} \lb \phi_{i,j} \lp \oldservice_i(t)  \rp - \lp \alpha_j \oslash \oldservice_j \rp (t)\rb^+
	\end{equation}

	Also, $\newserviceBar_i \leq \newservice_i$.
	%         %For a function $f:\mathbb{R}^+ \mapsto \mathbb{R^+} \cup \{+\infty\}$, $f_{\uparrow}$ is its non-decreasing closure \cite[Section 2.4]{bouillard_deterministic_2018}. Each service curve can be replaced by its super-additive closure \cite[Section 3.2]{bouillard_deterministic_2018}.
	%
	
	In \eqref{eq:betam}, $_{\uparrow}$ is the non-decreasing closure, defined in Section \ref{sec:backg:NC}, and $\circ$ is the composition of functions; also, note that $\beta$ and $\deltaOld_i$ are functions of the time $t$.
	% and $f^{\bar{*}}$ is its super-additive closure \cite[Section 3.2]{bouillard_deterministic_2018}.
	
\end{corollary}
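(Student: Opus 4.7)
The plan is to derive \thref{thm:optDrrService} via the two bounds that underlie \thref{thm:optDrrServiceInd}, combined uniformly rather than optimized over subsets of interferers. Fix a backlogged period $(s,t]$ of flow $i$, set $\tau=t-s$ and $X\isdef D_i(t)-D_i(s)$. Since $\oldservice_i$ is already a strict service curve, $X\geq\oldservice_i(\tau)$ is immediate, and the substantive work is to show $X\geq\gamma_i\!\lp(\beta+\deltaOld_i)_\uparrow(\tau)\rp$.

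For every interfering flow $j\neq i$, I would bound the service delivered to $j$ inside $(s,t]$ by the minimum of the DRR interference bound $\phi_{i,j}(X)$ that drives the proof of \thref{thm:drrService} and the standard output arrival-curve bound $(\alpha_j\oslash\oldservice_j)(\tau)$. Combining with the aggregate strict service curve $\beta$ gives
\begin{equation*}
\beta(\tau)\leq X+\sum_{j\neq i}\min\!\lp\phi_{i,j}(X),\,(\alpha_j\oslash\oldservice_j)(\tau)\rp,
\end{equation*}
and the identity $\min(a,b)=a-[a-b]^+$ together with the definition of $\psi_i$ rewrites this as $\psi_i(X)\geq\beta(\tau)+\sum_{j\neq i}\lb\phi_{i,j}(X)-(\alpha_j\oslash\oldservice_j)(\tau)\rb^+$. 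Since $\phi_{i,j}$ is non-decreasing and $X\geq\oldservice_i(\tau)$, each summand is at least its counterpart in $\deltaOld_i(\tau)$, so $\psi_i(X)\geq\beta(\tau)+\deltaOld_i(\tau)$, and \eqref{lem:lsi} then yields $X\geq\gamma_i\!\lp\beta(\tau)+\deltaOld_i(\tau)\rp$.

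To lift this pointwise inequality to the non-decreasing closure, I would observe that for every $\tau'\leq\tau$ the sub-interval $(t-\tau',t]$ is itself a backlogged period of flow $i$, so the same argument applied to it gives $X\geq\gamma_i\!\lp\beta(\tau')+\deltaOld_i(\tau')\rp$. Taking the supremum over $\tau'\in[0,\tau]$ and commuting it with $\gamma_i$ (which is left-continuous as the lower pseudo-inverse of a non-decreasing function) delivers $X\geq\gamma_i\!\lp(\beta+\deltaOld_i)_\uparrow(\tau)\rp$, establishing that $\newserviceBar_i$ is a strict service curve for flow $i$.

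For the comparison $\newserviceBar_i\leq\newservice_i$, I would pick, for each $\tau$, the partition $J=A(\tau)\isdef\{j\neq i:\phi_{i,j}(\oldservice_i(\tau))\leq(\alpha_j\oslash\oldservice_j)(\tau)\}$ inside the $\max_{J\subseteq N_i}$ of \eqref{eq:betamJ}. A direct verification shows $\gamma_i^{A(\tau)}\!\lp\beta(\tau)-\sum_{j\notin A(\tau)}(\alpha_j\oslash\oldservice_j)(\tau)\rp\geq\gamma_i\!\lp\beta(\tau)+\deltaOld_i(\tau)\rp$, using $\psi_i^{A(\tau)}\leq\psi_i$ together with the monotonicity of $\phi_{i,j}$ and $X\geq\oldservice_i(\tau)$ to propagate one bound into the other; taking the non-decreasing closure and the max with $\oldservice_i$ on both sides yields $\newservice_i\geq\newserviceBar_i$. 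The main obstacle I anticipate is the clean handling of the non-decreasing closure, in particular justifying the commutation of the supremum with $\gamma_i$ at jumps of the argument; once left-continuity of the lower pseudo-inverse is carefully invoked, the remaining manipulations are the same bookkeeping as in the proofs of \thref{thm:drrService} and \thref{thm:optDrrServiceInd}.
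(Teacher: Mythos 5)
Your Step~1 is a genuinely different---and valid---route to the strict-service-curve claim. The paper never proves directly that $\newserviceBar_i$ is a strict service curve; its proof only establishes $\newserviceBar_i\leq\newservice_i$ and lets the property be inherited from Theorem~\ref{thm:optDrrServiceInd}. Your direct backlogged-period argument, writing the per-interferer bound as $\min\lp\phi_{i,j}(X),(\alpha_j\oslash\oldservice_j)(\tau)\rp$, expanding $\min(a,b)=a-[a-b]^+$, then using $X\geq\oldservice_i(\tau)$ and the monotonicity of $\phi_{i,j}$ to replace $\phi_{i,j}(X)$ by $\phi_{i,j}(\oldservice_i(\tau))$ inside the $[\cdot]^+$, is correct and self-contained; it is essentially a single-set variant of Lemma~\ref{lem:serviceJ}. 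This makes Corollary~\ref{thm:optDrrService} independent of Theorem~\ref{thm:optDrrServiceInd}, which has some pedagogical value. For the closure, the paper prefers to apply \cite[Lemma 3.1]{bouillard_deterministic_2018} to the monotone left-hand side $\psi_i\lp D_i(t)-D_i(s)\rp$ before inverting, rather than your sub-interval plus left-continuity argument; both are fine, but keep in mind you need $D_i(t)-D_i(t-\tau')\leq X$ to transfer the bound from the shorter interval.

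Step~2 has a genuine gap. The pointwise inequality $\gamma_i^{A(\tau)}\lp\beta(\tau)-\sum_{j\notin A(\tau)}(\alpha_j\oslash\oldservice_j)(\tau)\rp\geq\gamma_i\lp\beta(\tau)+\deltaOld_i(\tau)\rp$ is not true in general. Write $a=\beta(\tau)-\sum_{j\notin A(\tau)}(\alpha_j\oslash\oldservice_j)(\tau)$ and $c=\sum_{j\notin A(\tau)}\phi_{i,j}\lp\oldservice_i(\tau)\rp$, so the right-hand argument is $a+c$. The implication $\psi_i^{A(\tau)}(x)\geq a\Rightarrow\psi_i(x)\geq a+c$, which is what would give $\gamma_i^{A(\tau)}(a)\geq\gamma_i(a+c)$, needs $\sum_{j\notin A(\tau)}\phi_{i,j}(x)\geq c$, i.e., effectively $x\geq\oldservice_i(\tau)$; but $x=\gamma_i^{A(\tau)}(a)$ can fall strictly below $\oldservice_i(\tau)$ (this happens whenever $a<\psi_i^{A(\tau)}\lp\oldservice_i(\tau)\rp$), and if some $\phi_{i,j}$, $j\notin A(\tau)$, has a floor-step between these two abscissae the inequality strictly fails. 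The paper avoids this by never comparing the single $J=A(\tau)$ term pointwise against $\gamma_i$: it applies $\psi_i^{A(\tau)}$ to $z\isdef\newservice_i(\tau)$, then \emph{inside the chain} invokes $z\geq\oldservice_i(\tau)$ (coming from the outer $\max$ in \eqref{eq:betamJ}) to add $\sum_{j\notin A(\tau)}\phi_{i,j}(z)\geq\sum_{j\notin A(\tau)}\phi_{i,j}\lp\oldservice_i(\tau)\rp$ to both sides, obtaining $\psi_i(z)\geq\beta(\tau)+\deltaOld_i(\tau)$ and only then inverting. Your mention of ``$X\geq\oldservice_i(\tau)$'' points in the right direction, but $X$ was the Step~1 variable (not $\newservice_i(\tau)$), and taking the $\max$ with $\oldservice_i$ only ``at the end,'' as your last sentence suggests, does not repair the false intermediate inequality.
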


The proof is in Appendix \ref{sec:proofoptDRR}. The essence of Corollary \ref{thm:optDrrService} is the same as explained after Theorem \ref{thm:optDrrServiceInd}. Corollary \ref{thm:optDrrService}  can be iteratively applied either starting with $\beta_i^0$, defined in Theorem \ref{thm:drrService}, or starting with ${\beta_i^0}^' = \max \lp \betaAnne_i , \beta_i^0 \rp$, i.e., the maximum of $\beta_i^0$ and the strict service curve of Bouillard, explained in Section \ref{sec:anne} (see Fig.~\ref{fig:IterServiceRtas}).

 In the examples that we tested, we observed that the iterative scheme obtained with Theorem \ref{thm:optDrrServiceInd}, our non-convex, full mapping, converges to the same results as the iterative scheme obtained with Corollary \ref{thm:optDrrService}, our non-convex, simple mapping; however, it requires more iterations (see Fig.~\ref{fig:IterServiceInd} and Fig.~\ref{fig:IterServiceRtas}).
  \begin{figure*}[htbp]
 	\begin{subfigure}[b]{\columnwidth}
 		\centering
 		% include first image
 		\includegraphics[width=\linewidth]{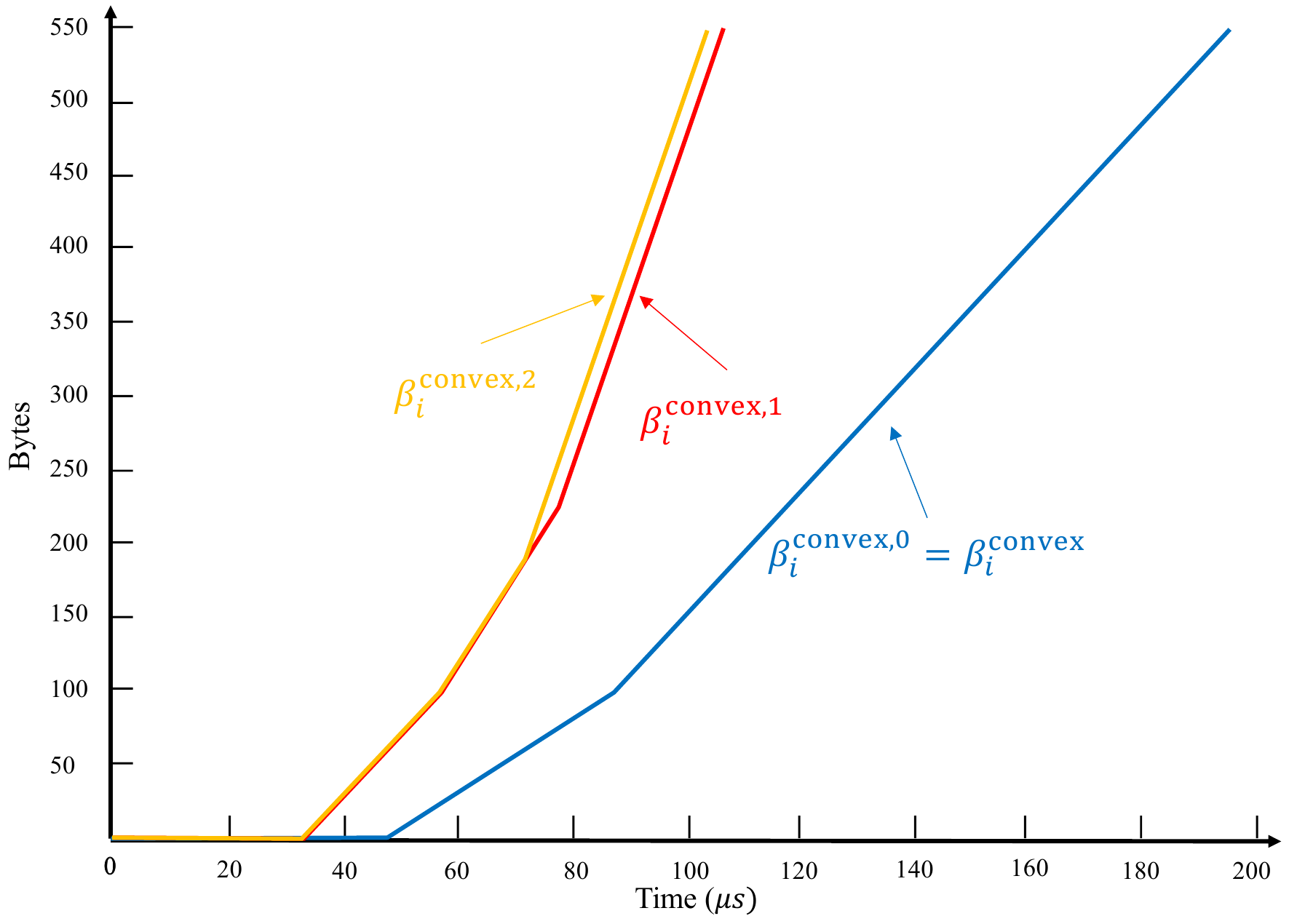}
 	\end{subfigure}
 	\hfill
 	\begin{subfigure}[b]{\columnwidth}
 		\centering
 		\includegraphics[width=\linewidth]{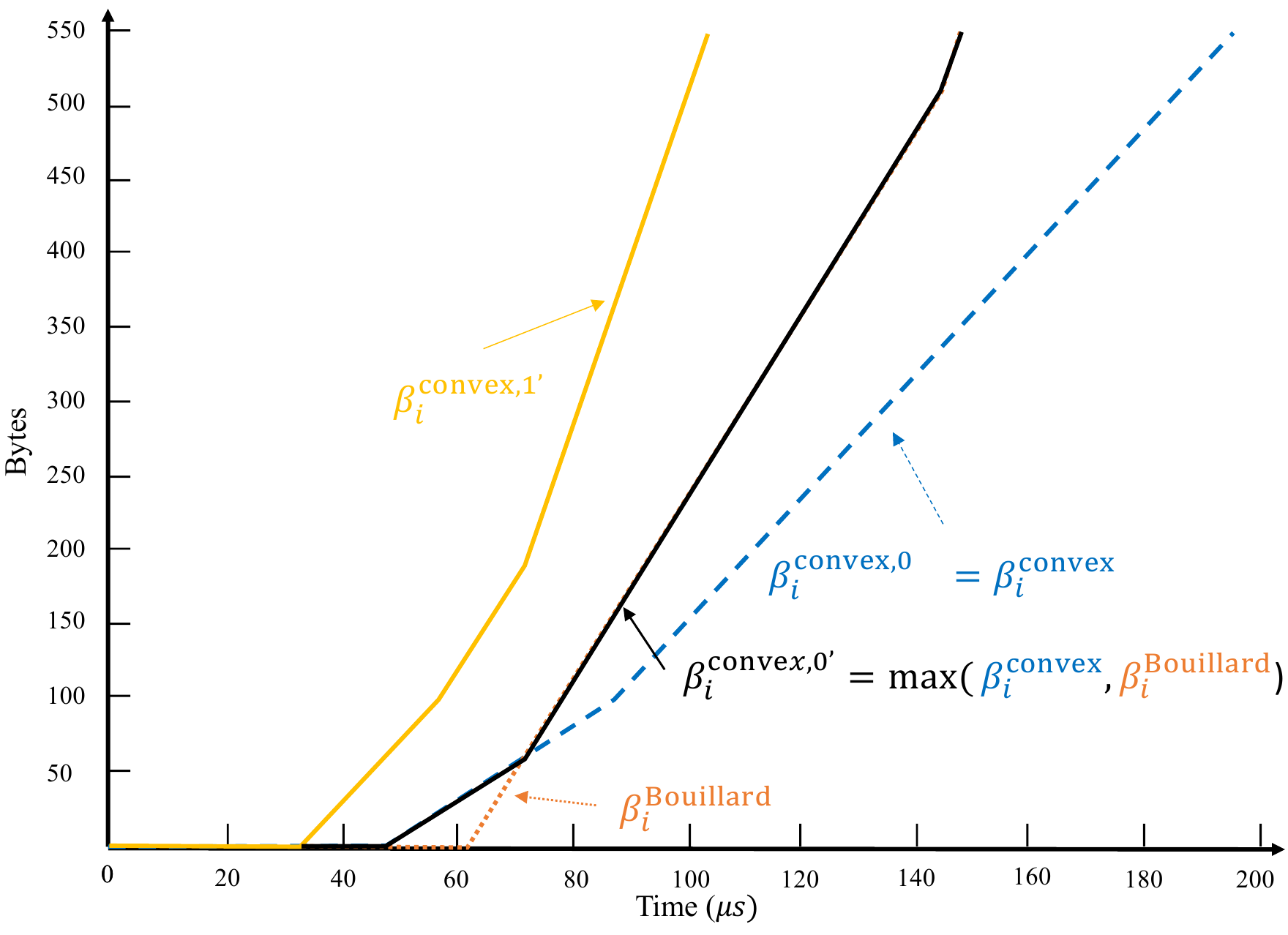}
 	\end{subfigure}
 	\caption{\sffamily \small  Strict service curves for flow 2 of the example of Fig.~\ref{fig:IterServiceInd}, when iteratively applying Corollary~\ref{thm:optDrrServiceInd} as explained in \eqref{eq:betamConJ}, starting with either $\beta_i^{\scriptsize\textrm{convex}}$ (Left: the sequence $\betaconvex{0}_i \leq  \betaconvex{1}_i \leq \betaconvex{2}_i)$ or starting with $\max \lp \betaAnne_i,\beta_i^{\scriptsize\textrm{convex}} \rp$ (Right: the sequence ${\betaconvex{0}_i}'  \leq  {\betaconvex{1}_i}' =\betaconvex{2}_i$), after $2$ iterations, the strict service curves of all flows become stationary in the horizon of the figure, and the scheme stops.  The sufficient horizon $t^*$ in this example is $200 \mu$. Obtained with the RTaW online tool.}
 	\label{fig:IterServiceIndSimpler}
 \end{figure*}

 \subsection{Convex Versions of the Mapping}
 Computation of  the strict service curves of Theorem \ref{thm:optDrrServiceInd} and Corollary \ref{thm:optDrrService} can be costly. We first explain some sources of complexity and how to address them. We then propose  convex versions, for both the non-convex, full  mapping in Theorem \ref{thm:optDrrServiceInd} and the  non-convex, simple mapping in Corollary \ref{thm:optDrrService}.
 \subsubsection{Convex Versions of Theorem \ref{thm:optDrrServiceInd}}

 One source of complexity lies in the initial strict service curves $\beta^0_i$. For every flow $i$, $\beta^0_i$ can be replaced by its simpler lower bounds. As presented in Section \ref{sec:serviceCurves}, $\beta^0_i$ can be replaced by  its convex closure  $\gammacon\lp \beta(t)\rp $, or  rate-latency functions $\gammaMinL\lp \beta(t)\rp $ and  $\gammaMaxR\lp \beta(t)\rp $.

 Another source of complexity is function $\gamma_i^J$, as defined in \eqref{eq:gammaJ}, is non-convex and results in strict service curves that are also non-convex (Fig.~\ref{fig:IterServiceInd}). If there is interest in simpler expressions of Theorem \ref{thm:optDrrServiceInd}, any lower bounding function on $\gamma_i^J$ results in a lower bound of $\newservice_i$, which is a valid, though less good, strict service curve for DRR.

\begin{corollary}[Convex, Full Mapping]\label{thm:optDrrServiceAppInd}
	Make the same assumptions as in Theorem \ref{thm:optDrrServiceInd}. Also, for a flow $i$, let $\hat{\gamma}^{J}_i  \in \mathscr{F}$ such that $\hat{\gamma}^{J}_i \leq  \gamma^{J}_i$.
	
	Let $\newservicehat_i  $ be the result of Theorem \ref{thm:optDrrServiceInd}, in \eqref{eq:betamJ}, by replacing functions $\gamma^J_i$ with $\hat{\gamma}^{J}_i $.
	
	Then, $S$ offers to flow $i$ a strict service curve $\newservicehat_i$ and $\newservicehat_i \leq \newservice_i   $.
\end{corollary}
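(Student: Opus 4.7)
The plan is to reduce Corollary~\ref{thm:optDrrServiceAppInd} to Theorem~\ref{thm:optDrrServiceInd} by a purely structural monotonicity argument, without revisiting any DRR-specific reasoning. Two claims must be established: $\newservicehat_i\le\newservice_i$, and $\newservicehat_i$ is a strict service curve for flow~$i$.

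First, I would establish the pointwise inequality $\newservicehat_i\le\newservice_i$. The right-hand side of \eqref{eq:betamJ} is monotone in the family $(\gamma^J_i)_{J\subseteq N_i}$: since $\hat\gamma^J_i(x)\le\gamma^J_i(x)$ for every $x\ge 0$ and every $J\subseteq N_i$, for any fixed function $h$ one has $\hat\gamma^J_i(h(t))\le\gamma^J_i(h(t))$ at every $t\ge 0$, and this pointwise inequality is preserved by the maximum over the finite index set $\{J:J\subseteq N_i\}$ and by the outer maximum with $\oldservice_i$. Applying this to $h=\lb\beta-\sum_{j\in\bar J}(\alpha_j\oslash\oldservice_j)\rb^+_{\uparrow}$ for each $J$ yields the desired inequality.

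Second, I would invoke the elementary fact that any function in $\mathscr{F}$ lying pointwise below a strict service curve is itself a strict service curve: on any backlogged interval $(s,t]$ for flow~$i$, $D(t)-D(s)\ge\newservice_i(t-s)\ge\newservicehat_i(t-s)$. Since $\newservice_i$ is a strict service curve by Theorem~\ref{thm:optDrrServiceInd}, so is $\newservicehat_i$. A brief verification that $\newservicehat_i\in\mathscr{F}$ is also in order: each $\hat\gamma^J_i$ is in $\mathscr{F}$ by hypothesis, the arguments $\lb\cdot\rb^+_{\uparrow}$ are wide-sense increasing and non-negative by construction, compositions and finite maxima of functions in $\mathscr{F}$ stay in $\mathscr{F}$, and $\oldservice_i\in\mathscr{F}$.

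No genuine obstacle is anticipated: the corollary is the full-mapping counterpart of the relation between Theorem~\ref{thm:drrService} and Theorem~\ref{thm:drrServiceApp}, and the proof is purely structural. Every operation appearing in \eqref{eq:betamJ}---right-composition with a fixed argument, maximum over $J$, and outer maximum with $\oldservice_i$---is pointwise monotone, so replacing the $\gamma^J_i$ by any pointwise lower bounds in $\mathscr{F}$ can only weaken the resulting strict service curve while keeping its validity.
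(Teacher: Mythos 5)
Your proposal is correct and matches the paper's implicit reasoning: the paper leaves Corollary~\ref{thm:optDrrServiceAppInd} without an explicit proof because it is exactly this structural monotonicity argument, the same one spelled out in Appendix~\ref{sec:proofApp} for the analogous Theorem~\ref{thm:drrServiceApp}. The chain is identical --- pointwise lower bound preserved through composition, finite maxima, and outer max with $\oldservice_i$, followed by the observation that a function in $\mathscr{F}$ lying below a strict service curve is itself one.
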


 As of today, in tools such as RTaW working with functions that are linear and convex is simpler and tractable. Hence, we apply Corollary \ref{thm:optDrrServiceAppInd} with $\hat{\gamma}^{J}_i ={\gammacon}^J = \max\lp {\gammaMaxR}^J , {\gammaMinL}^J\rp$ (convex closure of function $\gamma^{J}_i$) and
\begin{align}
	\label{eq:gammaMaxrateJ}
	{\gammaMaxR}^J &= \beta_{{\rmax_i }^J,{\tmax_i }^J} \\
	\label{eq:gammaMinlatencyJ}
	{\gammaMinL}^J&= \beta_{{\rmin_i}^J,{\tmin_i}^J}\\	
	{\rmax_i }^J&= \frac{Q_i}{Q^J_\tot} \mand {\tmax_i }^J= \sum_{j \in J}\phiMaxR(0)\\
	{\rmin_i}^J &= \frac{Q_i - \mdelta_i}{Q^J_\tot - \mdelta_i} \mand {\tmin_i }^J= \sum_{j \in J}\phiMinL(0)
\end{align}
%We call this mapping \emph{Convex, Full Mapping}.

The sequence of obtained strict service curves  is thus defined by $\betaconvex{0}_i=\gammacon \circ \beta=\beta_i^{\scriptsize\textrm{convex}}$ and for $m\geq 1$ (see Fig.~\ref{fig:IterServiceIndSimpler} (left)):

\begin{equation}
	\label{eq:betamConJ}
	\betaconvex{m}_i = \max_{J \subseteq N_i}{\gammacon}^J\circ \lb \beta - \sum_{j \in \bar{J}}\lp \alpha_j \oslash \betaconvex{m-1}_j \rp \rb^+_{\uparrow}
	%	\beta^{m}_i = \lp\gamma_i \circ\lp \beta + \delta_i^{m-1} \rp_{\uparrow} \rp^{\bar{*}}
\end{equation}

Alternatively,  we can first compute the strict service curve of Bouillard $\betaAnne_j$ for every flow $j$, as explained in Section \ref{sec:anne}, and iteratively apply Corollary \ref{thm:optDrrServiceInd} with${\gammacon}^J$, starting with $\max \lp \betaAnne_i , \beta_i^{\scriptsize\textrm{convex}} \rp $; specifically, ${\betaconvex{0}_i}'  = \max \lp \betaAnne_i ,\beta_i^{\scriptsize\textrm{convex}} \rp $, then, for every integer $m \geq 1$ and every flow $i$, define ${\betaconvex{m}_i}' $  as
\begin{equation} \label{eq:betaZeroAnne}
	{\betaconvex{m}_i}' = \max_{J \subseteq N_i}{\gammacon}^J\circ \lb \beta - \sum_{j \in \bar{J}}\lp \alpha_j \oslash {\betaconvex{m-1}_j}' \rp \rb^+_{\uparrow}
\end{equation}
It follows that  ${\betaconvex{m}_i}' $ is a strict service curve for flow $i$ and at each step $m \geq 1$, one can use a better strict service curve $\max \lp {\betaconvex{m}_i}'  , {\betaconvex{m-1}_i}'  \rp$ (see Fig.~\ref{fig:IterServiceIndSimpler} (right)). 

 In practice, in all cases that we tested, when initializing the method with either choice, we always converge to the same strict service curve for every flow (Fig.~\ref{fig:IterServiceIndSimpler}).

Let us explain why computing the above strict service curves is simpler. Min-plus convolution and deconvolution of piecewise linear convex can be computed in automatic tools, such as RTaW, very efficiently \cite[Section 4.2]{bouillard_deterministic_2018}. As illustrated in Fig. \ref{fig:IterServiceIndSimpler}, obtained strict service curves are convex, thus computing the min-plus deconvolution with such strict service curves is much simpler than with those in Fig. \ref{fig:IterServiceInd}. Also, the composition is simpler, as for $f\in\mathscr{F}$, a function $f$, ${\gammacon}^J \lp f(t) \rp$ is equal to $\max \lp {\rmax_i }^J, \lb f(t) - {\tmax_i }^J \rb^+, {\rmin_i}^J\lb f(t) - {\tmin_i}^J\rb^+ \rp$, which includes only multiplication, addition, and maximum operations.

\begin{figure*}[htbp]
	\begin{subfigure}[b]{\columnwidth}
		\centering
		% include first image
		\includegraphics[width=\linewidth]{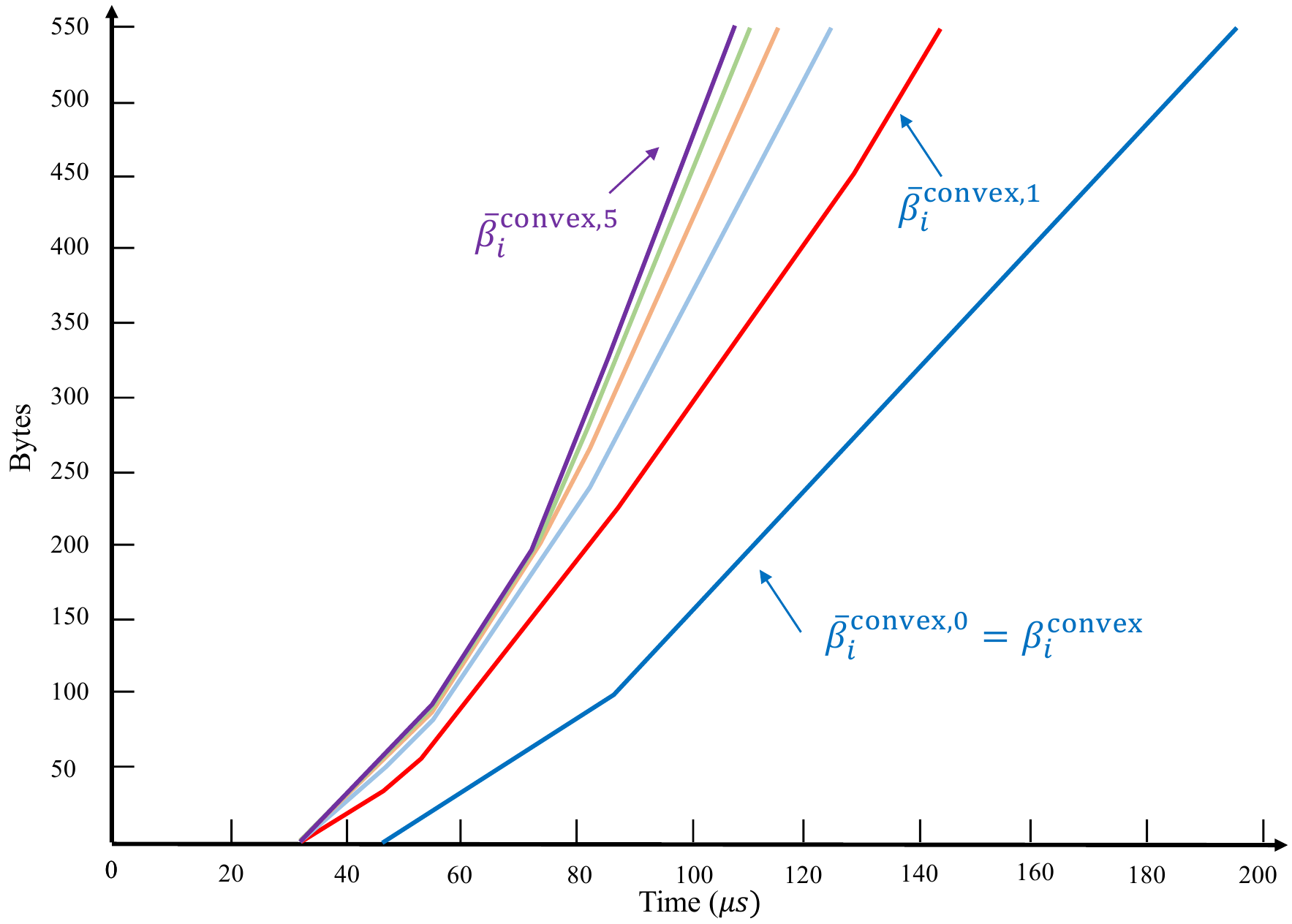}
	\end{subfigure}
	\hfill
	\begin{subfigure}[b]{\columnwidth}
		\centering
		\includegraphics[width=\linewidth]{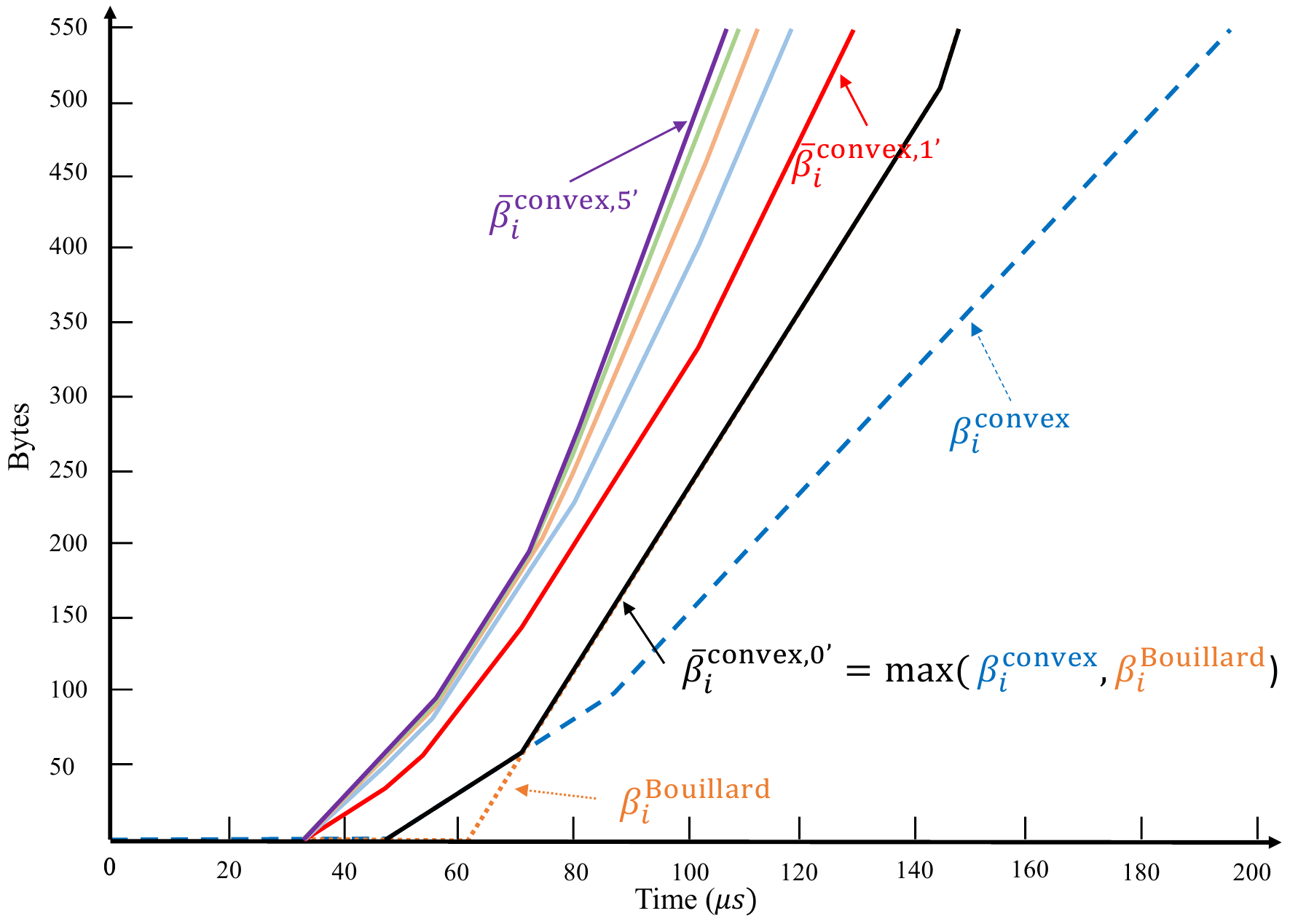}
	\end{subfigure}
	\caption{\sffamily \small Strict service curves for flow 2 of the example of Fig.~\ref{fig:IterServiceInd}, when iteratively applying Corollary~\ref{thm:optDrrServiceApp}, starting with either $\beta_i^{\scriptsize\textrm{convex}}$ (Left: the sequence $\betaaff{0} \leq \betaaff{1}  \leq \ldots)$ or starting with $\max \lp \betaAnne_i,\beta_i^{\scriptsize\textrm{convex}} \rp$ (Right: the sequence ${\betaaff{0}}' \leq {\betaaff{1}}'  \leq \ldots$). The iterative scheme stops when the computed delay bounds for all flows decreased by leas than $0.25\mu$s.  The sufficient horizon $t^*$ in this example is $200 \mu$s. The delay bounds obtained with the strict service curve of the last iteration of both cases are equal, however, the strict service curves are different. The strict service curves of all flows become stationary after 16 iterations. Obtained with the RTaW online tool.}
	\label{fig:IterServiceAff}
\end{figure*}

\begin{figure}[htbp]
	\centering
	%	\title={.}
	\includegraphics[width=\linewidth,height=0.75\linewidth]{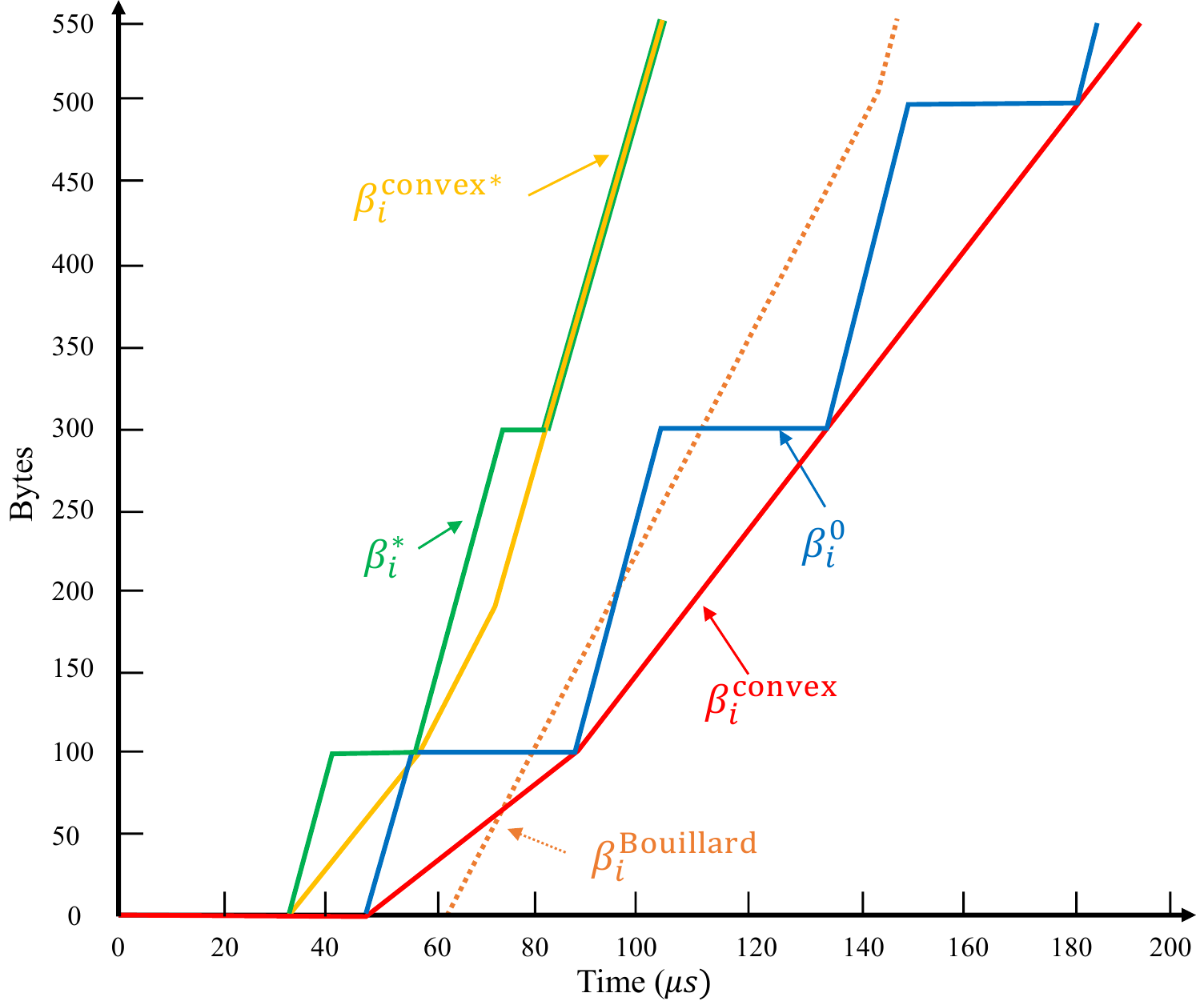}
	\caption{\sffamily \small A summary of strict service curves for flow 2 of the example of Fig.~\ref{fig:IterServiceInd}. The strict service curves $\beta_i^0$ and $\beta_i^{\scriptsize\textrm{convex}}$ are our non-convex and convex strict service curve of Section \ref{sec:serviceCurves}, with no assumption on the interfering traffic.  The strict service curves $\beta_i^*$ and $\beta_i^{\scriptsize\textrm{convex}*}$ are our best non-convex and convex strict service curve that accounts for the interfering traffic, explain in Section \ref{sec:serviceCurveInterfer}. Obtained with the RTaW online tool.
		%\jylb{in the paper, the iteration index is superscript; in the figures, it is in subscript.}
	}
	\label{fig:IterServiceAll}
\end{figure}
 \subsubsection{Convex Versions of Corollary \ref{thm:optDrrService}}

 Again here a source of complexity lies in the initial strict service curves $\beta^0_i$. For every flow $i$, $\beta^0_i$ can be replaced by its simpler lower bounds. As presented in Section \ref{sec:serviceCurves}, $\beta^0_i$ can be replaced by  its convex closure  $\gammacon\lp \beta(t)\rp $, or  rate-latency functions $\gammaMinL\lp \beta(t)\rp $ and  $\gammaMaxR\lp \beta(t)\rp $.

 Also, another source of complexity is function $\phi_{i,j}$ (and the resulting function $\gamma_i$).  Function $\phi_{i,j}$, as defined in \eqref{eq:phi}, is non-concave and non-linear (because it uses floor operations). This might create discontinuities that can make the computation hard, see Fig. \ref{fig:IterServiceRtas}. To address this problem, we 
 derive the following convex version of Corollary \ref{thm:optDrrService}.
 %first give a generic Theorem to obtain simpler versions of Corollary  \ref{thm:optDrrServiceInd}, by upper bounding $\phi_{i,j}$, at the expense of perhaps less good results.

\begin{theorem}[Convex, Simple Mapping]\label{thm:optDrrServiceApp}
	Make the same assumptions as in Corollary \ref{thm:optDrrService}. Also, for a flow $i$, let $\phi_{i,j}'$ and $\gamma_i'$ be defined as in Theorem \ref{thm:drrServiceApp}.
	
	Let ${\newserviceBar_i}'$ be the result of Corollary \ref{thm:optDrrService} by replacing functions $\phi_{i,j}$ and $\gamma_i$ with $\phi_{i,j}'$ and $\gamma_i'$, respectively.
	
	Then,  $S$ offers to every flow $i$ a strict service curve ${\newserviceBar_i}'$.
\end{theorem}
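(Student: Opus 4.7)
The plan is to show that the proof of Corollary~\ref{thm:optDrrService} goes through verbatim when $\phi_{i,j}$ is replaced by the upper bound $\phi_{i,j}' \geq \phi_{i,j}$ and $\gamma_i$ is replaced by the corresponding $\gamma_i' = (\psi_i')^{\downarrow}$. The key observation is that, in the proof of Corollary~\ref{thm:optDrrService}, $\phi_{i,j}$ appears only as a non-decreasing upper bound on the interference generated by flow $j$ during a backlogged period of flow $i$ in which flow $i$ has received service $x$; every such inequality therefore remains valid when $\phi_{i,j}$ is enlarged to $\phi_{i,j}'$, and one only needs to rerun that argument with $\gamma_i'$ in place of $\gamma_i$.

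Concretely, I would fix a backlogged period $(s,t]$ of flow $i$, set $u = t-s$, and let $x$ denote the service received by flow $i$ during $(s,t]$; let $I_j$ denote the interference from flow $j$. The hypothesis that $\oldservice_j$ is a strict service curve for flow $j$ with arrival curve $\alpha_j$ gives $I_j \leq (\alpha_j \oslash \oldservice_j)(u)$; the DRR mechanism (as exploited in Theorem~\ref{thm:drrServiceApp}) gives $I_j \leq \phi_{i,j}'(x)$; and the conservation law gives $x + \sum_{j\neq i} I_j \geq \beta(u)$. Using $\min(A,B) = A - [A-B]^+$, these combine into
\begin{equation*}
\psi_i'(x) = x + \sum_{j \neq i} \phi_{i,j}'(x) \;\geq\; \beta(u) + \sum_{j \neq i} \lb \phi_{i,j}'(x) - (\alpha_j \oslash \oldservice_j)(u) \rb^+ .
\end{equation*}
Using $x \geq \oldservice_i(u)$ together with the monotonicity of $\phi_{i,j}'$ and of $[\cdot]^+$, the right-hand side is lower-bounded by $\beta(u) + {\deltaOld_i}'(u)$, where ${\deltaOld_i}'$ is the primed correction term entering the definition of ${\newserviceBar_i}'$. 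Since both $x$ (as a function of $u$) and $\psi_i'$ are non-decreasing, this can be strengthened to $\psi_i'(x) \geq (\beta + {\deltaOld_i}')_{\uparrow}(u)$, and the lower-pseudo-inverse property~\eqref{lem:lsi} of $\gamma_i' = (\psi_i')^{\downarrow}$ then yields $x \geq \gamma_i' \circ (\beta + {\deltaOld_i}')_{\uparrow}(u)$. Together with the trivial bound $x \geq \oldservice_i(u)$, this is precisely the statement that ${\newserviceBar_i}'$ is a strict service curve for flow $i$.

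The only mildly delicate step is the weakening from $\phi_{i,j}'(x)$ to $\phi_{i,j}'(\oldservice_i(u))$ inside $[\cdot]^+$, which converts the $x$-dependent expression into the $x$-free ${\deltaOld_i}'(u)$ appearing in the statement; this is immediate from $x \geq \oldservice_i(u)$ and the monotonicity of $\phi_{i,j}'$ and $[\cdot]^+$, so no serious obstacle is anticipated. I note that one cannot (and I do not try to) establish ${\newserviceBar_i}' \leq \newserviceBar_i$ and reduce to Corollary~\ref{thm:optDrrService}: enlarging $\phi_{i,j}$ shrinks $\gamma_i$ but simultaneously increases $\deltaOld_i$, and the two effects pull the bound in opposite directions, so the proof must be redone directly along the lines above rather than obtained as a corollary.
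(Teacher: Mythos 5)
Your proof is correct and takes a genuinely different route from the one the paper sketches. The paper proves Corollary~\ref{thm:optDrrService} by establishing the domination $\newserviceBar_i \leq \newservice_i$ (with $\newservice_i$ already known to be a strict service curve from Theorem~\ref{thm:optDrrServiceInd}), and then states that Theorem~\ref{thm:optDrrServiceApp} is obtained ``similarly after replacing $\phi_{i,j}$, $\gamma_i$ by $\phi_{i,j}'$, $\gamma_i'$''. You instead give a direct, first-principles proof: in any backlogged period of flow $i$, combine the interference bound $I_j \leq \phi_{i,j}'(x)$, the output-arrival-curve bound $I_j \leq (\alpha_j\oslash\oldservice_j)(u)$, and the aggregate strict-service inequality via $\min(A,B)=A-[A-B]^+$; weaken $\phi_{i,j}'(x)$ to $\phi_{i,j}'(\oldservice_i(u))$ using $x\geq\oldservice_i(u)$; pass to the non-decreasing closure; and invert with \eqref{lem:lsi}. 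This is, in effect, the argument of Lemma~\ref{lem:serviceJ} run with the primed functions. It buys you two things: it is self-contained (no dependence on Theorem~\ref{thm:optDrrServiceInd} or the $J^*$ bookkeeping), and it uses only the one-sided inversion property \eqref{lem:lsi}, thereby avoiding the step ``apply $\psi_i^{J^*}$ to both sides'' from the paper's proof of Corollary~\ref{thm:optDrrService}, which tacitly relies on $\psi(\psi^{\downarrow}(y))\geq y$ — automatic for the floor-based unprimed $\psi_i^J$ but not guaranteed for an arbitrary $\phi_{i,j}'\in\mathscr{F}$. Your aside on why ${\newserviceBar_i}'\leq\newserviceBar_i$ need not hold is a sound sanity check; note, though, that the paper's reduction target is $\newservice_i$ rather than $\newserviceBar_i$, and a (somewhat fiddlier) domination ${\newserviceBar_i}'\leq\newservice_i$ does hold, so the ``replace-and-rerun'' route is in fact viable — your direct argument simply sidesteps that bookkeeping cleanly.
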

The proof is not given in detail, as it is similar to the proof of Corollary \ref{thm:optDrrService} after replacing functions $\phi_{i,j}$ and $\gamma_i$ with $\phi_{i,j}'$ and $\gamma_i'$, respectively.

We apply Theorem \ref{thm:optDrrServiceApp} as follows: Apply Theorem \ref{thm:optDrrServiceApp}  by replacing $\phi_{i,j}$ and $\gamma_i$ with $\phiMaxR$ and $\gammaMaxR$   defined in \eqref{eq:phimaxR} and \eqref{eq:gammaMaxrate}; also,  Apply Theorem \ref{thm:optDrrServiceApp} by replacing $\phi_{i,j}$ and $\gamma_i$ with $\phiMinL$ and $\gammaMinL$   defined in \eqref{eq:phiminL} and \eqref{eq:gammaMinlatency}; then, we take the maximum of the two strict service curves obtained in each case. 
%We call this mapping \emph{Convex, Simple Mapping}.

This can be iteratively applied: In both cases, let the initial strict service curves $\betacon{0}_i$  be defined as in Corollary \ref{thm:drrServiceConvex}. Specifically, the sequence of obtained strict service curves are thus defined by either $\betaaff{0}_i=\gammacon \circ \beta= \beta_i^{\scriptsize\textrm{convex}}$ or $\betaaff{0}_i = \max \lp\beta_i^{\scriptsize\textrm{convex}},\betaAnne_i\rp  $ and for $m\geq 1$, $\betaaff{m}_i = \max\lp{\bar{\beta}_i}^{m'},{\bar{\beta}_i}^{m''}\rp$ with %\hossein{I have rewritten this}

\begin{equation}
	\label{eq:betamaff}
	\begin{aligned}
	{\bar{\beta}_i}^{m'} & = \gammaMinL \circ \lp \beta + \deltaMinL{m-1}_i \rp_{\uparrow},\\
	{\bar{\beta}_i}^{m''} &=\gammaMaxR \circ\lp \beta+ \deltaMaxR{m-1}_i \rp_{\uparrow}.\\
	\end{aligned}
\end{equation}
Also,  $\deltaMinL{m-1}_i$ and $\deltaMaxR{m-1}_i $ are equal to
\begin{equation}
	\begin{aligned}
		&\sum_{j\neq i} \lb \phiMinL\circ\betaaff{m-1}_i-\alpha_j\oslash\betaaff{m-1}_j \rb^+ \mand\\
		&\sum_{j\neq i} \lb \phiMaxR\circ\betaaff{m-1}_i-\alpha_j\oslash\betaaff{m-1}_j \rb^+,
	\end{aligned}
\end{equation}respectively (see Fig. \ref{fig:IterServiceAff}).

Let us explain why computing the above strict service curves is simpler. The first reason is in computing the composition of $\phiMaxR$ (resp. $\phiMinL$) with another function. Observe that for a function $f\in\mathscr{F}$, $\phiMaxR \lp f(t) \rp$ (resp. $\phiMinL \lp f(t) \rp$) is equal to $ \frac{Q_j}{Q_i}f(t) + \phiMaxR(0)$ (resp. $\frac{Q_j}{Q_i- \mdelta_i}f(t) + \phiMinL(0)$), which includes  only multiplication, addition, and minimum operations. The second reason is in computing the min-plus deconvolution; min-plus convolution and deconvolution of piecewise linear convex can be computed in automatic tools, such as RTaW, very efficiently \cite[Section 4.2]{bouillard_deterministic_2018}, and as illustrated in Fig. \ref{fig:IterServiceAff}, obtained strict service curves are convex, thus computing the min-plus deconvolution with such strict service curves is much simpler than with those in Fig. \ref{fig:IterServiceRtas}. The last reason is in computing the composition of $\gammaMaxR$ (resp. $\gammaMinL$) with another function. Observe that for a function $f\in\mathscr{F}$, $\gammaMaxR \lp f(t) \rp$ (resp. $\gammaMinL \lp f(t) \rp$ ) is equal to $\rmax_i \lb f(t) - \tmax_i \rb^+$ (resp. $\rmin_i \lb f(t) - \tmin_i \rb^+$), which again includes only multiplication, addition, and maximum operations.

Alternatively, one can apply Theorem \ref{thm:optDrrServiceApp}  by replacing $\phi_{i,j}$ and $\gamma_i$ with $\phicon$ and $\gammacon$   defined in \eqref{eq:phicon} and \eqref{eq:gammaCon}; however, in this case, there is no guarantee that this version conserves convexity and we do not consider it further.

 In the examples that we tested, we observed that the iterative scheme obtained with Corollary \ref{thm:optDrrServiceApp}, our convex, full mapping, converges to the same results as the iterative scheme obtained with Theorem \ref{thm:optDrrServiceApp}, our convex, simple mapping; however, it requires more iterations (see Fig.~\ref{fig:IterServiceIndSimpler} and Fig.~\ref{fig:IterServiceAff}).

      \section{Numerical Evaluation} \label{sec:numEval}
%\begin{itemize}
%	\item For the industrial-sized case study with 3 class: run times are $110$ and $77$ minutes for Corollaries \ref{thm:optdrrServiceCon} and \ref{thm:optdrrServiceConInd}, respectively. The iteration scheme stops after 25  and 3 iterations for Corollaries \ref{thm:optdrrServiceCon} and \ref{thm:optdrrServiceConInd}, respectively.
%	\item For the industrial-sized case study with 9 class: run times are $130$ and $220$ minutes for Corollaries \ref{thm:optdrrServiceCon} and \ref{thm:optdrrServiceConInd}, respectively. The iteration scheme stops after 50  and 3 iterations for Corollaries \ref{thm:optdrrServiceCon} and \ref{thm:optdrrServiceConInd}, respectively.
%	\item we also have the complete result, obtained with Theorem \ref{thm:optDrrServiceInd} for the industrial-size case study.
%\end{itemize}

In this section, we compare the obtained delay bounds by using our new strict service curves for DRR, presented in Sections \ref{sec:serviceCurves} and \ref{sec:serviceCurveInterfer}, to those of Boyer et al.,  Bouillard, and  Soni et al. We use all network configurations that were presented by Bouillard in \cite{anne_drr} and Soni et al. in  \cite{Sch_DRR}, specifically, one single server, two illustration networks, and an industrial-sized one. For the illustration networks, we use the exact same configuration of flows and switches that Soni et al. use. For the industrial-sized network, Soni kindly replied to our e-mail request by saying that, for confidentiality reasons, they do not have the rights to provide more details about the network configuration than what is already given in \cite{Sch_DRR}. Consequently, we  use the same network but randomly choose the missing information (explained in detail in Section \ref{sec:indus}).

%\jylb{Not clear: which initialization do we use when applying Thm 5? We presented two possible ones. Do the results differ ?}

\subsection{Single Server}\label{sec:sibgleServer}
We use the exact  same configuration of flows and the server that Bouillard uses in \cite{anne_drr}. Consider a DRR subsystem with four classes of traffic: Electric protection, Virtual reality games, Video conference, and 4k videos constrained with token bucket arrival curves with bursts $b =\{42.56,2160,3240,7200\}$ kb and rates $r = \{ 8.521, 180, 162, 180\}$ Mb/s, respectively; also, the packet sizes are $\lmax = \{3.04, 12,12,12\}$ kb. The server is a constant-rate server with rate equal to $c=5$Gb/s, i.e., $\beta(t) = ct$. All classes have the same quantum equal to $16000$ bits.

The delay bounds obtained with different methods are given in Table \ref{table:delays}. Our delay bounds always improve on those of Boyer et al.  and Bouillard; when the delay is very small (electric protection) our non-convex service curves bring a considerable improvement. As discussed in Section \ref{sec:serviceCurveInterfer}, the results are the same with the non-convex, full mapping (Theorem~\ref{thm:optDrrServiceInd}) and the non-convex simple mapping (Corollary \ref{thm:optDrrServiceAppInd}). The results of the convex full and simple mappings (Corollary \ref{thm:optDrrService} and Theorem \ref{thm:optDrrServiceApp}) are also identical, but less good than the former.
%,  obtained delay bound obtained with Theorem and , our non-convex and convex full mapping, are equal to  those of obtained with Corollary \ref{thm:optDrrService} and Theorem \ref{thm:optDrrServiceApp}, our non-convex and convex simple mapping; 
Also, the results are the same for all choices of initial strict service curves. As we showed that the delay bounds of Soni et al. are incorrect we do not compute them for this example, but for the sake of comparison we will compute them for their case-studies.
\begin{table}
	\centering
		\caption{\sffamily \small Delays bounds of all classes of Section \ref{sec:sibgleServer}.}
	\resizebox{\columnwidth}{!}{
	\begin{tabular} {|c || c c c c c|}
		\hline		
		Class & Boyer et al. & Thm. \ref{thm:drrService} & Bouillard & Thm.  \ref{thm:optDrrServiceInd} & Cor. \ref{thm:optDrrServiceAppInd}\\
		\hline
		Electric protection ($\mu$s) & 52 & 44.51&52 & 44.51 & 52   \\
		\hline
		Virtual reality games (ms) & 1.75 & 1.74&1.33 & 1.32 & 1.32  \\
		\hline
		Video conference (ms)& 2.61& 2.61&1.82 & 1.81 & 1.81   \\
		\hline
		4k videos (ms)& 5.78 & 5.77&2.74 & 2.72 & 2.72 \\
		\hline
	\end{tabular}
}\label{table:delays}
\end{table}

\subsection{Illustration Networks} \label{sec:illus}
\begin{figure*}[htbp]
	\centering
	%  \title={.}
	\includegraphics[width=\linewidth]{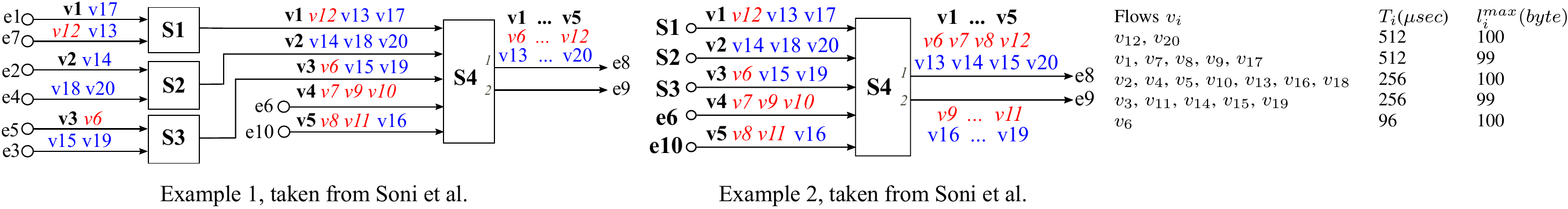}
	\caption{\sffamily \small Networks of Examples 1 and 2, taken from Soni et al \cite{Sch_DRR}. Examples 1 and 2 differ only by the configuration of the switch $S_4$.}
	\label{fig:illus_netwrok}
\end{figure*}

\begin{figure*}[htbp]
	\centering
	%  \title={.}
	\includegraphics[width=\linewidth]{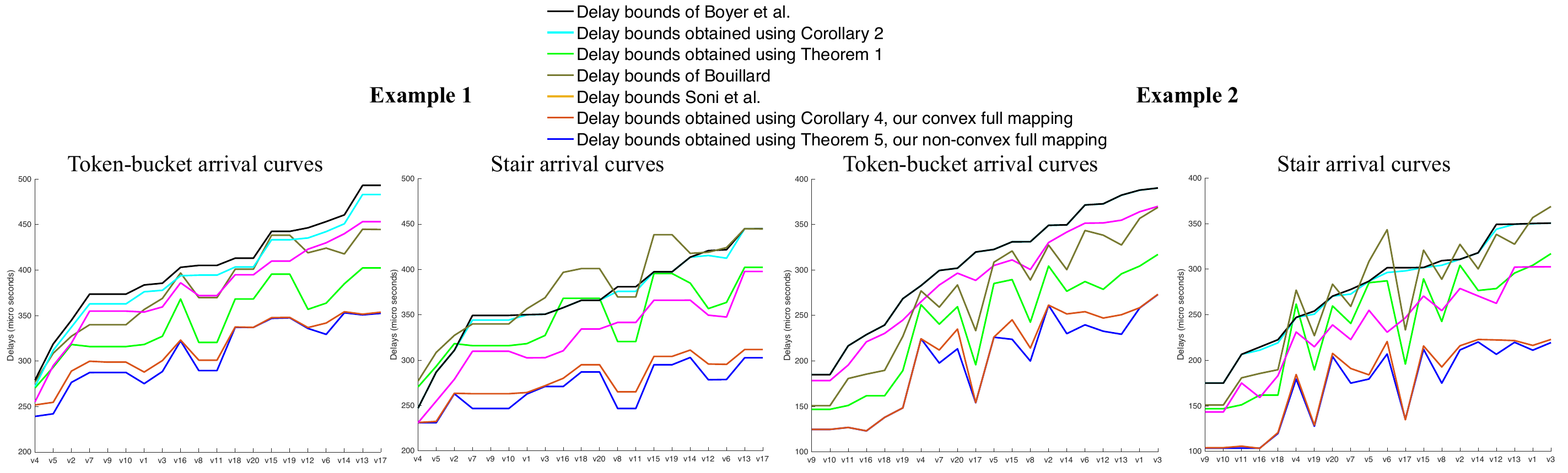}
	\caption{\sffamily \small Delay bounds of flow $v_1, v_2, \ldots, v_{20}$ in Example 1 and Example 2 of Fig.~\ref{fig:illus_netwrok}. In each example, we follow \cite{Sch_DRR} and assume once that flows are constrained by token-bucket arrival curves, and once that flows are constrained by stair arrival curves. The delay bounds of Soni et al. are taken from \cite{Sch_DRR}, and other results are computed with the RTaW online tool. First, delay bounds obtained with our new strict service curves for DRR, with no knowledge on the interfering traffic, are always better than those of Boyer et al. Second, delay bounds obtained with  our new strict service curve for DRR that accounts for arrival curve of interfering flows are always better than those of Soni et al. and  are considerably better than delay bounds of Bouillard. The obtained delay bound obtained with Theorem \ref{thm:optDrrServiceInd} and Corollary \ref{thm:optDrrServiceAppInd}, our non-convex and convex full mapping, are equal to  those of obtained with Corollary \ref{thm:optDrrService} and Theorem \ref{thm:optDrrServiceApp}, our non-convex and convex simple mapping. In each plot, flows are ordered by values of Boyer’s bound.}
	\label{fig:illus}
\end{figure*}

%\begin{figure*}[htbp]
%	\centering
%	%  \title={.}
%	\includegraphics[width=\linewidth]{}
%	\caption{\sffamily \small Delay bounds of flow $v_1, v_2, \ldots, v_{20}$ in Example 1 and Example 2. In each example, we follow \cite{Sch_DRR} and assume once that flows are constrained by token-bucket arrival curves, and once that flows are constrained by stair arrival curves. Soni et al's delay bounds are taken from \cite{Sch_DRR}, and other results are computed with the RTaW online tool. First, delay bounds obtained with   our new strict service curves of DRR, with no knowledge on the interfering traffic, are always better than Boyer et al's. Second, delay bounds obtained with  our new strict service curve for DRR that accounts for arrival curve of interfering flows are always better than Soni et al's.}
%	\label{fig:illus}
%\end{figure*}

%
%\begin{figure}[htbp]
%	\centering
%	%  \title={.}
%	\includegraphics[width=\linewidth]{}
%	\caption{\sffamily \small Delay bounds.}
%	\label{fig:}
%\end{figure}
%\begin{figure}[htbp]
%	\centering
%	%  \title={.}
%	\includegraphics[width=0.7\linewidth,height=0.25\columnwidth]{}
%	\caption{\sffamily \small Delay bounds.}
%	\label{fig:}
%\end{figure}
%\begin{figure}[htbp]
%	\centering
%	%  \title={.}
%	\includegraphics[width=0.75\linewidth,height=0.25\columnwidth]{}
%	\caption{\sffamily \small Delay bounds.}
%	\label{fig:}
%\end{figure}

Example 1 and 2 are illustrated in Fig.~\ref{fig:illus_netwrok}. We use the exact same network with the exact same configuration for flows and switches as used by Soni et al. in \cite{Sch_DRR}. Examples 1 and 2 differ only by the configuration of the switch $S_4$. Flows $\{v_1 \ldots v_5\}$, $\{v_6 \ldots v_{12}\}$, and $\{v_{13} \ldots v_{20}\}$ are assigned to class $C_1$, $C_2$, and $C_3$, respectively.
There is one DRR scheduler at every switch output port; what we called ``flow" earlier in the paper corresponds here to a class hence $n=3$. Inside a class, arbitration is FIFO (all packets of all flows of a given class are in the same FIFO queue). Also, as in \cite{Sch_DRR}, we assume that queuing is on output ports only.
All classes have the same quantum equal to $199$ bytes. The rate of the links are equal to $c =100$ Mb/s, and every switch $S_i$ has a switching latency equal to $16 \mu$s. Every flow $v_i$ has a maximum packet size $\lmax_i$ and minimum packet arrival $T_i$. Hence, flow $v_i$ is constrained by a token-bucket arrival curve with rate equal to $\frac{\lmax_i}{T_i}$ and burst equal to $\lmax_i$; also, it is constrained by a stair arrival curve given by $\lmax_i\lceil \frac{t}{T_i} \rceil$.

For the sake of comparison, as Soni et al. do not consider grouping and offsets (explained in Section \ref{sec:soni}) in these two examples, we also do not consider them. This means that the arrival curve we use for bounding the input of a class at a switch is simply equal to the sum of arrival curves expressed for every member flow. Arrival curves are propagated using the delay bounds computed at the upstream nodes. We illustrate  the reported values in \cite{Sch_DRR} for the delay bounds of Soni et al. For the other results, we use the RTaW online tool (Fig.~\ref{fig:illus}). As explained in Section \ref{sec:backg:NC}, RTaW provides all the necessary operations to implement our new strict service curves for DRR. First, observe that delay bounds obtained with our new strict service curves for DRR, with no knowledge on the interfering traffic, are always better than those of Boyer et al. Second, delay bounds obtained with our new strict service curve for DRR that accounts for arrival curve of interfering flows are always better than the (incorrect) ones of Soni et al. and are considerably better than Bouillard's. The obtained delay bounds using Theorem \ref{thm:optDrrServiceInd}, our non-convex full mapping, are better than or equal to those of obtained using Corollary \ref{thm:optDrrServiceAppInd}, its convex version; also, they are equal to those of Corollary \ref{thm:optDrrService}, our non-convex simple mapping. Note that the results do not differ whatever the initial strict service curves are. When using token-bucket arrival curves, the run-times (on the RTaW online tool) of Theorem~\ref{thm:optDrrServiceInd} and Corollary \ref{thm:optDrrService} are in the order of $3$ minutes; for their convex versions, Corollary \ref{thm:optDrrServiceAppInd} and Theorem \ref{thm:optDrrServiceApp}, they are in the order of $30$ seconds; when using stair arrival curves, the run-times (on the RTaW online tool) of Theorem \ref{thm:optDrrServiceInd} and Corollary \ref{thm:optDrrService} are in the order of $5$ minutes; for their
 convex versions, Corollary \ref{thm:optDrrServiceAppInd} and Theorem \ref{thm:optDrrServiceApp}, they are in the order of  $1$~minute and $45$~seconds, respectively. 
 % (all times are obtained with a $2.6$ GHz $6$-Core Intel Core i$7$ computer).
%\jylb{give some details of the computer used (for example the processor frequency)}.

%\hossein{added this. I think it is better not to report exact numbers here as the run-time varies from time to time (it is an online tool)}

%\jylb{Compare Theorem 3, Corollaries 3 and 4 in terms of: delay bound / complexity (run time, memory).}

%With RTaW, $\phi_{i,j}$, defined in \eqref{eq:phi}, can be implemented as an ultimately pseudo-periodic function; also, $\phiMaxR$ and $\phiMinL$ can be defined as affine functions, and $\phicon$ as the minimum of them. Then, function $\psi_i$ can be defined as in \eqref{eq:psi}. As RTaW compute lower-pseudo inverse of a function, $\gamma_i$ can be then computed; by taking the  composition of $\gamma_i$ and $\beta$, the aggregate strict service curve, our strict service curves of DRR, with no knowledge on the interfering traffic, are obtained.

\subsection{Industrial-Sized Network} \label{sec:indus}
\begin{figure}[htbp]
	\centering
	%	\title={.}
	\includegraphics[width=\linewidth]{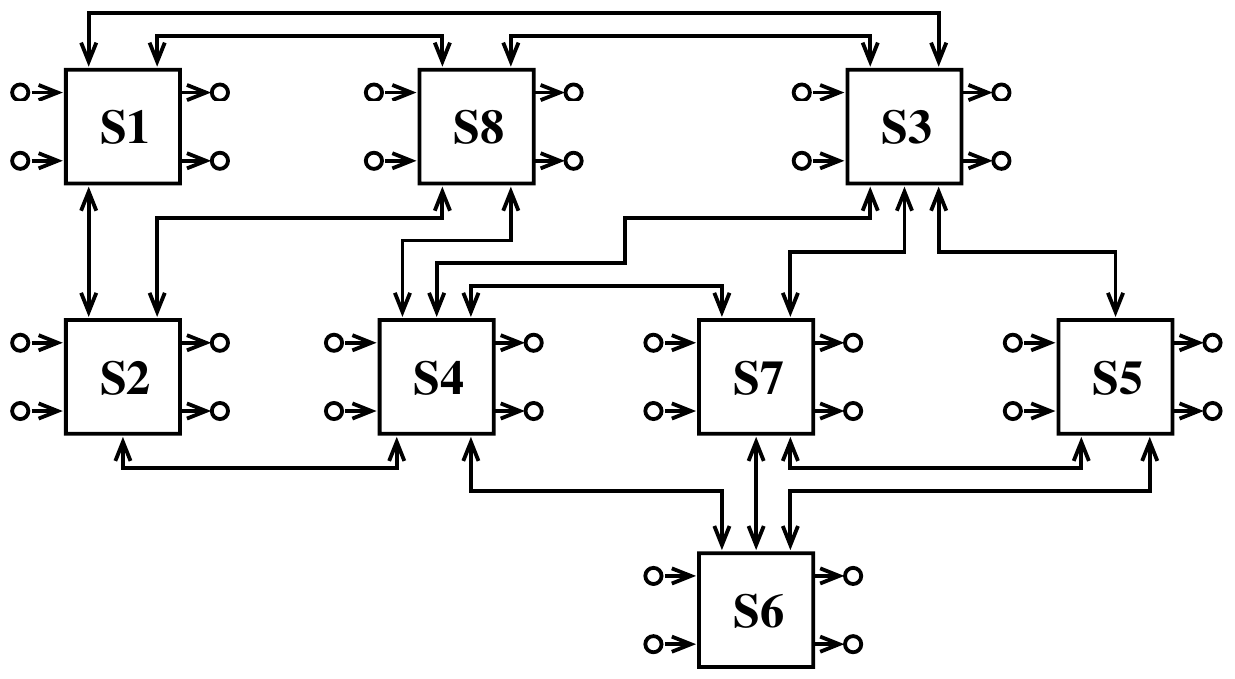}
	\caption{\sffamily \small Industrial-sized network topology. The figure is taken from \cite{1647738}.}
	\label{fig:indusNet}
\end{figure}
\begin{figure}[htbp]
	\centering
	%	\title={.}
	\includegraphics[width=\linewidth]{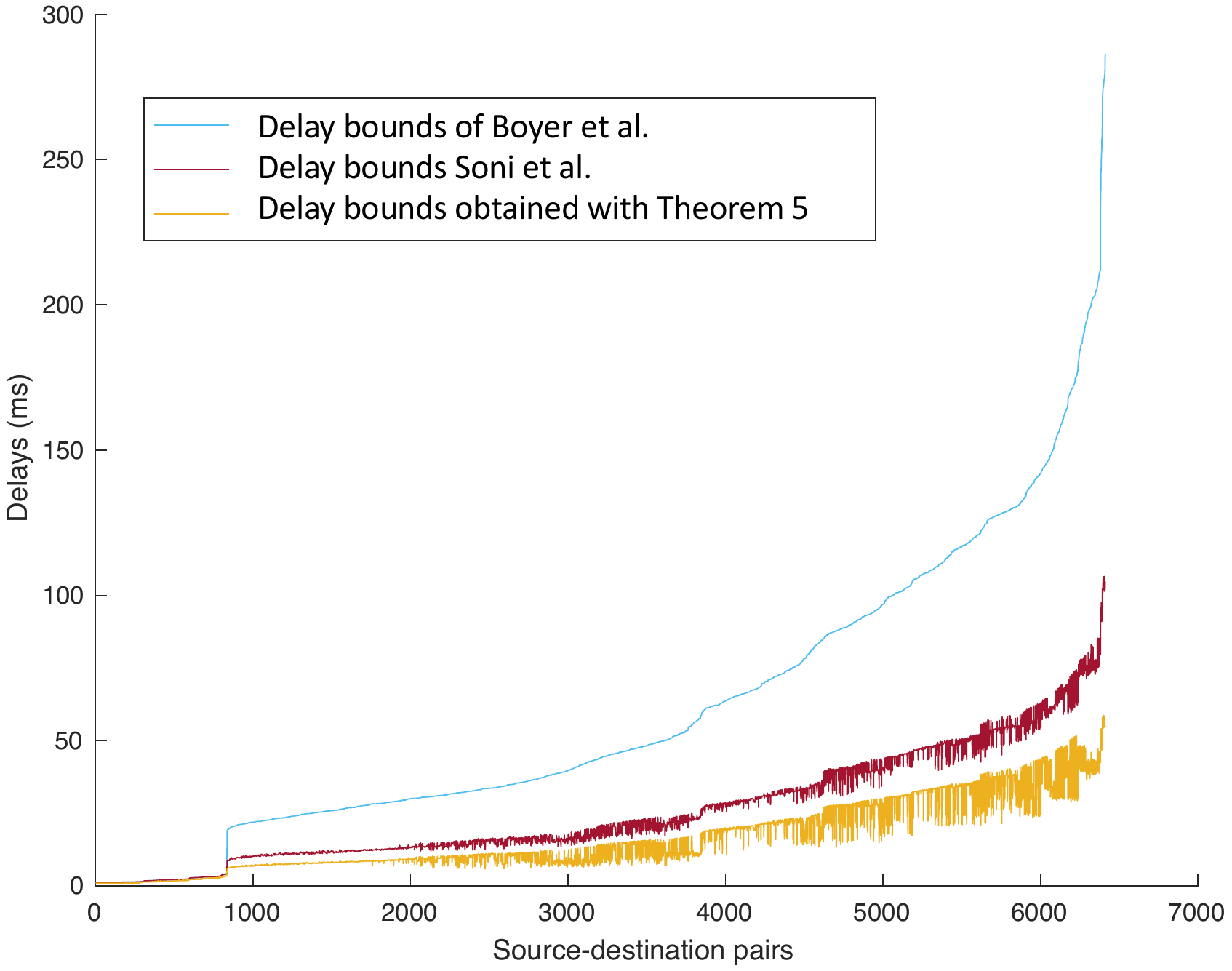}
	\caption{\sffamily \small Delay bounds of the industrial case for all source-destination pairs in the system. The comparison with delay bounds with  Bouillard's method is illustrated in Fig.~\ref{fig:indusAnne}. The obtained delay bound obtained with Theorem \ref{thm:optDrrServiceInd} and Corollary \ref{thm:optDrrServiceAppInd}, our non-convex and convex full mapping, are equal to  those of obtained with Corollary \ref{thm:optDrrService} and Theorem \ref{thm:optDrrServiceApp}, our non-convex and convex simple mapping. Source-destination paths are ordered by values of Boyer’s bound. }
	\label{fig:indusSoni}
\end{figure}
\begin{figure}[htbp]
	\centering
	%	\title={.}
	\includegraphics[width=\linewidth]{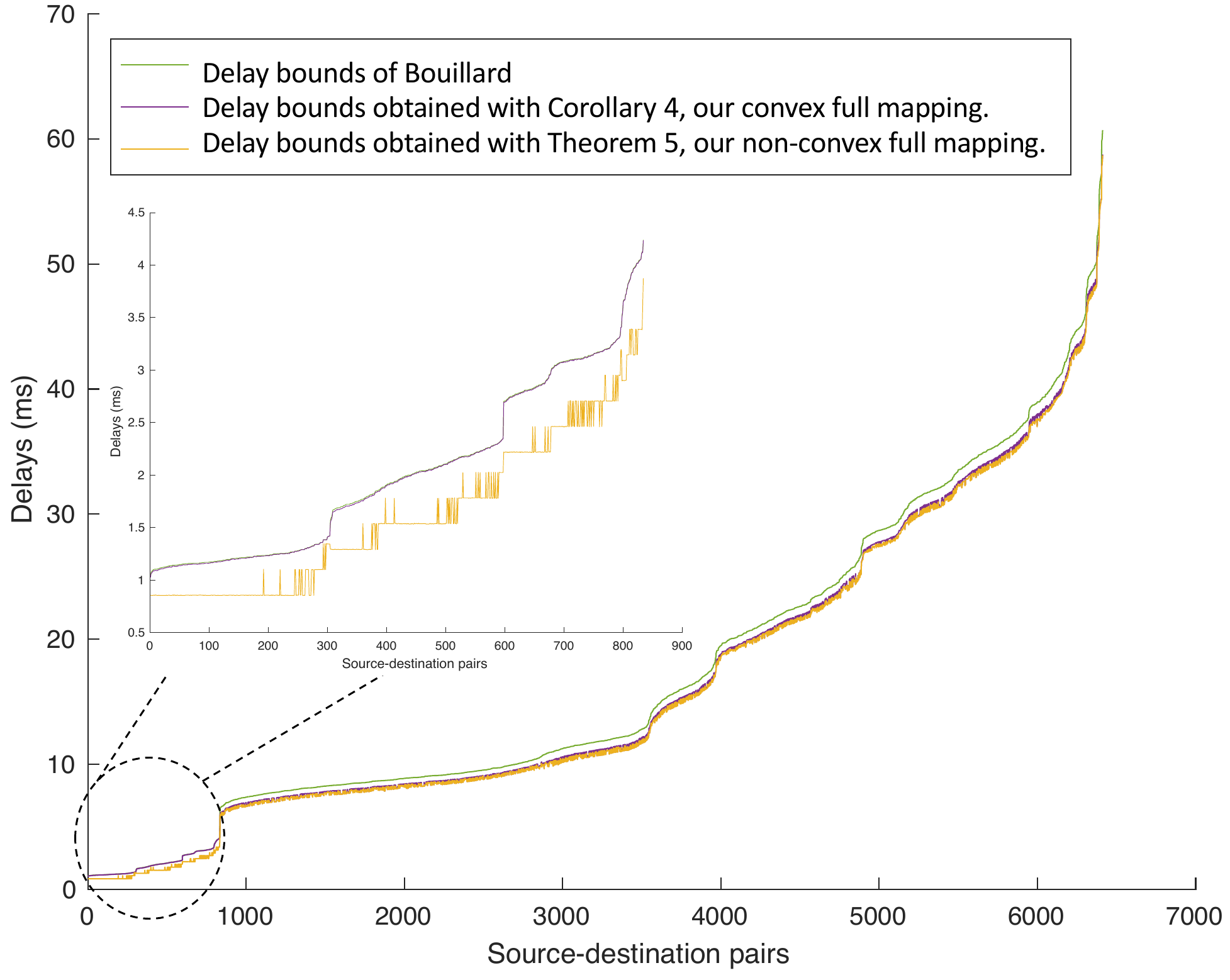}
	\caption{\sffamily \small Delay bounds of the industrial case for all source-destination pairs in the system.  The obtained delay bound obtained with Theorem \ref{thm:optDrrServiceInd} and Corollary \ref{thm:optDrrServiceAppInd}, our non-convex and convex full mapping, are equal to  those of obtained with Corollary \ref{thm:optDrrService} and Theorem \ref{thm:optDrrServiceApp}, our non-convex and convex simple mapping. Source-destination paths are ordered by values of Bouillard’s bound. }
	\label{fig:indusAnne}
\end{figure}
%\begin{figure}[htbp]
%	\centering
%	%	\title={.}
%	\includegraphics[width=\linewidth]{}
%	\caption{\sffamily \small Relative improvements of delay bounds obtained with Theorem  \ref{thm:optDrrServiceInd}  compared to  those of Boyer et al. for one random configuration of the industrial case, when only varying the link rate $c \in [36.8 , 1600]$ Mb/s  to  simulate different workloads (the rate $c$ cannot be less than $36.8$ Mb/s as the delay bound of some source-destination pairs become unbounded.). For each choice of the  link rate $c$, the relative improvements for all source-destination pairs are computed; then, the 99th percentile, the 75th percentile, the median, the 25th percentile, and 1st percentile of them are plotted. The link between $S_4$ and $S_3$ achieves the maximum utilization.}
%	\label{fig:indus_util}
%\end{figure}

%\jylb{Explain why we don't use Theorem~3}

%\jylb{Compare Corollaries 3 and 4 in terms of: delay bound / complexity (run time, memory).}

%\jylb{Your text says that every flow is assigned a quantum, which suggests that there is one DRR queue per flow. I assumed this is not true and there is one DRR queue per class.}

We use the network of Fig.~\ref{fig:indusNet}; it corresponds to a test configuration provided by Airbus in \cite{Grieu-line-shaping}. The industrial-sized case study that Soni et al. use in \cite{Sch_DRR} is based on this network in \cite{1647738}. We combine the available information in both papers to understand this network. It includes 96 end-systems, 8 switches, 984 flows, and 6412 possible paths. The rate of the links are equal to $c =100$ Mb/s, and every switch $S_i$ has a switching latency equal to $16 \mu$s. We find that each switch has 6 input and 6 output end-systems. Three classes of flows are considered: critical flows, multimedia flows, and best-effort flows. There is one DRR scheduler at every switch output port with $n=3$ classes. At every DRR scheduler, the quanta are 3070 bytes for the critical class, 1535~bytes for the multimedia class, and 1535~bytes for the best-effort class.
128 multicast flows, with 834 destinations, are critical; they have a maximum packet-size equal to 150 bytes and their minimum packet arrival time is between $4$ and $128$ms. 500 multicast flows, with 3845 destinations, are multimedia  and their class has a quantum equal to 1535 bytes; they have a maximum packet-size equal to 500 bytes; and their minimum packet-arrival time is between $2$ and $128$ms. 266 multicast flows, with 1733 destinations, are best-effort; they have a maximum packet-size equal to 1535 bytes; and their minimum packet arrival time is between $2$ and $128$ms. For every flow, the path from the source to a destination can traverse at most $4$ switches. Specifically, $1797$, $2787$, $1537$, and $291$ source-destination paths have $1$, $2$, $3$, and $4$ hops, respectively. We choose the paths randomly and satisfy all these constraints.

Due to the limited expressiveness of the language used by the RTaW online tool, we could not implement the industrial-size network there. Therefore, we used MATLAB, which has the required expressiveness. The obtained delay bounds are quasi identical for the full and simple versions of the mappings, therefore we illustrate results only for Theorem \ref{thm:optDrrServiceInd} (non-convex full mapping) and Corollary \ref{thm:optDrrServiceAppInd} (convex full mapping).%, our non-convex and convex full mapping,  to those of Corollary \ref{thm:optDrrService} and Theorem \ref{thm:optDrrServiceApp}, our non-convex and convex simple mapping, respectively; hence, we only illustrated the former delay bounds. 

Note that the results are identical for both mentioned choices of initial strict service curves.  We also computed the delay bounds obtained with the strict service curve of Boyer et al., with Bouillard's strict service curve  and with the correction term of Soni et al. In all cases, and as in \cite{Sch_DRR}, the arrival curve used for bounding the input of a class at a switch incorporates the effects of delay bounds computed upstream, as well as grouping (line shaping) and offset (Section~\ref{sec:soni}); furthermore, the offsets are such that they create maximum separation, as with \cite{Sch_DRR}. We find that our bounds significantly improve upon the existing bounds, even the incorrect ones (Fig.~\ref{fig:indusSoni}). Moreover, we always improve on Bouillard's delay bounds. Also, delay bounds obtained using Theorem \ref{thm:optDrrServiceInd} are considerably improved compared to its convex version for flows with low delay bounds.

%We also varied the link rate $c$ to simulate different workloads (Fig.~\ref{fig:indus_util}).\jylb{this is too little -- I propose to remove Fig.~\ref{fig:indus_util}.}

Remark on run-times: For the industrial sized described above,  run-times (on a $2.6$ GHz $6$-Core Intel Core i$7$ computer) of Theorem   \ref{thm:optDrrServiceInd} and its convex version are $96$ and $72$ minutes, respectively; however,  run-times of Corollary   \ref{thm:optDrrService} and its convex version are higher and are $130$ and $103$ minutes, respectively. This is because the number of classes is small, i.e., 3 classes. To increase this, we divided at uniformly random flows of each class to three new classes, which results in 9 classes in total. By doing so, run-times of Theorem   \ref{thm:optDrrServiceInd} and its convex version are $275$ and $220$ minutes, respectively; however,  run-times of Corollary   \ref{thm:optDrrService} and its convex version are lower and are  $162$ and $130$ minutes, respectively. This supports the fact that  computation of strict service curves of Corollary   \ref{thm:optDrrService} is faster than those of Theorem   \ref{thm:optDrrServiceInd}, when the number of flows is large.

	\section{Conclusion}
\label{sec:conc}
The method of the pseudo-inverse enables us to perform a detailed analysis of DRR and to obtain strict service curves that significantly improve the previous results. Our results use the network calculus approach and are mathematically proven, unlike some previous delay bounds that we have proved to be incorrect. Our method assumes that the aggregate service provided to the DRR subsystem is modelled with a strict service curve. Therefore it can be recursively applied to hierarchical DRR schedulers as found, for instance, with class-based queuing. 

	\bibliographystyle{IEEEtran}
\vspace{-0.05in}
%\jylb{Bib: drr $\to$ DRR;  L. Boudec$\to$ Le Boudec}
\bibliography{ref,leb,boyer}
%\appendix
	\clearpage
\appendices

\twocolumn[
\begin{@twocolumnfalse}
	\begin{center}   
		{\Large\textbf{Supplementary Material}}\\
		{\textbf{Deficit Round-Robin: A Second Network Calculus Analysis}\\
			\textit{Seyed Mohammadhossein Tabatabaee, Jean-Yves Le Boudec}}
	\end{center}
\end{@twocolumnfalse}
]
% !TeX root = mainDRR.tex

\section{Proofs} \label{sec:proof}
%Here we prove Theorems \ref{thm:drrService}-\ref{thm:optDrrServiceApp}.
\subsection{Proof of \thref{thm:drrService}} \label{sec:proodNonconvex}
The idea of the proof is as follows. We consider a backlogged period $(s,t]$ of flow of interest $i$, and we let $p$ be the number of complete service opportunities for flow $i$ in this period, where a complete service opportunity starts at line \ref{line:start} and ends at line \ref{line:end} of Algorithm \ref{alg:DRR}. $p$ is upper bounded by a function of the amount of service received by flow $i$, given in \eqref{eqn:p}. Given this, the amount of service received by every other flow $j$ is upper bounded by a function of the amount of service received by flow $i$, given in \eqref{eq:jksa2}. Using this result gives an implicit inequality for the total amount of service in \eqref{eq:totService}. By using the technique of pseudo-inverse, this inequality is inverted and provides a lower bound for the amount of service received by the flow of interest.

%\jylb{define complete service opportunities}
%\ihossein{done, also I have added comments in Algorithm.}

%\jylb{Define what a ``complete service opportunity" is}
% Finally, simplifications of the right-hand side of \eqref{eqn:TotService} gives lower bounds to $\beta_f$, resulting in \eqref{eq:WRR} and \eqref{eq:rl}.
%\jylb{Your lemmas are simply steps in the proof. No need for lemmas in such a case. }
From \cite[Sub-goal 1]{boyer_NC_DRR}, the number $p$ of complete service opportunities for flow of interest, $i$, in $(s,t]$, satisfies
\begin{equation}
	\label{eq:jkhsdef8}
	D_i(t) - D_i(s) \geq pQ_i - \mdelta_i
\end{equation}
Therefore, as $p$ is integer:
\begin{equation}\label{eqn:p}
	p \leq \left \lfloor \frac{D_i(t) -D_i(s) + \mdelta_i}{Q_i} \right \rfloor
\end{equation}
Furthermore, it is shown in the proof of \cite[Sub-goal 2]{boyer_NC_DRR} that
\begin{equation}
	D_j(t) -D_j(s) \leq (p+1)Q_j + \mdelta_j
\end{equation}
Using \eqref{eqn:p} we obtain
\begin{equation}
	D_j(t) -D_j(s) \leq \underbrace{\left \lfloor \frac{D_i(t) - D_i(s) + \mdelta_i}{Q_i} \right \rfloor Q_j  + (Q_j + \mdelta_j)}_{ \phi_{i,j} \lp D_i\lp t \rp - D_i\lp s \rp \rp}
	\label{eq:jksa2}
\end{equation}
Next, as the interval $(s,t]$ is a backlogged period, by the definition of the strict service curve for the aggregate of flows we have
\begin{equation}
	\beta(t-s) \leq (D_i(t) - D_i(s)) + \sum_{j \neq i} \lp D_j(t) - D_j(s) \rp
\end{equation}
We upper bound the amount of service to every other flow $j$ by applying \eqref{eq:jksa2}:
\begin{equation}
	\beta(t-s) \leq \underbrace{(D_i(t) - D_i(s)) + \sum_{j,j \neq i} \phi_{i,j} \left( D_i(t) - D_i(s) \right)}_{\psi_i \lp D_i(t) - D_i(s) \rp }
	\label{eq:totService}
\end{equation}
Then we invert \eqref{eq:totService} using \eqref{lem:lsi} and obtain
\begin{equation}
	D_i(t) - D_i(s) \geq \psi_i^{\downarrow}(\beta(t-s))
\end{equation}	
Lastly, we want to compute $\psi_i^{\downarrow}$. Observe that, by plugging $\phi_{i,j}$ in \eqref{eq:psi},  $\psi_i(x) = x  + \left\lfloor \frac{x + \mdelta_i}{Q_i} \right\rfloor \lp \sum_{j \neq i}Q_j \rp  + \sum_{j \neq i}\lp  Q_j + \mdelta_j \rp$; as there is no plateau in $\psi_i$, its lower-pseudo inverse is simply its inverse which is obtained by flipping the axis (Fig.~\ref{fig:psi_inverse}), and is obtained as
%very similar to what is done in  \cite[Fig.~8.9]{bouillard_deterministic_2018}, 
\begin{align} \label{eq:psiinv}
	\psi_i^{\downarrow} (x) = &\lp \lambda_1 \otimes \nu_{Q_i,Q_{\tot}} \rp \lp  \lb x - \psi_i \lp Q_i - \mdelta_i \rp \rb^+ \rp \\ \nonumber&+ \min \lp [x - \sum_{j \neq i} \lp Q_j + \mdelta_j \rp ] ^+ , Q_i - \mdelta_i  \rp
\end{align}
$\psi_i^{\downarrow}$ is illustrated in Fig.~\ref{fig:psi_inverse}. In \eqref{eq:psiinv}, observe that the term with $\min$ expresses the finite part at the beginning between $0$ and $ \psi_i \lp Q_i - \mdelta_i \rp$; also, observe that the term with the min-plus convolution expresses the rest (see Fig.~\ref{fig:minplus}.b with $a = Q_i$ and $b = Q_\tot = \sum_{j=1}^{n} Q_j$.).
\begin{figure}[htbp]
	\centering
	%	\title={.}
	\includegraphics[width=\linewidth]{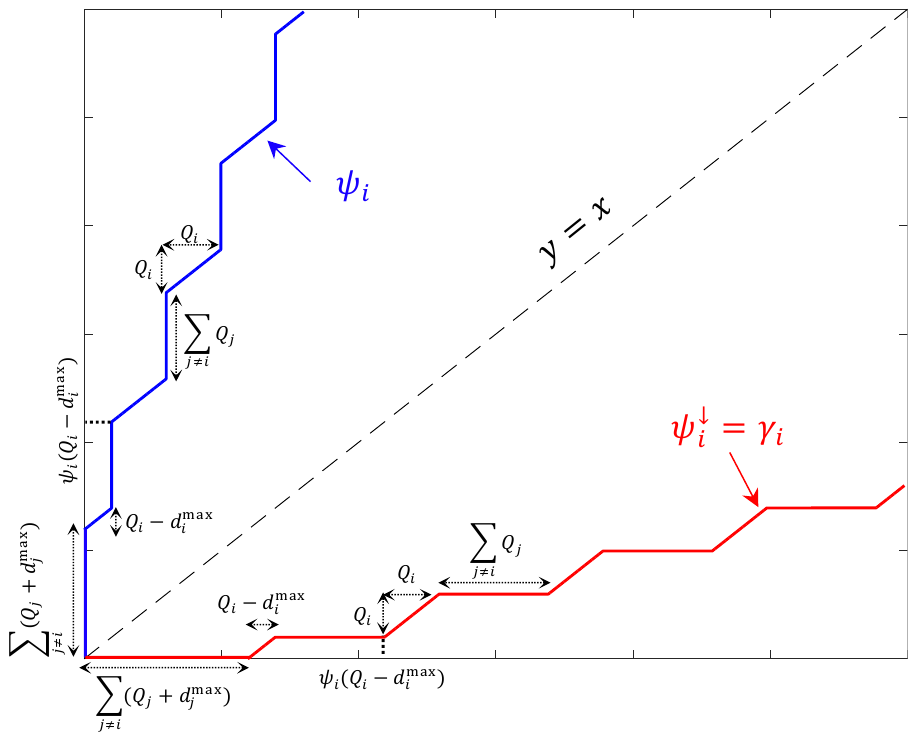}
	\caption{\sffamily \small Illustration of $\psi_i$ and its lower-pseudo inverse $\psi_i^{\downarrow}$, equal to $\gamma_i$, defined in \eqref{eq:psi} and \eqref{eq:psiinv}, respectively. Function $\gamma_i^J$ has the same form as $\gamma_i$.  
		%		 The converged strict service curves of Corollary \ref{thm:optdrrServiceCon} and \ref{thm:optdrrServiceAff} are equal
	}
	\label{fig:psi_inverse}
\end{figure}
\qed

\subsection{Proof of Theorem \ref{thm:serviceTight}} \label{sec:proofServiceTight}

We use the following Lemma about the lower pseudo-inverse technique.

\begin{lemma}[Lemma 14 \cite{tabatabaee2020interleaved}] \label{lsi:lem}
	For a right-continuous function $f$ in $\mathscr{F}$ and $x,y$ in $\mathbb{R^+}$, $f^{\downarrow}\left(y \right) = x$ if and only if
	$
	f(x) \geq y $ and there exists some $\varepsilon>0$ such that $ \forall x' \in (x-\varepsilon, x), f(x') < y
	$.
\end{lemma}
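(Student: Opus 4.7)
The plan is to prove both implications of the biconditional using two facts: that $f\in\mathscr{F}$ is wide-sense increasing, so the level set $E_y \isdef \{u\in\mathbb{R}^+ \mid f(u)\geq y\}$ is upward-closed in $\mathbb{R}^+$; and that $f$ is right-continuous. By definition of the lower pseudo-inverse, $f^{\downarrow}(y)=\inf E_y$, and both formulations in the paper agree for wide-sense increasing $f$.

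For the forward direction, assume $f^{\downarrow}(y)=x$. First I would argue that $E_y$, being upward-closed and nonempty (since $x\in\mathbb{R}^+$ is finite), must be of the form $[x,\infty)$ or $(x,\infty)$. If it were $(x,\infty)$, then taking any decreasing sequence $u_n\downarrow x$ with $u_n>x$ gives $f(u_n)\geq y$, and right-continuity of $f$ promotes this to $f(x)=\lim_{n\to\infty} f(u_n)\geq y$, contradicting $x\notin E_y$. Hence $E_y=[x,\infty)$, which yields $f(x)\geq y$. Second, every $u<x$ satisfies $u\notin E_y$ (otherwise $x$ would not be the infimum of $E_y$), that is $f(u)<y$; so the locality condition holds in fact for every $\varepsilon>0$ (e.g.\ $\varepsilon=1$, with the convention that the interval is intersected with $\mathbb{R}^+$ when $x$ is small).

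For the reverse direction, assume $f(x)\geq y$ together with the existence of $\varepsilon>0$ such that $f(u)<y$ for all $u\in(x-\varepsilon,x)$. The first hypothesis gives $x\in E_y$, hence $f^{\downarrow}(y)=\inf E_y\leq x$. For the opposite inequality, I argue by contradiction: suppose $f^{\downarrow}(y)<x$, so there exists $u_0\in E_y$ with $u_0<x$. Pick any $u'\in \bigl(\max(u_0,x-\varepsilon),\,x\bigr)$; this interval is nonempty because $u_0<x$ and $x-\varepsilon<x$. Then $u'>u_0$ and the monotonicity of $f$ force $f(u')\geq f(u_0)\geq y$, while the second hypothesis requires $f(u')<y$, a contradiction. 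Thus $f^{\downarrow}(y)=x$.

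The main delicate point is the forward direction, specifically the reliance on right-continuity to promote the limit inequality $\lim_{u\downarrow x} f(u)\geq y$ to the pointwise statement $f(x)\geq y$. Without right-continuity, $x$ could be an attained-from-above jump point at which $f(x)<y$, and the lemma would fail; for left-continuous or general monotone functions one only obtains $f(x+)\geq y$ instead. All remaining steps are elementary manipulations of the infimum characterization of $f^{\downarrow}$ and the monotonicity of $f$.
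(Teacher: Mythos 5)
Your proof is correct. Note that the paper does not prove this lemma itself --- it imports it as Lemma 14 of the cited IWRR analysis --- so there is no in-paper proof to compare against; your argument via the level set $E_y=\{u\mid f(u)\geq y\}$ being upward-closed, the case split $[x,\infty)$ versus $(x,\infty)$ resolved by right-continuity, and the elementary contradiction for the converse is exactly the standard route, and you correctly identify right-continuity as the one hypothesis that cannot be dropped.
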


%\begin{proof}
%	
%	$\Rightarrow$: 
%	
%	Let $S=\lc x', f(x')\geq y\rc$ so that $x=\inf S$. From the definition of an $\inf$, there exists a sequence $x_n$ such that $x_n\in S$ for all $n$, $x_n\geq x$, and $\lim_{n\to \infty}x_n=x$. Since $f$ is right-continuous, $\lim_{n\to \infty}f(x_n)=f(x)$, which shows that $f(x)\geq y$. Also, again by definition of an $\inf$, any $x'<x$ does not belong to $S$, i.e. $\forall x' < x, f(x') < y$.
%	
%	$\Leftarrow$:
%	
%	By the first part of the hypothesis, $x\in S$ therefore $x\geq \inf S = f^{\downarrow}\left(y \right)$.
%	Let also $S'=\lc x', f(x')< y\rc$ so that $f^{\downarrow}\left(y \right)=  \sup S'$. By the second part of the hypothesis, $S'$ contains the interval $(x-\varepsilon,x)$ hence $\sup S'\geq x$, which shows that $f^{\downarrow}\left(y \right)\geq x$. Combining the two shows that $f^{\downarrow}\left(y \right)= x$.
%	
%\end{proof}

%We then give a sequence of packets for all flows. For flow $j$, as explained in Section \ref{sec:drr}, we have $Q_j \geq \lmax_j$; thus, there exist an integer $n_j \geq 1$ such that $Q_j = n_j \lmax_j + (Q_j \mymod \lmax_j)$. Then, let $\lmod_j = (Q_j \mymod \lmax_j)$, $\lmodm_j = \lmod_j - \epsilon $, $\lmodp_j = \lmod_j + \epsilon $. With these definitions and as $\mdelta_j = \lmax_j - \epsilon$, we have $Q_j = n_j \lmax_j + \lmodm_j$, $Q_j - \mdelta_j = (n_j - 1)\lmax_j + \lmodp_j$, and $Q_j  + \mdelta_j = (n_j  + 1)\lmax_j + \lmodm_j$.

\begin{proof}[Proof of Theorem \ref{thm:serviceTight}]
	\begin{figure*}[htbp]
		\centering
		\includegraphics[width=0.9\linewidth]{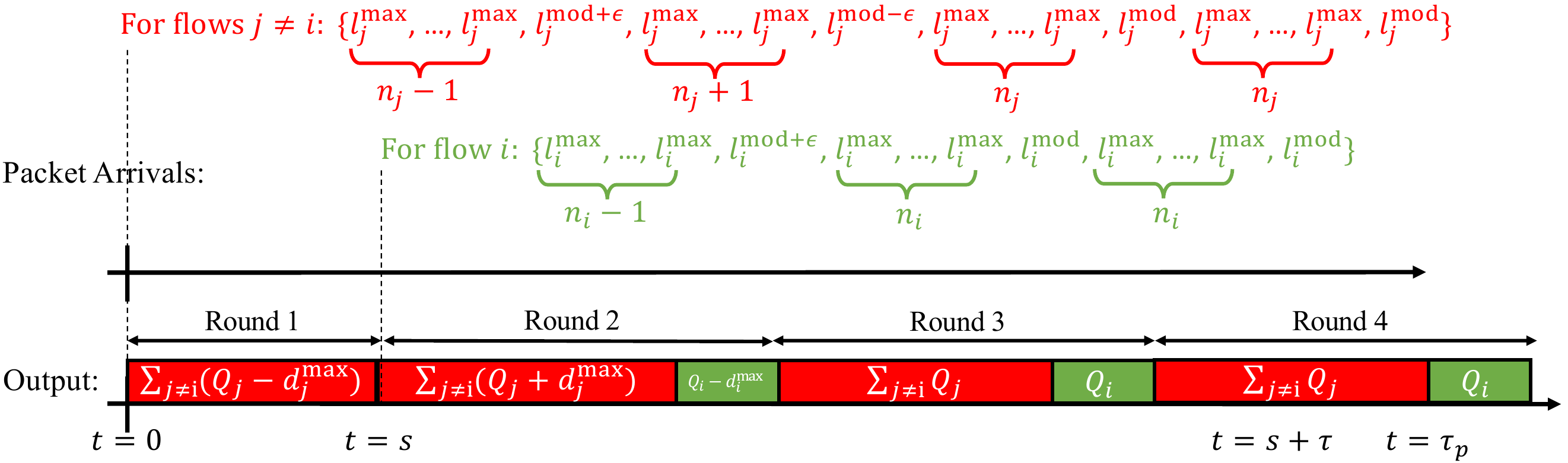}
		\caption{\sffamily \small Example of the trajectory scenario presented in Section \ref{sec:proofServiceTight} with $p = 2$.}
		\label{fig:tight}
	\end{figure*}
	%This theorem can be written formally as below:
	%	\begin{equation}\label{eqn;tight}
	%	\begin{aligned}
	%	&\forall i \in \text{flows of system}, \, \forall \tau \geq 0 :\\
	%	&\exists s \geq 0, \, (s,s+\tau] \, i\text{ is backlogged}\\ &\text{and }R_i^*(s + \tau) - R_i^*(s) = \beta_i(\tau)
	%	\end{aligned}
	%	\end{equation}
	%	
	%	Hence, we should prove \eqref{eqn;tight}.
	We prove that, for any value of the system parameters, for any $\tau >0$, and for any flow $i$, there exists one trajectory of a system such that
	\begin{equation}\label{eqn;tight}
		\begin{aligned}
			&\exists s \geq 0, \, (s,s+\tau] \, \text{ is backlogged for flow $i$}\\ &\text{and }D_i(s + \tau) - D_i(s) = \beta_i^0(\tau)
		\end{aligned}
	\end{equation}
	
	% First, given a set of system parameters, flow of interest $i$, and a backlogged duration $\tau$, we construct a trajectory by defining the input and the output such that \eqref{eqn;tight} holds. Then, we verify that all requirements and assumptions are satisfied in this trajectory.\\
	\textbf{Step 1: Constructing the Trajectory}
	%\hossein{changed "simulation"to "trajectory" everywhere}
	
	1) We use the following packet lengths for each flow $j$:  As explained in Section \ref{sec:drr}, we have $Q_j \geq \lmax_j$; thus, there exist an integer $n_j \geq 1$ such that $Q_j = n_j \lmax_j + (Q_j \mymod \lmax_j)$. Then, let $\lmod_j = (Q_j \mymod \lmax_j)$, $\lmodm_j = \lmod_j - \epsilon $, $\lmodp_j = \lmod_j + \epsilon $. 
	
	2) Flows are labeled in order of quanta, i.e., $Q_j\leq Q_{j+1}$.
	
	3) At time $0$, the server is idle, and  the input of every queue $j \neq i$ is a bursty sequence of packets as follows:
	\begin{itemize}
		\item First, $(n_j -1)$ packets of length $\lmax_j$ followed by a packet of length $\lmodp_j$; 
		\item Second, $(n_j +1)$ packets of length $\lmax_j$ followed by a packet of length $\lmodm_j$. Note that if $\lmod_j=0$, the sequence can be changed to $n_j$ packets of length $\lmax_j$ followed by a packet of length $\lmax_j - \epsilon$ and the rest of the proof remains the same; 
		\item Third, $\lp \lfloor \frac{\beta_i^0(\tau) + \mdelta_i}{Q_i} \rfloor - 1 \rp $ times of a sequence of $n_j $ packets of length $\lmax_j$ followed by a packet of length $\lmod_j$.
	\end{itemize}
	
	4) Let flow $i'$ be the first flow that is visited after flow $i$ by the DRR subsystem, i.e., $i'= (i+1) \mymod n $. The input of flow $i'$ arrives shortly before all other flows $j \neq i$ at time 0.
	
	5) The output of the system is at rate $K$ (the Lipschitz constant of $\beta$) from time $0$ to times $s$, which is defined as the time at which queue $i$ is visited in the second round, namely
	\begin{equation}\label{eq-kjlsda}
		s=\frac{1}{K}\sum_{j\neq i} \lp Q_j - \mdelta_j \rp 
	\end{equation}
	It follows that  \begin{equation}\label{eq-asfdlsda}\forall t\in[0,s], D(t)=Kt\end{equation}
	
	%Let $s$ be the time that the second service for flow $i'$ starts.
	
	6) The input of queue $i$ starts just after time $s$, with a bursty sequence of packets as follows:
	\begin{itemize}
		\item First, $(n_i -1)$ packets of length $\lmax_i$ followed by a packet of length $\lmodp_i$; 
		\item Second, $\lp \lfloor \frac{\beta_i^0(\tau) + \mdelta_i}{Q_i} \rfloor \rp $ times of a sequence of $n_i $ packets of length $\lmax_i$ followed by a packet of length $\lmod_i$.
	\end{itemize}
	
	7) After time $s$, the output of the system is equal to the guaranteed service; by 3) and 6), the busy period lasts for at least $\tau$, i.e.,
	\begin{equation}\label{eq-jhgdsf}
		\forall t\in [s,s+\tau], D(t)=D(s)+\beta(t-s)
	\end{equation}  In particular,
	\begin{equation}
		\label{eqn:3}
		D(s + \tau) - D(s) = \beta(\tau)
	\end{equation}
	
	\textbf{Step 2: Analyzing the Trajectory}
	
	Let $p$ be the number of complete services for flow $i$ in $(s, s + \tau ]$, and let $\tau_p$ be the start of the first service for flow $i$ after these $p$ services. We want to prove that 
	\begin{equation}\label{eq:phiTight}
		D_j(\tau_p) - D_j(s) = \phi_{i,j} \lp D_i(s + \tau) - D_i(s) \rp
	\end{equation}
	
	We first analyze the service received by every other flow $j \neq i$. First, observe that every $j \neq i$ sends $(n_j -1)$ packets of length $\lmax_j$ followed by a packet of length $\lmodp_j$ in the first service after $t = 0$; this is because at the end of serving these packets the deficits of flow $j$ becomes $\mdelta_j$ and the head-of-the-line packet has a length $\lmax_j > \mdelta_j$. Second, for the first service after time $s$,  every other flow $j \neq i$ sends $(n_j  + 1)$ packets of length $\lmax_j$ followed by a packet of length $\lmodm_j$, and at the end of this service the deficit becomes zero. Third, observe that in any other complete services for flow $j$ (if any), it sends $n_j $ packets of length $\lmax_j$ followed by a packet of length $\lmod_j$. Hence, in the first complete service of flow $j$ after time $s$, flow $j$ is served by $\lp Q_j + \mdelta_j \rp$; and in every other complete services for flow $j$, it is served by $Q_j$. (the red parts in Fig.~\ref{fig:tight})
	
	We then analyze the  service received by flow $i$. First, it should wait for all other $j\neq i$ to use their first service after time $s$, and then flow $i$ sends $(n_i -1)$ packets of length $\lmax_i$ followed by a packet of length $\lmodp_i$; this is because at the end of serving these packets the deficits of flow $i$ becomes $\mdelta_i$ and the head-of-the-line packet has a length $\lmax_i > \mdelta_i$. Second, observe that in any other complete services for flow $i$ (if any),  it sends $n_j $ packets of length $\lmax_j$ followed by a packet of length $\lmod_j$. Hence, in the first complete service of flow $i$, which happens after time $s$, flow $i$ is served by $\lp Q_i -  \mdelta_i \rp$; and in every other complete services for flow $i$, it is served by $Q_i$. (the green parts in Fig.~\ref{fig:tight})
	
	Then, by combining the last two paragraphs, observe that (Fig.~\ref{fig:tight})
	\begin{itemize}
		\item Flow $i$ is served in $(s , \tau_p]$ by $\lb pQ_i - \mdelta_i \rb^+$.
		\item Every other flow $j$ has $p+1$ complete services in $(s , \tau_p]$, and they are served by $ \lp (p+1)Q_j + \mdelta_j \rp $.
	\end{itemize}
	It follows that
	\begin{equation}\label{eq:phiTight1}
		D_j(\tau_p) - D_j(s) = \phi_{i,j} \lp D_i(\tau_p) - D_i(s) \rp
	\end{equation}
	Then, there are two cases for $s + \tau $: whether $s + \tau < \tau_p$ or $s + \tau \geq \tau_p$. In the former case, $s + \tau$ is not in the middle of a service for flow of interest and hence $D_i(s + \tau) = D_i(\tau_p)$; in the latter case, $s + \tau$ is in the middle of a service for flow $i$ and $D_i(s + \tau) - D_i(\tau_p) < Q_i$; thus observe that  $ \phi_{i,j} \lp D_i(s + \tau) - D_i(s) \rp =  \phi_{i,j} \lp D_i( \tau_p) - D_i(s) \rp$. Hence, in both cases \eqref{eq:phiTight} holds.
	
	Then, If we apply $\psi^{\downarrow}_i$ to both sides of \eqref{eqn:3}, the right-hand side is equal to $\beta_i^0(\tau)$. Thereby, we should prove
	\begin{equation} \label{eqn:2}
		\psi^{\downarrow}_i\left(D(s + \tau) - D(s) \right) = D_i(s +  \tau) - D_i(s)
	\end{equation}
	Let $y=D(s + \tau) - D(s) $ and $x=D_i(s +  \tau) - D_i(s)$. Our goal is now to prove that
	\begin{equation} \label{eqn:2a}
		\psi^{\downarrow}_i\left(y\right) =x
	\end{equation}
	Again consider the two cases for $s + \tau$.
	
	\textbf{Case 1:} $s + \tau < \tau_{p}$
	
	In this case the scheduler is not serving flow $i$ in $[\tau_p, s + \tau]$; thus $D_i(s + \tau) = D_i(\tau_p)$. 	Combining it with \eqref{eq:phiTight1}, it follows that
	\begin{equation}
		\begin{aligned}
			&\psi_i(x) = x + \underbrace{\sum_{j,j\neq i} \phi_{i,j} \lp x\rp }_{\sum_{j,j\neq i} \lp D_j(\tau_p) - D_j(s) \rp }\\
			& y = x + \sum_{j,j\neq i} \lp D_j(s + \tau) - D_j(s) \rp 
		\end{aligned}
	\end{equation}
	and thus
	\begin{equation}
		\label{eq:jhkds}
		\psi_i(x) \geq y
	\end{equation}
	Let $x-\lmod_i<x'<x$; flow $i$'s output becomes equal to $x'$ during the emission of the last packet thus
	\begin{equation}
		\psi_i(x') = x' + \sum_{j,j\neq i} \lp D_j(\tau_{p-1}) - D_j(s) \rp \\
	\end{equation}
	Hence
	\begin{equation}
		\label{eq:jhkdt}
		\forall x'\in (x-\lmod_i,x),\psi_i(x') < y
	\end{equation}
	Combining \eqref{eq:jhkds} and \eqref{eq:jhkdt} with Lemma~\ref{lsi:lem} shows \eqref{eqn:2a}.
	
	\textbf{Case 2:} $s + \tau \geq \tau_p$
	
	In this case the scheduler is serving flow $i$ in $[\tau_{p} , s + \tau]$.		For every other flow $j$, we have $D_j(s + \tau) = D_j( \tau_p)$. Hence, combining it with \eqref{eq:phiTight},
	\begin{equation}\label{eq:kjlasdf}
		\psi_i(x)=
		D_i(s +  \tau) - D_i(s) +  \sum_{j,j \neq i}\underbrace{ \phi_{i,j}\lp D_i(s +  \tau) - D_i(s)  \rp}_{D_j(s +  \tau) - D_j(s)} =y
	\end{equation}
	As with case 1, for any $x'\in (x - \lmod_i,x)$, we have $\psi_i(x)<y$, which shows \eqref{eqn:2a}.
	
	%
	%	\begin{equation}
	%	\begin{aligned}
	%	&s + \tau \geq \tau_{\sigma(p)} \Longrightarrow\\
	%	&\psi^{\downarrow}_i\left(R^*(s + \tau) - R^*(s) \right) =\\
	%	& \psi^{\downarrow}_i\left(\underbrace{(R_i^*(s +  \tau) - R_i^*(s)) +  \sum_{j \neq i} \phi_{i,j}(p)\lmax_j}_{\psi_i(R_i^*(s +  \tau) - R_i^*(s))} \right) \\
	%	\end{aligned}
	%	\end{equation}
	%	Therefore, \eqref{eqn:2} is proven.
	%	
	%	Now, let put the above results together to prove the theorem. Specifically, by combining \eqref{eqn:2} and \eqref{eqn:3},
	%	\begin{equation}
	%	\begin{aligned}
	%	&\exists s \geq 0.\, (s,s+\tau] \, i\text{ is backlogged}\\
	%	&\text{and } \underbrace{\psi^{\downarrow}_i\left(\beta(\tau)\right)}_{\beta_i(\tau)} =\ R_i^*(s +  \tau) - R_i^*(s)
	%	\end{aligned}
	%	\end{equation}
	This shows that \eqref{eqn;tight} holds. It remains to show that the system constraints are satisfied.
	
	\textbf{Step 3: Verifying the Trajectory}
	
	%	As there is no constraint on the input, we consider huge bursts for each flow, enough to make sure that all queues $j\neq i$ are never empty in $[0 , s + \tau]$ and queue $i$ is never empty in $[s, s+\tau]$.
	We need to verify that the service offered to the aggregate satisfies the strict service curve constraint.
	Our trajectory has one busy period, starting at time $0$ and ending at some time $T_{\max}\geq \tau$. We need to verify that
	\begin{equation}\label{eq-ljkhsdf}
		\forall t_1, t_2\in [0,T_{\max}] \mwith t_1<t_2, D(t_2)-D(t_1)\geq \beta(t_2-t_1)
	\end{equation}
	
	\textbf{Case 1:} $t_2<s$
	
	Then $D(t_2)-D(t_1)=K(t_2-t_1)$. Observe that, by the Lipschitz continuity condition on $\beta$, for all $t\geq 0$, $\beta(t)=\beta(t)-\beta(0)=\beta(t)\leq Kt$ thus $K(t_2-t_1)\geq \beta(t_2-t_1)$.
	
	\textbf{Case 2:} $t_1<s\leq t_2$
	
	Then $D(t_2)-D(t_1)=\beta(t_2-s)+K(s-t_1)$. By the Lipschitz continuity condition:
	
	\begin{equation}\label{eq-jhkasd}
		\beta(t_2-t_1)-\beta(t_2-s)\leq K(s-t_1)
	\end{equation}
	thus $D(t_2)-D(t_1)\geq\beta(t_2-t_1)$.
	
	\textbf{Case 3:} $s\leq t_1< t_2$
	
	Then $D(t_2)-D(t_1)=\beta(t_2)-\beta(t_1)\geq \beta(t_2-t_1)$ because $\beta$ is super-additive.
	
\end{proof}

\subsection{Proof of Theorem \ref{thm:closeDelay}} \label{sec:proofDelay}
We first prove the following:
\begin{equation} \label{eq:delayConcave}
h \lp \alpha_i, \beta_i^0 \rp = T + \sup_{t \geq 0}\{\frac{1}{c}\psi_i\lp \alpha_i(t)\rp - t\}
\end{equation}
First, as in \cite[Prop. 3.1.1]{le_boudec_network_2001},
\begin{equation}
	h \lp \alpha_i, \beta_i^0 \rp = \sup_{t \geq 0} \{ {\beta_i^0}^\downarrow \lp\alpha_i(t) \rp -t \}
\end{equation}
Second, we show that
\begin{equation}\label{eq:delayWithgamma}
	{\beta_i^0}^\downarrow = T + \frac{1}{c}\gamma_i^\downarrow
\end{equation}
As in \cite[Prop. 7]{boyer_pseudo_inverse}, for two functions $f,g \in \mathscr{F}$ where  $f$ is right-continuous, we have $(f \circ g)^\downarrow = g^\downarrow \circ f^\downarrow$; as $\beta_i^0 = \gamma_i \circ \beta$ and $\gamma_i$ is continuous , it follows that 
\begin{equation}
	{\beta_i^0}^\downarrow  = \beta^\downarrow  \circ \gamma_i^\downarrow 
\end{equation}
Observe that $\beta^\downarrow (x)= \frac{1}{c}x + T$ for $x >0$ and $\beta^\downarrow (x) = 0$ for $x =0$. Combine this with the above equation to conclude \eqref{eq:delayWithgamma}.

Third, observe that $ \gamma_i^\downarrow$ is the left-continuous version of $\psi_i$, and let us denote it by $ \gamma_i^\downarrow = \psi^L_i$. It follows that
\begin{equation}
	h \lp \alpha_i, \beta_i^0 \rp = T + \sup_{t \geq 0}\{\frac{1}{c}\psi^L_i\lp \alpha_i(t)\rp - t\}
\end{equation}
Observe that $\sup_{t \geq 0}\{ \psi^L_i\lp \alpha_i(t)\rp - t \} = \sup_{t \geq 0}\{\psi_i\lp \alpha_i(t)\rp - t\}$ as $\psi_i$ is right-continuous. Therefore, combining it with the above equation, \eqref{eq:delayConcave} is shown.

Let us prove item 1), i.e., assuming $\alpha_i = \gamma_{r_i,b_i}$. Define $H$ as 
\begin{equation}\label{eq:H}
	H(t) \isdef T +\frac{1}{c}\psi_i\lp \alpha_i(t)\rp - t
\end{equation}
Using \eqref{eq:delayConcave}, we have $h \lp \alpha_i, \beta_i^0 \rp = \sup_{t \geq 0} \{ H(t) \}$. By plugging $\alpha_i$ and $\psi_i$ in $H$, we have $H(t)=T + \frac{\sum_{j \neq i}\lp Q_j + \mdelta_j \rp}{c} + \frac{1}{c}\lp r_it + b + \lfloor\frac{r_it + b + \mdelta_i}{Q_i} \rfloor\sum_{j \neq i}Q_j  \rp - t$
%\begin{equation} \label{eq:delayTB}
%	\sup_{t \geq 0}\{\frac{1}{c}\lp r_it + b + \lfloor\frac{r_it + b + \mdelta_i}{Q_i} \rfloor\sum_{j \neq i}Q_j  \rp - t\}
%\end{equation}
Then, as $r_i \leq \frac{Q_i}{Q_\tot}c$, observe that function $H$ is linearly decreasing between $0 \leq t < \tau_i$ with a jump at $t = \tau_i$; also, $H(t) \leq H(\tau_i)$ for all $t \geq \tau_i$ (see the above panel of Fig.\ref{fig:H}). Hence, the supremum of $H$ is obtained either at $t =0$ or $t = \tau_i$. This concludes item 1).
 
 Item 2) can be shown in a similar manner, however, function $H$ is non-decreasing between $0 \leq t < \tau_i$ (see the bottom panel of Fig.\ref{fig:H}).
 
 The same proof holds for item 3).

\begin{figure}[htbp]
	\centering
	%	\title={.}
	\includegraphics[width=\linewidth]{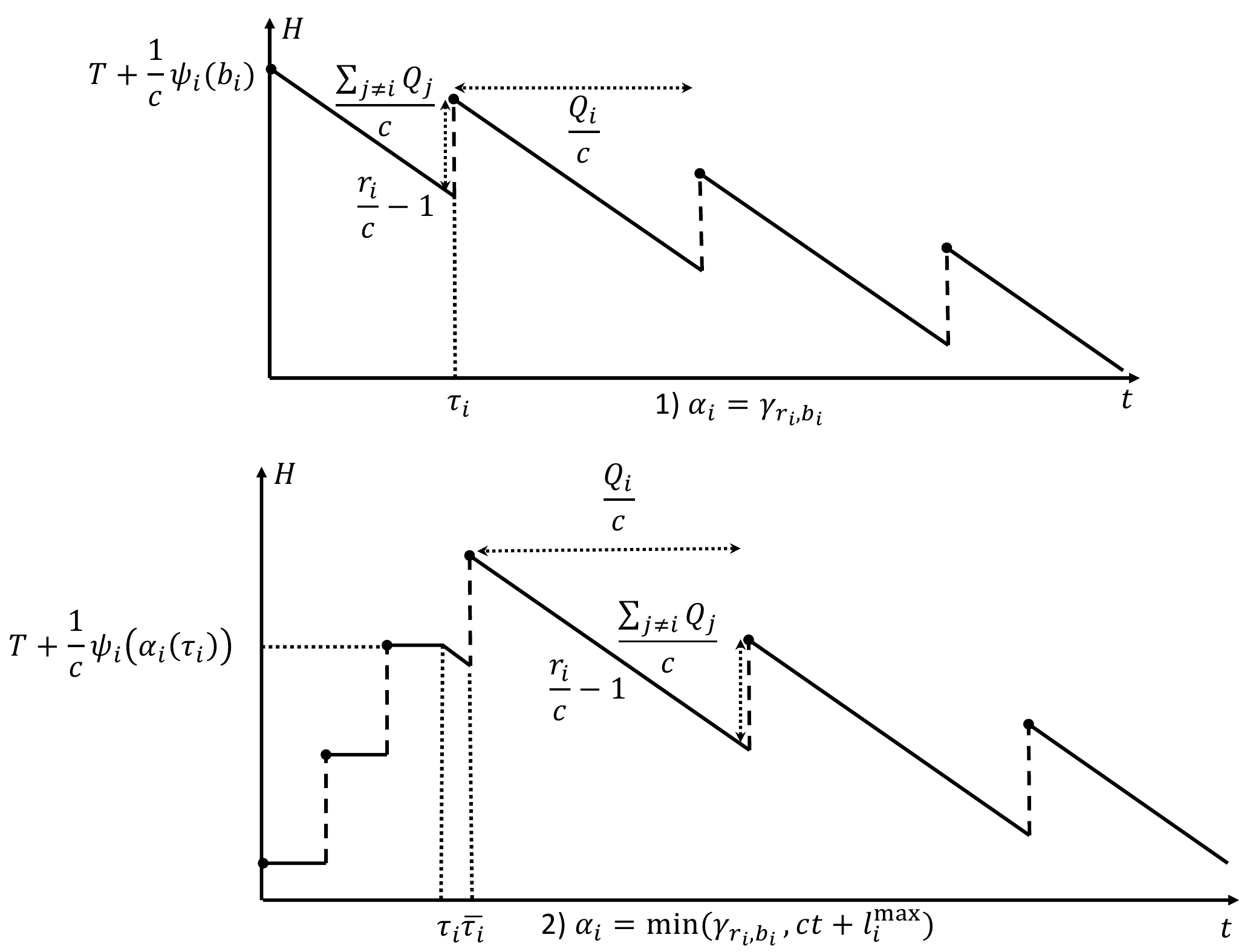}
	\caption{\sffamily \small Illustration of function $H$ defined in \eqref{eq:H}}
	\label{fig:H}
\end{figure}

\subsection{Proof of Theorem \ref{thm:drrServiceApp}} \label{sec:proofApp}
%Here, we prove \thref{thm:drrServiceApp}.
%\begin{proof}[Proof of \thref{thm:drrServiceApp}]
%To show that $\beta_i'$ in Theorem \ref{thm:drrServiceApp} is a strict service curve for flow $i$, we proceed the proof by showing that $\beta_i'$ is upper bounded by $\beta_i$, given in Theorem \ref{thm:drrService}, i.e., $\beta_i' \leq \beta_i$; then, as $\beta_i$ is a strict service curve for flow $i$, any lower bounds of $\beta_i$ is also a strict service curve for flow $i$. It follows that $\beta_i'$ is a strict service curve for flow $i$.

%Let $\psi_i'$ be defined as in \eqref{eq:psi} where functions $\phi_{i,j}$ are replaced by $\phi_{i,j}'$.
First observe that, since $\phi'_{i,j}\in \mathscr{F}$, it follows that $\beta_i'\in  \mathscr{F}$. Second, as for every $j \neq i$, $\phi_{i,j} \leq \phi_{i,j}^'$, we have $\psi_i \leq \psi_i^'$. In \cite[Sec. 10.1]{liebeherr2017duality}, it is shown that
$%\begin{equation} \label{lem:lsi2}
\forall f, g \in \mathscr{F}, f \geq g \Rightarrow f^{\downarrow} \leq g^{\downarrow}
$. %\end{equation}
Applying this with $f = \psi_i'$ and $g = \psi_i$ gives that $\psi_i^{'\downarrow} \leq \psi_i^{\downarrow}$. It follows $\beta_i'(t) = \psi_i^{'\downarrow} \lp \beta(t) \rp \leq \psi_i^{\downarrow} \lp \beta(t) \rp = \beta_i(t)$. The conclusion follows from the fact that any lower bound in $ \mathscr{F}$ of a strict service curve is a strict service curve.
%\end{proof}
\qed

\subsection{Proof of Theorem \ref{thm:optDrrServiceInd}} \label{sec:proofOptInd}
%The idea of the proof is as follows. First we give a lemma on the operation of DRR, which follows from some of the results in the proof of Theorem~\ref{thm:drrService}. Then, we study the mapping that transforms the collection of service curves in one iteration of the theorem. We show that this mapping transforms a collection of strict service curves into another collection of strict service curves and that it is isotone. Last, we show that the output of the first iteration is an upper bound of the input, which, using isotony, concludes the proof.

 First we give a lemma on the operation of DRR, which follows from some of the results in the proof of Theorem~\ref{thm:drrService}. 
\begin{lemma}\label{lem:serviceJ}
	Assume that the output of flows $j \in \bar{J}$ are constrained by arrival curves $\alpha_j^* \in \mathscr{F}$, i.e., $D_j$ is constrained by $\alpha_j^* $. Then, $\gamma^J_i \circ \lb \beta - \sum_{j \in \bar{J}}\alpha_j^* \rb^+_{\uparrow}$ is a strict service curve for flow $i$.
\end{lemma}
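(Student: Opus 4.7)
The proof will closely follow the strategy of Theorem~\ref{thm:drrService}, with one key modification: for flows $j \in \bar{J}$, instead of bounding their interference generically by $\phi_{i,j}$, I will use the output arrival curve constraint $\alpha_j^*$. The plan is to fix a backlogged interval $(s,t]$ for flow $i$ and, crucially, exploit the fact that for every $u \in [0,t-s]$ the sub-interval $(s, s+u]$ is also backlogged for flow~$i$, since the backlog is positive throughout $(s,t]$. This sub-backlog property is what will ultimately yield the non-decreasing closure in the statement.

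First I would work on a fixed sub-interval $(s,s+u]$. Applying the strict service curve $\beta$ to the aggregate on this backlogged sub-interval gives $\beta(u) \leq \sum_{k=1}^n \bigl(D_k(s+u)-D_k(s)\bigr)$. I then split the sum over $k \neq i$ into the indices in $J$ and those in $\bar{J}$. For $j \in J$, I reuse the interference bound $D_j(s+u)-D_j(s) \leq \phi_{i,j}\bigl(D_i(s+u)-D_i(s)\bigr)$ established in the proof of Theorem~\ref{thm:drrService} (which only used the DRR mechanics and holds for any flow other than~$i$). For $j \in \bar{J}$, I use the hypothesis that $\alpha_j^*$ bounds $D_j$, hence $D_j(s+u)-D_j(s) \leq \alpha_j^*(u)$. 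Rearranging yields
\begin{equation}
\beta(u) - \sum_{j \in \bar{J}} \alpha_j^*(u) \;\leq\; \psi_i^J\bigl(D_i(s+u)-D_i(s)\bigr),
\end{equation}
with $\psi_i^J$ as in \eqref{eq:psiJ}. Since the right-hand side is non-negative, the left-hand side can be replaced by its positive part $[\,\cdot\,]^+$.

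Next, I would invert $\psi_i^J$ via \eqref{lem:lsi} to obtain, for every $u \in [0, t-s]$,
\begin{equation}
D_i(s+u) - D_i(s) \;\geq\; \gamma_i^J\!\left(\bigl[\beta(u) - \sum_{j \in \bar{J}} \alpha_j^*(u)\bigr]^+\right).
\end{equation}
Because $D_i$ is wide-sense increasing, the left-hand side is upper bounded by $D_i(t)-D_i(s)$, so this inequality gives $D_i(t)-D_i(s) \geq \gamma_i^J\bigl([\beta(u)-\sum_{j\in\bar{J}}\alpha_j^*(u)]^+\bigr)$ uniformly in $u \leq t-s$.

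The last step is to take the supremum over $u \in [0, t-s]$ to recover the non-decreasing closure inside $\gamma_i^J$. Here the main technical obstacle lies: I need to pull the supremum through $\gamma_i^J$. By construction (see Fig.~\ref{fig:psi_inverse} and \eqref{eq:psiinv}), the function $\gamma_i^J$ is the lower pseudo-inverse of the piecewise-affine staircase $\psi_i^J$, obtained by smoothing its discontinuities into unit-slope segments; in particular $\gamma_i^J$ is continuous and non-decreasing. Continuity together with the fact that the sup of the right-continuous function $u \mapsto [\beta(u)-\sum_{j\in\bar{J}}\alpha_j^*(u)]^+$ over the compact interval $[0,t-s]$ can be attained (or approached) by a sequence $u_n$ lets us conclude that $\gamma_i^J\bigl(\sup_{u\leq t-s} [\beta(u)-\sum_{j\in\bar{J}}\alpha_j^*(u)]^+\bigr) \leq \sup_{u\leq t-s} \gamma_i^J\bigl([\beta(u)-\sum_{j\in\bar{J}}\alpha_j^*(u)]^+\bigr)$, which combined with the previous display yields $D_i(t)-D_i(s) \geq \gamma_i^J\!\circ\![\beta - \sum_{j \in \bar{J}}\alpha_j^*]^+_{\uparrow}(t-s)$. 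This is the desired strict service curve inequality on the arbitrary backlogged interval $(s,t]$, completing the proof.
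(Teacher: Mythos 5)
Your proof is correct and follows the same overall strategy as the paper's: apply the aggregate strict service curve to a backlogged (sub-)interval, bound the interfering output of $j\in J$ by $\phi_{i,j}$ and of $j\in\bar J$ by $\alpha_j^*$, isolate $\psi_i^J(D_i(\cdot)-D_i(\cdot))$, and invert with the lower pseudo-inverse. The only difference is in how the non-decreasing closure $[\cdot]^+_{\uparrow}$ is produced at the end. The paper first passes to the closure \emph{before} inverting: it notes that $\psi_i^J\bigl(D_i(s+u)-D_i(s)\bigr)$ is non-decreasing in $u$ and invokes \cite[Lemma 3.1]{bouillard_deterministic_2018} to upgrade the pointwise inequality to one against $[\beta-\sum_{j\in\bar J}\alpha_j^*]^+_{\uparrow}$, and only then applies $\psi_i^{J\downarrow}=\gamma_i^J$ once. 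You instead invert pointwise in $u$, bound $D_i(s+u)-D_i(s)\leq D_i(t)-D_i(s)$, take the supremum over $u$, and commute the supremum through $\gamma_i^J$ using its continuity. Your commuting step $\gamma_i^J(\sup_u y_u)\leq \sup_u\gamma_i^J(y_u)$ is the nontrivial direction and does rely on (at least left-) continuity of $\gamma_i^J$; this is indeed available because the $\lambda_1\otimes$ term in \eqref{eq:gammaJ} smooths every discontinuity, so the argument goes through. The paper's ordering is marginally cleaner in that it needs only monotonicity of $\psi_i^J$ rather than continuity of its pseudo-inverse, but both routes are sound and lead to the same conclusion.
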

\begin{proof}
		Consider a backlogged period $(s,t]$ of flow of interest~$i$.
	As the interval $(s,t]$ is a backlogged period, and since $\beta$ is a strict service curve for the aggregate of flows, we have
	\begin{equation}
		\beta(t-s) \leq (D_i(t) - D_i(s)) + \sum_{j \neq i} \lp D_j(t) - D_j(s) \rp
	\end{equation}
	For $j \in J$, upper bound $\lp D_j(t) - D_j(s) \rp$ as in \eqref{eq:jksa2}, and for $j \in \bar{J}$, upper bound $\lp D_j(t) - D_j(s) \rp \leq \alpha_j^*(t-s)  $, as $\alpha_j^*$ is an arrival curve for $D_j$, to obtain
	\begin{equation}
		\begin{aligned}
				&\beta(t-s) - \sum_{j \in \bar{J}}\alpha_j^*(t-s) \\
				&\leq \underbrace{(D_i(t) - D_i(s))  + \sum_{j \in J}  \phi_{i,j}\lp D_i(t) - D_i(s) \rp }_{\psi_i^J \lp D_i(t) - D_i(s) \rp }
		\end{aligned}
	\end{equation}
	As $\psi_i^J$ is a non-negative function, it follows that 
		\begin{equation}
		\begin{aligned}
			\lb \beta - \sum_{j \in \bar{J}}\alpha_j^* \rb^+(t-s) \leq \psi_i^J \lp D_i(t) - D_i(s) \rp  
		\end{aligned}
	\end{equation}
	As $\psi_i^J$ is an increasing function, it follows that the right-hand side is an increasing function of $(t-s)$.
%\jylb{with respect to what ? $t$ ?  }
Then, by applying \cite[Lemma 3.1]{bouillard_deterministic_2018}, it follows that the inequality holds for the non-decreasing closure of the left-hand side (with respect to $t-s$), namely
		\begin{equation}\label{eq:totalserviceJ}
	\begin{aligned}
		 \lb \beta - \sum_{j \in \bar{J}}\alpha_j^* \rb^+_{\uparrow}(t-s) \leq \psi_i^J \lp D_i(t) - D_i(s) \rp  
	\end{aligned}
\end{equation}

	Then, we use the lower pseudo-inverse technique to invert \eqref{eq:totalserviceJ} as in \eqref{lem:lsi}, 
\begin{equation} 
	\begin{aligned}
		D_i(t) - D_i(s) \geq \psi_i^{J\downarrow} \lp   \lb \beta - \sum_{j \in \bar{J}}\alpha_j^*  \rb^+_{\uparrow} (t-s) \rp
	\end{aligned}
\end{equation}
Hence, the right-hand side is a strict service curve for flow $i$. Observe that $\gamma_i^J = \psi_i^{J\downarrow}$. %Last, any strict service curve can be replaced by its super-additive closure.
\end{proof}

	As $\oldservice_j$ is a strict service curve for flow $j$, it follows that $\alpha_j \oslash \oldservice_j$ is an arrival curve for the output of flow $j$. Then, for every $J \subseteq N_i$, apply Lemma \ref{lem:serviceJ} with $\alpha_j^* = \alpha_j \oslash \oldservice_j$ for $j \in \bar{J}$ and conclude that $\gamma^J_i \circ \lb \beta - \sum_{j \in \bar{J}}\lp \alpha_j \oslash \oldservice_j \rp \rb^+_{\uparrow}$ is a strict service curve for flow $i$. Lastly, the maximum over all $J$ is also a strict service curve for flow $i$.

\qed
%	Then, given this and as $\beta_i^1$ is super-additive, apply Lemma \ref{lem:fgSuperadd} with $f$ equal to the left-hand side and $g$ equal to the right-hand side to conclude that $\gamma_i \lp \beta(t) \rp^{\bar{*}} =\beta_i^0 \leq  \beta_i^1$.\\
 \subsection{Proof of Corollary \ref{thm:optDrrService}}\label{sec:proofoptDRR}
 We proceed the proof by showing that for every flow $i$, $\newserviceBar_i \leq \newservice_i$. 	 Fix $i \leq n$ and $t \geq 0$. We want to show that $\newserviceBar_i(t) \leq \newservice_i(t)$. Let $J^*$ be $\{j \in N_i~|~ \phi_{i,j}\lp\oldservice_i(t)\rp  < \lp \alpha_j \oslash \oldservice_j \rp (t)\}$. Then, observe that
 \begin{equation}
 	\newservice_i(t) \geq \gamma_i^{J^*} \lp \beta(t) - \sum_{j \in \bar{J}^*}\lp \alpha_j \oslash \beta_j \rp (t) \rp  
 \end{equation}
 Apply $\psi_i^{J^*}$ to the both side and observe that
 \begin{equation} \label{eq:eq1}
 	\psi_i^{J^*} \lp \newservice_i(t) \rp \geq  \beta(t) - \sum_{j \in \bar{J}^*}\lp \alpha_j \oslash \newservice_i \rp (t) 
 \end{equation}
 as $\newservice_i \geq \oldservice_i(t)$ and $\phi_{i,j}$ is increasing, it follows that 
 \begin{equation} \label{eq:eq2}
 	\sum_{j \in \bar{J}^*} \phi_{i,j} \lp \newservice_i(t)\rp  \geq 	\sum_{j \in \bar{J}^*} \phi_{i,j} \lp \oldservice_i(t)\rp  
 \end{equation}
 Then, sum the both side of \eqref{eq:eq1} and \eqref{eq:eq2} to obtain
 \begin{equation} 
 	\psi_i \lp \newservice_i(t) \rp \geq  \beta(t) + \underbrace{\sum_{j \in \bar{J}^*} \lp \phi_{i,j} \lp \beta_i(t)\rp - \lp \alpha_j \oslash \oldservice_j \rp (t) \rp}_{\sum_{j \neq i} \lb \phi_{i,j} \lp \beta_i(t)\rp - \lp \alpha_j \oslash \beta_j \rp (t) \rb^+}
 \end{equation}
 Then, by applying \cite[Lemma 3.1]{bouillard_deterministic_2018}, it follows that the above inequality holds for the non-decreasing closure of the left-hand side. Thus,
 \begin{equation}
 	\psi_i \lp \newservice_i(t) \rp \geq  \lp \beta +\sum_{j \neq i} \lb \phi_{i,j} \circ \beta_i - \lp \alpha_j \oslash \oldservice_j \rp \rb^+ \rp_{\uparrow}(t)
 \end{equation}
 Lastly, we use the lower pseudo-inverse technique to invert as in \eqref{lem:lsi} and as $\psi_i^{\downarrow} = \gamma_i$
 \begin{equation}
 \newservice_i(t)  \geq  \underbrace{\gamma_i \circ \lp \beta +\sum_{j \neq i} \lb \phi_{i,j} \circ \beta_i - \lp \alpha_j \oslash \oldservice_j \rp \rb^+ \rp_{\uparrow}(t)}_{\newserviceBar_i(t)}
 \end{equation}
which concludes the proof. \qed

\end{document}
	\pagenumbering{arabic}